\newcommand{\rank}{\textrm{rank}}
\newcommand{\card}[1]{|{#1}|}
\DeclareMathAlphabet\mathbfcal{OMS}{cmsy}{b}{n}
\newcommand{\e}{\mathcal{E}}
\newtheorem{theorem}{Theorem}
\newtheorem{lemma}{Lemma}
\newtheorem{proposition}{Proposition}
\newtheorem{corollary}{Corollary}
\newtheorem{definition}{Definition}
\newtheorem{remark}{Remark}
\newenvironment{example}
 {\pushQED{\qed}\examplex }
  {\popQED\endexamplex}
  {}
\begin{document}
%
\title{Capacity Results for Multicasting Nested Message Sets over Combination Networks}
%
%
%

\author{Shirin~Saeedi~Bidokhti,
         Vinod~M.~Prabhakaran, Suhas~N.~Diggavi
\thanks{S. Saeedi Bidokhti was with the School of Computer and Communication Sciences, Ecole Polytechnique F\'ed\'eral de Lausanne. She is now with the Department of Electrical Engineering department for Communications Engineering, Technische Universit\"{a}t M\"{u}nchen
 (e-mail: shirin.saeedi@tum.de).}
\thanks{V. Prabhakaran is with the School of Technology and Computer Science,
Tata Institute of Fundamental Research
(e-mail: vinodmp@tifr.res.in).}
\thanks{S. Diggavi is with the Department of Electrical Engineering, UCLA, USA
e-mail: (suhas@ee.ucla.edu).}
\thanks{
The work of S. {Saeedi Bidokhti} was partially supported by the SNSF fellowship no.146617. The work of Vinod M. Prabhakaran was partially supported by
a Ramanujan Fellowship from the Department of Science and Technology,
Government of India. The work of S. Diggavi was supported in part by NSF awards 1314937 and 
1321120. }
\thanks{The material in this paper
was presented in parts at the 2009 IEEE Information Theory Workshop, Taormina, Italy,  2012 IEEE Information Theory Workshop, Lausanne, Switzerland, and the 2013 IEEE International Symposium on Information Theory, Istanbul, Turkey.}
}

\maketitle

\begin{abstract}
The problem of multicasting two nested messages is studied over a class of networks known as combination networks.  A source multicasts two messages, a common and a private message, to several receivers. A subset of the receivers (called the public receivers) only demand the common message and the rest of the receivers (called the private receivers) demand both the common and the private message. Three encoding schemes are discussed that employ linear superposition coding and their optimality is proved in special cases. 
The standard linear superposition scheme is shown to be optimal for networks with two public receivers and any number of private receivers. When the number of public receivers increases, this scheme stops being optimal. 
Two improvements are discussed: one using pre-encoding at the source, and one using a block Markov encoding scheme. 
The rate-regions that are achieved by the two schemes are characterized in terms of feasibility problems.   Both inner-bounds are shown to be the capacity region for networks with three (or fewer) public and any number of private receivers. 
Although the inner bounds are not comparable in general, it is shown through an example that the  region achieved by the block Markov encoding scheme  may strictly include the region achieved by the pre-encoding/linear superposition scheme.
Optimality results are founded on the general framework of  Balister and Bollob\'{a}s (2012) for sub-modularity of the entropy function. An equivalent graphical representation is introduced and a lemma is proved that might be of independent interest.

{Motivated by the connections between combination networks and broadcast channels, a new block Markov encoding scheme is proposed for broadcast channels with two nested messages. The rate-region that is obtained includes the previously known rate-regions. It remains open whether this inclusion is strict.}

\end{abstract}

\begin{IEEEkeywords}
Network Coding, Combination Networks, Broadcast Channels, Superposition Coding, Linear Coding, Block Markov Encoding
\end{IEEEkeywords}

%
\IEEEpeerreviewmaketitle

\allowdisplaybreaks

\section{Introduction}
The problem of communicating common and individual messages over general networks has been unresolved over the past several decades, even in the two extreme cases of single-hop broadcast channels and multi-hop wireline networks.
Special cases have been studied where the capacity region is fully characterized (see \cite{Cover95} and the references therein, and \cite{AhlswedeCaiLiYeung00, WeingartenSteinbergShamai06, AvestimehrDiggaviTse11, GengNair14}). Inner and outer bounds on the capacity region are derived in \cite{Marton79, GohariElGamalAnantheram10, NairElGamal07, LiangKramerPoor11,  ChanGrant07,Yeung08, SongYeungCai03}. Moreover, new encoding and decoding techniques are developed such as superposition coding \cite{Cover72, Bergmans73}, Marton's coding \cite{Marton77, Pinsker78, KornerMarton77}, network coding \cite{AhlswedeCaiLiYeung00, LiYeungCai03, KoetterMedard03}, and joint unique and non-unique decoding \cite{HanKobayashi81, ChongMotaniGargElGamal08, NairElGamal09, SaeediPrabhakaranDiggavi12}.

Surprisingly, the problem of broadcasting nested (degraded) message sets has been completely resolved for two users. The problem was introduced and investigated for two-user broadcast channels in \cite{KornerMarton77} and it was shown that superposition coding was optimal. The problem has also been investigated for wired networks with two users \cite{RamamoorthyWesel09,NgaiYeung04a,ErezFeder03} where a scheme consisting of  routing and random network coding turns out to be rate-optimal. This might suggest that the nested structure of the messages makes the problem easier in nature. Unfortunately, however, the problem has remained open for networks with more than two receivers and only some special cases are understood \cite{NairElGamal09, DiggaviTse06, BoradeZhengTrott07, PrabhakaranDiggaviTse07, SaeediDiggaviFragouliPrabhakaran09}. The state of the art is not favourable for wired networks either.  Although extensions of the joint routing and network coding scheme in \cite{RamamoorthyWesel09}  are optimal for special classes of three-receiver networks (e.g., in \cite{GheorghiuSaeediFragouliToledo11}), they are suboptimal in general depending on the structure of the network \cite[Chapter 5]{Saeedi12}, \cite{GLT}.  It was  recently shown in \cite{ChanGrant12} that the problem of multicasting two nested message sets over general wired networks is as hard as the general network coding problem. 

In this paper, we study nested message set broadcasting over a class of wired networks known as combination networks \cite{NgaiYeung04}. These networks also form a resource-based model for broadcast channels and are  a special class of linear deterministic broadcast channels that were studied in \cite{PrabhakaranDiggaviTse07, SaeediDiggaviFragouliPrabhakaran09}. Lying at the intersection of  multi-hop wired networks and single-hop broadcast channels, combination networks are an interesting class of networks to build up intuition and understanding, and develop new encoding schemes applicable to both sets of problems. 

We study the problem of multicasting two nested message sets (a common message and a private message) to multiple users.  A subset of the users (public receivers) only demand the common message and the rest of the users (private receivers) demand both the common message and the private message. The term private does not imply any security criteria in 
this paper. 


Combination networks turn out to be an interesting class of networks that allow us to discuss new encoding schemes and obtain new capacity results. We discuss three encoding schemes and prove their optimality  in several cases (depending on the number of public receivers, irrespective of the number of private receivers). In particular, we propose a block Markov encoding scheme that outperforms schemes based on rate splitting and linear superposition coding. Our inner bounds are expressed in terms of feasibility problems and are easy to calculate.  To illustrate the implications of our approach over broadcast channels, we generlize our results and  propose a block Markov encoding scheme for  broadcasting two nested message sets over general broadcast channels (with multiple 
public and private receivers). The rate-region that is obtained includes  previously known rate-regions.

\subsection{Communication Setup}

A combination network is a three-layer directed network with one source and multiple destinations.
It consists of a source node in the first layer, $d$ intermediate nodes in the second layer and $K$ destination nodes in the third layer. 
The source is connected to all intermediate nodes, and each intermediate node is connected to a subset of the destinations. We refer to the outgoing edges of the source as the \textit{resources} of the combination network, see Fig.~\ref{blockdiagram}. We assume that each edge in this network carries one symbol from a finite field $\mathbb{F}_q$. We express all rates in symbols per channel use ($\log_2q$ bits per channel use) and thus all edges are of unit capacity.

The communication setup is shown in Fig.~\ref{blockdiagram}.
A source multicasts a common message $W_1$ of $nR_1$ bits and a private message $W_2$ of $nR_2$ bits. $W_1$ and $W_2$ are independent.
The common message is to be reliably decoded at all destinations, and the private message is to be reliably 
decoded at a subset of the destinations. We refer to those destinations who demand both messages as  \textit{private receivers} and to those  who demand only the common message as 
 \textit{public receivers}. 
We denote the number of  public receivers by $m$ and assume, without  loss of generality,   that they are indexed  $1,\ldots,m$.  The set of all public receivers is denoted by 
$\mathcal{I}_1=\{1,2,\ldots,m\}$ and the set of all private receivers is denoted by $\mathcal{I}_2=\{m+1,\ldots,K\}$.
\begin{figure}
\centering
\begin{tikzpicture}[scale=2]
\tikzstyle{every node}=[draw,thick,shape=circle,minimum size=.05cm,font=\small]; 
\path (1.3,2.9cm) node[draw=none] (w) {$W_1,W_2$}; 
\path (1,2.5cm) node[shape=rectangle,minimum height=.75cm] (v0) {Encoder}; 
\path (-1,1.75)  node (v2) {};
\path(.25,1.75)  node (v3) {}; 
\path(1.75,1.75)  node (v6) {}; 
\path (3,1.75) node (v4) {}; 
\path (-.5,0) node[shape=rectangle,minimum height=.75cm] (d1) {Decoder $1$};
\path (1,0) node[shape=rectangle,minimum height=.75cm] (d2) {Decoder $2$};
\path (2.5,0) node[shade,shape=rectangle,minimum height=.75cm] (d3) {Decoder $3$};
\path (1.25,-.4) node[draw=none] () {$\hat{W}^{(2)}_1$};
\path (-.25,-.4) node[draw=none] () {$\hat{W}^{(1)}_1$};
\path (3,-.4) node[draw=none] () {$\hat{W}^{(3)}_1\!,\hat{W}^{(4)}_2$};
\path (-1.25,.425cm) node[draw=none] () {$Y_1^n\!\!=\!\!\left[\!\!\!\!\begin{array}{c}X_1^n\\X_3^n\\X_2^n\end{array}\!\!\!\!\right]$}; 
\path (.5,.425cm) node[draw=none] () {$Y_2^n\!\!=\!\!\left[\!\!\!\!\begin{array}{c}X_3^n\\X_2^n\end{array}\!\!\!\!\right]$}; 
\path (3.25,.425cm) node[draw=none] () {$Y_3^n\!\!=\!\!\left[\!\!\!\!\begin{array}{c}X_1^n\\X_2^n\\X_4^n\end{array}\!\!\!\!\right]$}; 

\draw[->,thick] (v0) -- node[left,draw=none,fill=none] {$X_1^n$}  (v2) ;
\draw[->,thick] (v0) -- node[right,draw=none,fill=none] {$X_2^n$}   (v6) ;
\draw[->,thick] (v0) -- node[right,draw=none,fill=none] {$X_3^n$}  (v3) ;
\draw[->,thick] (v0) -- node[right,draw=none,fill=none] {$X_4^n$}  (v4) ;


\draw[->,thick] (1,3) -- (v0);
\draw[->,thick] (d2) -- (1,-.4);
\draw[->,thick] (d1) -- (-.5,-.4);
\draw[->,thick] (d3) -- (2.5,-.4);

\draw[->,thick] (v3) -- (d1);
\draw[->,thick] (v3) -- (d2);
\draw[->,thick] (v6) -- (d1);
\draw[->,thick] (v6) -- (d2);
\draw[->,thick] (v6) -- (d3);
\draw[->,thick] (v2) -- (d1);
\draw[->,thick] (v2) -- (d3);
\draw[->,thick] (v4) -- (d3);
\end{tikzpicture}
\caption{A combination network with two public receivers indexed by $I_1=\{1,2\}$ and one private receiver indexed by $I_2=\{3\}$. All edges are of  unit capacity.}
\label{blockdiagram}
\end{figure}

Encoding is done over blocks of length $n$. The encoder \textit{encodes} $W_1$ and $W_2$ 
into $d$ sequences $X_i^n$, $i=1,\ldots,d$, that are sent over the resources of the combination network (over $n$ uses of the network). 
Based on the structure of the network, each user $i$ receives a vector of sequences, $Y_i^n$, that is a certain collection of sequences that were sent by the source.
{Given $Y_i^n$, public 
receiver $i$, $i=1,\ldots,m$, \textit{decodes} $\hat{W}^{(i)}_1$ and given $Y_p^n$, private receiver $p$, $p=m+1,\ldots,K$, \textit{decodes}  $\hat{W}^{(p)}_1,\hat{W}^{(p)}_2$.
A rate pair $(R_1,R_2)$ is said to be achievable if there is an encoding/decoding scheme that allows the error probability $$P_e=\Pr\left(\hat{W}^{(i)}_1\neq W_1 \text{ for some }i=1,\ldots,K\text{ or }\hat{W}^{(i)}_2\neq W_2\text{ for some }i=m+1,\ldots,K\right)$$ approach zero (as $n\to \infty$).
We call the closure of all achievable rate-pairs  the capacity region. Although we allow $\epsilon-$error probability in the communication scheme (for example in proving converse theorems), the achievable schemes that we propose are zero-error schemes. Therefore, in all cases where we characterize the capacity region, the $\epsilon-$error capacity region and the zero-error capacity region coincide. This is not surprising considering the deterministic nature of our channels/networks.}
\subsection{Organization of the Paper}
The paper is organized in eight sections. In Section \ref{sec-main}, we give an 
overview of the underlying challenges and our main ideas through several 
examples. Our notation is introduced in Section \ref{notation}. We study  linear 
encoding schemes that are based on rate splitting, linear superposition coding, 
and pre-encoding in Section \ref{CombNet-ach-rate splitting}. 
Section \ref{CombNet-ach-blockMarkov} proposes a block Markov encoding scheme 
for multicasting nested message sets, Section \ref{CombNet-outerbound} discusses 
 optimality results, and Section \ref{lb-sec-generalbc} generalizes the block 
Markov encoding scheme of Section \ref{CombNet-ach-blockMarkov} to general 
broadcast channels. We conclude in Section \ref{conclusion}.

\section{Main Ideas at a Glance}
\label{sec-main}
{The problem of multicasting messages to receivers which have (two) different demands over a shared medium (such as the 
combination network) is, in a sense, finding an optimal solution to a trade-off. On the one 
hand,  public receivers (which presumably have access to fewer resources) need information about the common message 
\textit{only} so that each can decode the common message. On the other hand, private receivers require complete 
information of both messages. It is, therefore, desirable from private receivers' point of view to have these messages  
jointly encoded (especially when there are multiple private receivers). This may be in contrast with  public receivers' 
decodability requirement. This tension is best seen through an example. 
Example \ref{lb-CombNet-ex-joint} below shows that an optimal encoding scheme should allow joint encoding of the common and private messages but in a restricted and controlled manner, so that  only \textit{partial information} about the private message is ``revealed" to the public receivers and decodability of the common message is ensured.}
\begin{example}
\label{lb-CombNet-ex-joint}
\begin{figure}
\centering
\begin{tikzpicture}[scale=1.75]
\tikzstyle{every node}=[draw,shape=circle,font=\small\itshape]; 
\path (1.5,2.5) node (v0) {$S$}; 
\path  (1.5,3.25) node[rectangle,draw=none] () {$\hspace{-.75cm}W_1=[w_{1,1}]$}; 
\path  (1.5,2.85) node[rectangle,draw=none] () {$W_2=[w_{2,1},w_{2,2}]$}; 

\path (0,1.25) node (v6) {};
\path (1,1.25) node (v5) {};
\path (3,1.25) node (v8) {};
\path (2,1.25)  node (v7) {};
\path (0,0) node (d1) {$D_1$};
\path (0,-.45) node[rectangle,draw=none] () {$W_1$};
\path (1,0) node (d2) {$D_2$};
\path (1,-.45) node[rectangle,draw=none] () {$W_1,w_{2,1}$};

\path (2,0) node[shade]  (d3) {$D_3$};
\path (2,-.45) node[rectangle,draw=none] () {$W_1,W_2$};

\path (3,0) node (d4)[shade]  {$D_4$};

\path (3,-.45) node[rectangle,draw=none] () {$W_1,W_2$};

\draw[->] (v0) -- (v5);
\draw[->] (v0) -- node[left,draw=none,fill=none,xshift=.1cm] {${w_{1,1}}$}  (v5) ;

\draw[->] (v0) -- (v6);
\draw[->] (v0) -- node[draw=none,fill=none,left] {$w_{1,1}+w_{2,1}$}  (v6) ;

\draw[->] (v0) -- (v7);
\draw[->] (v0) -- node[right,draw=none,fill=none] {$w_{2,1}$}  (v7) ;

\draw[->] (v0) -- (v8);
\draw[->] (v0) -- node[right,draw=none,fill=none] {$w_{2,2}$}  (v8) ;

\draw[->] (v5) -- (d1);
\draw[->] (v5) -- (d3);
\draw[->] (v8) -- (d3);
\draw[->] (v8) -- (d4);
\draw[->] (v5) -- (d4);
\draw[->] (v6) -- (d2);
\draw[->] (v6) -- (d3);
\draw[->] (v7) -- (d2);
\draw[->] (v7) -- (d4);

\end{tikzpicture}
\caption{The source multicasts $W_1=[w_{1,1}]$ and $W_2=[w_{2,1},w_{2,2}]$ (of rates $R_1=1$ and $R_2=2$) in $n=1$ channel use.}
\label{CombNet-example-joint}
\end{figure}
{Consider the combination network shown in Fig.~\ref{CombNet-example-joint} in $n=1$ channel use. The source communicates a common message $W_1=[w_{1,1}]$ and a private message $W_2=[w_{2,1},w_{2,2}]$ to four receivers. Receivers $1$ and $2$ are public receivers and receivers $3$ and $4$ are private receivers. Since Receivers $1$ and $2$ each have min-cuts less than $3$, they are not able to decode both the common and private messages. For this reason, random linear network
coding does not ensure decodability of $W_1$ at the public receivers. We note that to communicate $W_1,W_2$,  it is necessary that some partial information about the private 
message is revealed to public receiver $2$.}

{Split the private message into $w_{2,1}$ of rate $\alpha_{\{2\}}=1$ and $w_{2,2}$ of rate $\alpha_{\phi}=1$. $w_{2,2}$ is the part of $W_2$ that is not revealed to the public receivers and  $w_{2,1}$ is the part that is revealed to Receiver $2$ (but not Receiver $1$). By splitting $W_2$ into independent pieces, we make sure that only a part of $W_2$, in this case $w_{2,1}$, is revealed to Receiver $2$; In other words, only $w_{2,1}$ is encoded into sequences $X^n_i$'s that are received by Receiver $2$. 
A linear scheme based on this idea is illustrated in Fig. \ref{CombNet-example-joint} and could be represented as follows:
\begin{align}
\left[\begin{array}{c}X_1\\X_2\\X_3\\X_4\end{array}\right]=\left[\begin{array}{ccc}1&1&0\\1&0&0\\0&1&0\\0&0&1\end{array}\right]\left[\begin{array}{c}w_{1,1}\\w_{2,1}\\w_{2,2}\end{array}\right].
\end{align}
Note that by this construction we have ensured that the private receivers get a full rank transformation of all information symbols, and the public receivers get a full rank transformation of a subset of the information symbols (including the information symbol of $W_1$). }
\end{example}


{Our first coding scheme (Proposition \ref{CombNetp1-innerbound} in Section \ref{lb-CombNet-ZES}) builds on Example \ref{lb-CombNet-ex-joint} by splitting the private message into $2^m$ independent pieces (of rates $\alpha_{\mathcal{S}}$, $\mathcal{S}\subseteq\mathcal{I}$, to be optimized) and using linear superposition coding. The rate split parameters $\alpha_{\mathcal{S}}$ should be designed such that they satisfy several rank constraints (for different decoders to decode their messages of interest). We characterize the achievable rate-region by a linear program with no objective function (a feasibility problem). The solution of this linear program gives the optimal choice of $\alpha_{\mathcal{S}}$ for the scheme. We show that our scheme is optimal for combination networks with two public and any number of private receivers.}

{For networks with three or more public and any number of private receivers, the above scheme may perform sub-optimally.  It turns out that one may, depending on the structure of  resources, gain by introducing  dependency among the partial (private) information that is revealed to different subsets of public receivers.}
\begin{example}
\label{lb-CombNet-example2to3}
\begin{figure}
\centering
\begin{tikzpicture}[scale=1.5]
\tikzstyle{every node}=[draw,shape=circle,minimum size=.01cm,font=\small\itshape]; 
\path (2.5,2.5) node (v0) {$S$}; 
\path (2.55,3.5) node[rectangle,draw=none] () {$\hspace{-1.65cm}W_1=[]$}; 
\path (2.5,3) node[rectangle,draw=none] () {$\hspace{-.25cm}W_2=[w_{2,1},w_{2,2}]$}; 

\path (1,1.25) node (v5) {};
\path (2.5,1.25) node (v6) {};
\path (4,1.25)  node (v7) {};

\path (0,0) node  (d1) {$D_1$};
\path (1,0) node  (d2) {$D_2$};
\path (2,0) node   (d3) {$D_3$};
\path (3,0) node (d4)[shade]   {$D_4$};

\path (4,0) node (d5)[shade]  {$D_5$};
\path (5,0) node (d6)[shade]   {$D_6$};

\draw[->] (v0) --node[left,draw=none]{$X_1$} (v5);
\draw[->] (v0) --node[draw=none,left]{$X_2$} (v6);
\draw[->] (v0) --node[right,draw=none,left]{$X_3$} (v7);

\draw[->] (v0) -- (v5);
\draw[->] (v0) -- (v6);
\draw[->] (v0) -- (v7);

\draw[->] (v5) -- (d1);
\draw[->] (v6) -- (d2);
\draw[->] (v7) -- (d3);
\draw[->] (v5) -- (d4);
\draw[->] (v6) -- (d4);
\draw[->] (v5) -- (d5);
\draw[->] (v7) -- (d5);
\draw[->] (v6) -- (d6);
\draw[->] (v7) -- (d6);

\end{tikzpicture}
\caption{Multicasting $W_1$ of rate $0$ and $W_2$ of rate $2$. A multicast code such as $X_1=w_{2,1}$, $X_2=w_{2,2}$, $X_3=w_{2,1}+w_{2,2}$ ensures  achievability of $(R_1=0,R_2=2)$.}
\label{CombNet-example2to3}
\end{figure}
{Consider the combination network of Fig.~\ref{CombNet-example2to3} where destinations $1$, $2$, $3$ are public receivers and destinations $4$, $5$, $6$ are private receivers. The source wants to communicate a common message of rate $R_1=0$ (i.e., $W_1=\emptyset$) and a private message $W_2=[w_{2,1},w_{2,2}]$ of rate $R_2=2$.  Clearly this rate pair is achievable using the multicast code shown in Fig.~\ref{CombNet-example2to3} (or simply through a random linear network code).
However, the scheme we outlined before is incapable of supporting this rate-pair. This may be seen by looking at the linear program characterization of the scheme: $(R_1=0,R_2=2)$ is achievable by our first encoding scheme if there is a solution to the following feasibility problem (see Section \ref{CombNet-ach-rate splitting} for details of derivation).
\begin{align}
&\alpha_{\mathcal{S}}\geq 0,\quad \mathcal{S}\subseteq\{1,2,3\}\label{exampleregion1}\\
&\sum_{\mathcal{S}\subseteq\{1,2,3\}}\alpha_{\mathcal{S}}=2\\
&\text{for all $\{i,j,k\}$ that is a permutation of $\{1,2,3\}$:}\\
&\alpha_{\{i\}}+\alpha_{\{i,j\}}+\alpha_{\{i,k\}}+\alpha_{\{i,j,k\}}\leq 1\\
&0\leq \alpha_{\{i,j,k\}}\\
&0\leq \alpha_{\{j,k\}}+\alpha_{\{i,j,k\}}\\
&0\leq \alpha_{\{i,k\}}+\alpha_{\{j,k\}}+\alpha_{\{i,j,k\}}\\
&0\leq \alpha_{\{i,j\}}+\alpha_{\{i,k\}}+\alpha_{\{j,k\}}+\alpha_{\{i,j,k\}}\\
&1\leq \alpha_{\{k\}}+\alpha_{\{i,k\}}+\alpha_{\{j,k\}}+\alpha_{\{i,j,k\}}\\
&1\leq \alpha_{\{k\}}+\alpha_{\{i,j\}}+\alpha_{\{i,k\}}+\alpha_{\{j,k\}}+\alpha_{\{i,j,k\}}\\
&2 \leq \alpha_{\{j\}}+\alpha_{\{k\}}+\alpha_{\{i,j\}}+\alpha_{\{i,k\}}+\alpha_{\{j,k\}}+\alpha_{\{i,j,k\}}\\
&2\leq \alpha_{\{i\}}+\alpha_{\{j\}}+\alpha_{\{k\}}+\alpha_{\{i,j\}}+\alpha_{\{i,k\}}+\alpha_{\{j,k\}}+\alpha_{\{i,j,k\}}
\label{exampleregionlast}
\end{align}
By testing the feasibility of the above linear program, it can be seen that the above problem is infeasible (by Fourier-Motzkin elimination or using MATLAB). On a higher level, although the optimal scheme in Fig. \ref{CombNet-example2to3} reveals 
partial (private) information to the different subsets of the public receivers, this is not done by splitting the private 
message into independent pieces and there is a certain dependency structure between the symbols that are revealed to 
 receivers~$1$, $2$, and $3$. This is why our first linear superposition scheme does not support the rate-pair $(0,2)$.}

{We use this observation to modify the encoding scheme and achieve the rate pair $(0,2)$. First, pre-encode message $W_2$, through a pre-encoding matrix $\mathbf{P}\in{\mathbb{F}_q}^{3\times 2}$, into a \textit{pseudo private message} $W^\prime_2$ of larger ``rate":
\begin{align}
\left[\begin{array}{c}w^\prime_{2,1}\\w^\prime_{2,2}\\w^\prime_{2,3}\end{array}\right]&=\mathbf{P}\left[\begin{array}{c}w\prime_{2,1}\\w\prime_{2,2}\end{array}\right].
\end{align}
Then, encode $W^\prime_2$ using rate splitting and linear superposition coding:
\begin{align}
\label{lb-ex-st}
\left[\begin{array}{c}X_1\\X_2\\X_3\end{array}\right]&=\left[\begin{array}{ccc}1&0&0\\0&1&0 \\0&0&1 \end{array}\right]\left[\begin{array}{c}w^\prime_{2,1}\\w^\prime_{2,2}\\w^\prime_{2,3}\end{array}\right].
\end{align}}

{Suppose $\mathbf{P}$ is a MDS (maximum distance separable) code. Each private receiver is able to decode  two symbols out of the three symbols of $W^\prime_2$ and can thus decode $W_2$. It turns out that the achievable rate-region of this scheme is a relaxation of \eqref{exampleregion1}-\eqref{exampleregionlast} when $\alpha_{\phi}$ may be negative.}
\end{example}

{Our second approach (Theorem \ref{lb-CombNet-Theoremk2} in Section \ref{lb-CombNet-MES}) builds on Example \ref{lb-CombNet-example2to3} and the underlying idea is to allow dependency among the pieces of information that are revealed to different sets of public receivers by an appropriate pre-encoder that encodes the private message into a pseudo private message of a larger rate, followed by a linear superposition encoding scheme.
We  prove that the rate-region achieved by our second scheme is  tight for $m=3$ (or fewer) public receivers and any number of private receivers. To prove the converse, 
we first write an outer-bound on the rate-region which looks \textit{similar} to the inner-bound feasibility problem and is in terms of some entropy functions.
{Next, we use \textit{sub-modularity of entropy} to write a converse for every inequality of the inner bound. In this process, we develop a visual tool in the framework of \cite{BalisterBollobas07} to deal with the sub-modularity of entropy and prove a lemma that allows us show the tightness of the inner bound without explicitly solving its corresponding feasibility problem. }}

{Generalizing the pre/encoding scheme to networks with more than three public receivers is difficult because of the more involved dependency structure that might be needed, in a good code, among the partial (private) information pieces that are to be revealed to the subsets of public receivers. Therefore, we propose an alternative encoding scheme that captures these dependencies over sequential blocks, rather than  the structure of the {one-time} (one-block) code. This is done by devising a simple \textit{block Markov encoding scheme}. 
Below, we illustrate the main idea of the block Markov scheme by revisiting the combination network of Fig.~\ref{CombNet-example2to3}. }
\begin{example}
\label{CombNet-BME-example12-3}
{Consider the combination network in Fig.~\ref{CombNet-example2to3} over which we want to achieve the rate pair $(R_1=0,R_2=2)$. Our first code design using rate splitting and linear superposition coding (with no pre-encoding) was not capable of achieving this rate pair. Let us add one resource to this combination network and connect it to all the private receivers. This gives  an extended combination network, as shown in Fig.~\ref{CombNet-example2to3bmecode-final}, that differs from the original network only in one edge. This ``virtual'' 
resource is shown in Fig.~\ref{CombNet-example2to3bmecode-final} by a bold edge. 
One can verify that our basic linear superposition scheme achieves $(R_1=0,R_2^\prime=3)$
over this extended network by writing the corresponding linear program and finding a solution:
\begin{align}
&\alpha_{\{1\}}=\alpha_{\{2\}}=\alpha_{\{3\}}=1,\\
&\alpha_{\phi}=\alpha_{\{1,2\}}=\alpha_{\{1,3\}}=\alpha_{\{2,3\}}=\alpha_{\{1,2,3\}}=0.
\end{align} }

{Let the message $W^\prime_2=[w^\prime_{2,1},w^\prime_{2,2},w^\prime_{2,3}]$ be a \textit{pseudo} 
private message of larger rate ($R^\prime_2=3$) that is  communicated over 
the extended combination network (in one channel use), and let $X_1$, $X_2$, $X_3$, $X_{\phi}$ be the symbols that are sent over the extended combination network. 
One code design is given below. We will use this code to achieve  rate pair $(0,2)$ over the original network. 
\begin{align}
\label{lb-CombNet-bm-example1}
\begin{array}{l}
X_1=w^\prime_{2,1}\\
X_2=w^\prime_{2,2}\\
X_3=w^\prime_{2,3}\\
X_{\phi}=w^\prime_{2,1}+w^\prime_{2,2}+w^\prime_{2,3}.
\end{array}
\end{align}}

{Since the resource edge that carries $X_\phi$ is a virtual resource, we aim to emulate it through a block Markov encoding 
scheme. 
Using the code design of \eqref{lb-CombNet-bm-example1}, all information symbols 
($w^\prime_{2,1}$, $w^\prime_{2,2}$,  $w^\prime_{2,3}$) are decodable at all 
private receivers. One way to emulate the bold virtual resource is 
to send its information (the symbol carried over it) in the next time slot 
using one of the information symbols $w^\prime_{2,1}$, $w^\prime_{2,2}$,  $w^\prime_{2,3}$ 
that are to be communicated in the next time slot.}

{More precisely,
consider communication over $n$ transmission blocks, and let $(W_1[t],W^{\prime}_2[t])$ be the message pair that is  encoded in block $t\in\{1,\ldots,n\}$. 
In the $t^{\text{th}}$ block, encoding is done as suggested by the code in \eqref{lb-CombNet-bm-example1}. 
Nevertheless, to provide  private receivers with the information of $X_{\phi}[t]$ (as promised by the virtual resource), 
we  use  $w^{\prime}_{2,3}[t+1]$ in the next block to convey $X_{\phi}[t]$. 
Since this symbol is ensured to be decoded at the private receivers, it indeed emulates the virtual resource.
In the $n^{\text{th}}$ block, we simply encode $X_{\phi}[n-1]$ and directly communicate it with the private receivers. 
Upon receiving all the $n$ blocks at the receivers, we perform backward decoding \cite{WillemsMeulen85}. 
So in $n$ transmissions, we send $n-1$ symbols of $W_1$ and $2(n-1)+1$ new symbols of $W_2$ over the original combination 
network;  i.e., for $n\to \infty$, we achieve the rate-pair $(0,2)$. }
\begin{figure}
\centering
\begin{tikzpicture}[scale=1.5]
\tikzstyle{every node}=[draw,shape=circle,font=\small\itshape]; 
\path (2.5,2.5) node (v0) {$S$}; 

\path  (2.5,3.5) node[rectangle,draw=none] () {$W_1[t+1]=[]$}; 
\path  (2.5,3) node[rectangle,draw=none] (message) {$W_2^{\prime}[t+1]=[w^{\prime}_{2,1}[t+1],w^{\prime}_{2,2}[t+1],\textcolor{blue}{w^{\prime}_{2,3}[t+1]}]$}; 

\path (1,1.25) node (v5) {};
\path (2.5,1.25) node (v6) {};
\path (4,1.25)  node (v7) {};
\path (5.5,1.25)  node[fill=blue,blue] (v8) {};

\path (0,0) node  (d1) {$D_1$};
\path (1,0) node  (d2) {$D_2$};
\path (2,0) node   (d3) {$D_3$};
\path (3,0) node (d4)[shade]   {$D_4$};

\path (4,0) node (d5)[shade]  {$D_5$};
\path (5,0) node (d6)[shade]   {$D_6$};

\draw[->] (v0) --node[left,draw=none]{$X_1[t]$} (v5);
\draw[->] (v0) --node[left,draw=none,xshift=.1cm]{$X_2[t]$} (v6);
\draw[->] (v0) --node[left,draw=none]{$X_3[t]$} (v7);
\draw[->,very thick,blue] (v0) --node[right,draw=none,xshift=.05cm](virtual){$\textcolor{blue}{X_{\phi}[t]}$} (v8);

\draw[->,gray!75] (v5) -- (d1);
\draw[->,gray!75] (v6) -- (d2);
\draw[->,gray!75] (v7) -- (d3);
\draw[->] (v5) -- (d4);
\draw[->] (v6) -- (d4);
\draw[->] (v5) -- (d5);
\draw[->] (v7) -- (d5);
\draw[->] (v6) -- (d6);
\draw[->] (v7) -- (d6);
\draw[->,very thick,blue] (v8) -- (d4);
\draw[->,very thick,blue] (v8) -- (d5);
\draw[->,very thick,blue] (v8) -- (d6);
\draw[->,blue,dashed,very thick ](4.5,2.2) to [out=0,in=-90] (6.2,2.85)to [out=90,in=60] (4.5,3.2);
\end{tikzpicture}
\caption{The extended combination network of Example \ref{CombNet-BME-example12-3}. A block Markov encoding scheme achieves $(0,2)$ over the original combination network. At time $t+1$, information symbol $w^{\prime}_{2,3}[t+1]$ contains the information of symbol $X_{\phi}[t]$. }
\label{CombNet-example2to3bmecode-final}

\end{figure}
\end{example}

{Out third coding scheme (Theorem \ref{lb-CombNetbme-Theorem} in Section \ref{CombNet-ach-blockMarkov}) builds on Example \ref{CombNet-BME-example12-3}. When there are $m=4$ or more public receivers, our block Markov scheme is  more powerful that the first two schemes and we are not aware of any example where this scheme is sub-optimal.
In Section \ref{CombNet-ach-blockMarkov}, we describe our block Markov encoding scheme and characterize  the rate region it achieves.
We show, for three (or fewer) public and any number of private receivers, that 
this rate-region is equal to the capacity region and, therefore, coincides with 
the rate-region of Theorem \ref{lb-CombNet-Theoremk2}. Furthermore,  we show 
through an example that  the block Markov encoding scheme could outperform the 
previously discussed linear encoding schemes when there are $4$ or more 
public receivers.}

{In Section \ref{lb-sec-generalbc}, we further adapt this scheme to general 
broadcast channels with two nested message sets and obtain a rate region that includes  previously known rate-regions. We do not know if this inclusion is strict.}


\section{notation}
\label{notation}
We denote the set of outgoing edges from the source by $\mathcal{E}$ with cardiality  $|\mathcal{E}|=d$, and we refer to those edges as the resources of the combination network. 
The resources are labeled according to the
public receivers they are connected to; i.e., we denote the set of all resources that are connected to every public receiver 
in  $\mathcal{S}$, $\mathcal{S}\subseteq \mathcal{I}_1$, and not connected to any other public 
receiver by $\mathcal{E}_\mathcal{S}\subseteq \mathcal{E}$. 
Note that the edges in $\mathcal{E}_\mathcal{S}$ may or may not be connected to the private receivers.
We identify the subset of edges in $\mathcal{E}_{S}$ that are also connected to a private receiver 
$p$ by $\mathcal{E}_{\mathcal{S},p}$. Fig.~\ref{blockdiagram} shows this notation over a 
combination network with four receivers. In this example, $d=5$, $\mathcal{E}=\{(s,v_1),(s,v_2),(s,v_3),(s,v_4),(s,v_5)\}$, $\mathcal{E}_{\phi}=\{(s,v_4),(s,v_5)\}$, 
$\mathcal{E}_{\{1\}}=\{(s,v_1)\}$, $\mathcal{E}_{\{2\}}=\{\}$,
 $\mathcal{E}_{\{1,2\}}=\{(s,v_2),(s,v_3)\}$, $\mathcal{E}_{\phi,3}=\{(s,v_4),(s,v_5)\}$, $\mathcal{E}_{\phi,4}=\{\}$, 
$\mathcal{E}_{\{1\},3}=\mathcal{E}_{\{1\},4}=\{(s,v_1)\}$, $\mathcal{E}_{\{2\},3}=\mathcal{E}_{\{2\},4}=\{\}$, 
$\mathcal{E}_{\{1,2\},3}=\{(s,v_3)\}$, 
 $\mathcal{E}_{\{1,2\},4}=\{(s,v_2)\}$.

\begin{figure}
\centering
\begin{tikzpicture}[scale=2]
\tikzstyle{every node}=[draw,shape=circle,minimum size=.01cm,font=\small\itshape]; 

\path (1.5,2.5cm) node (v0) {$S$}; 
\draw (2.6,1.5) arc (-90:-30:.35cm) ;
\draw (1.1,1.75) arc (-140:-80:.5cm) ;
\draw (1.45,2.1) arc (-140:-80:.2cm) ;

\path (-.5,1.25)  node (v2) {$v_1$};
\path(1,1.25)  node (v3) {$v_2$}; 
\path(1.5,1.25)  node (v6) {$v_3$}; 
\path (3,1.25) node (v4) {$v_4$}; 
\path (3.5,1.25) node (v5) {$v_5$};
\path (0,0) node (d1) {$D_1$};
\path (1,0) node (d2) {$D_2$};
\path (2,0) node[shade] (d3) {$D_3$};
\path (3,0) node[shade] (d4) {$D_4$};

\draw[->] (v0) -- node[left,draw=none,fill=none] {$\mathcal{E}_{\{1\}}$}  (v2) ;
\draw[->] (v0) -- node[left,draw=none,fill=none,right] {$\mathcal{E}_{\{1,2\},3}$} (v6) ;
\draw[->] (v0) --node[left,draw=none,fill=none,left,yshift=-.4cm]{$\mathcal{E}_{\{1,2\}}$}(v3) ;
\draw[->] (v0) --(v4) ;
\draw[->] (v0) -- node[left,draw=none,fill=none,xshift=1.8cm,yshift=-.5cm]{$\mathcal{E}_{\{\phi\}}$} (v5);
\draw[->] (v3) -- (d1);
\draw[->] (v3) -- (d2);
\draw[->] (v3) -- (d4);
\draw[->] (v6) -- (d1);
\draw[->] (v6) -- (d2);
\draw[->] (v6) -- (d3);
\draw[->] (v2) -- (d1);
\draw[->] (v2) -- (d4);
\draw[->] (v2) -- (d3);
\draw[->] (v4) -- (d3);
\draw[->] (v5) -- (d3);
\end{tikzpicture}
\caption{A combination network with two public receivers indexed by $I_1=\{1,2\}$ and two private receivers indexed by $I_2=\{3,4\}$.}
\label{FigCombinationNetworkNotation}
\end{figure}

Throughout this paper, we denote  random variables by capital letters (e.g., $X$, $Y$), the 
sets $\{1,\ldots,m\}$ and $\{m+1,\ldots,K\}$ by $\mathcal{I}_1$ and $\mathcal{I}_2$, respectively, and subsets of $\mathcal{I}_1$ by script capital letters (e.g., $\mathcal{S}=\{1,2,3\}$ and $\mathcal{T}=\{1,3,4\}$). We denote the set of all subsets of $\mathcal{I}_1$ by $2^{\mathcal{I}_1}$ and in addition denote its subsets by Greek capital letters (e.g., $\Gamma=\{\{1\},\{1,2\}\}$ and $\Lambda=\{\{1\},\{2\},\{1,2\}\}$).

All rates are expressed in $\log_2|\mathbb{F}_q|$ in this work.
The symbol carried over a resource of the combination network, $e\in\mathcal{E}$, is denoted by $x_e$ which is a scalar 
from $\mathbb{F}_q$. Similarly, 
its corresponding random variable is denoted by $X_e$. We denote by $X_\mathcal{S}$,  $\mathcal{S}\subseteq \mathcal{I}_1$, the vector of  
symbols carried over resource edges in $\e_\mathcal{S}$, and by $X_{\mathcal{S},p}$,  $\mathcal{S}\subseteq \mathcal{I}_1$, $p\in \mathcal{I}_2$, 
the vector of symbols carried over resource edges in $\e_{\mathcal{S},p}$. To simplify notation, we sometimes abbreviate 
the union sets $\bigcup_{\mathcal{S}\in\Lambda}\e_S$, $\bigcup_{\mathcal{S}\in\Lambda}\e_{\mathcal{S},p}$ and 
$\bigcup_{\mathcal{S}\in\Lambda}X_\mathcal{S}$, by $\e_\Lambda$, $\e_{\Lambda,p}$ and $X_\Lambda$, respectively, where $\Lambda$ is a subset of $2^{\mathcal{I}_1}$. 
The vector of all received symbols at receiver $i$ is denoted by $Y_i$, $i\in\{1,\ldots,K\}$. 
When communication takes place over blocks of length $n$, all vectors above take the superscript~$n$; e.g., $X^n_e$ $X_\mathcal{S}^n$, $X_{\mathcal{S},p}^n$, $X^n_{\Lambda}$, $X^n_{\Lambda,p}$, $Y_i^n$.

We define superset saturated subsets of $2^{\mathcal{I}_1}$ as follows. A subset $\Lambda\subseteq 2^{\mathcal{I}_1}$ is superset saturated if inclusion of any set $\mathcal{S}$ in $\Lambda$ implies the inclusion of all its supersets; e.g., over subsets of $2^{\{1,2,3\}}$, $\Lambda=\{\{1\},\{1,2\},\{1,3\},\{2,3\},\{1,2,3\}\}$ is superset saturated, but not $\Lambda=\{\{1\},\{1,3\},\{1,2,3\}\}$.

\begin{definition}[Superset saturated subsets]
The subset $\Lambda\subseteq 2^{\mathcal{I}_1}$ is superset saturated if and only if it has the following property.
\begin{itemize} 
\item $\mathcal{S}$ is an element of $\Lambda$ only if every $\mathcal{T}\in 2^{\mathcal{I}_1}$,  $\mathcal{T}\supseteq \mathcal{S}$, is an element of $\Lambda$. 
\end{itemize}
\end{definition}
For notational matters, we sometimes abbreviate a  superset saturated subset $\Lambda$ by the few sets that are not implied by the other sets in $\Lambda$. For example, $\{\{1\},\{1,2\},\{1,3\},\{1,2,3\}\}$ is denoted by $\{\{1\}\star\}$, and similarly $\{\{1\},\{1,2\},\{1,3\},\{2,3\},\{1,2,3\}\}$ is denoted by $\{\{1\}\star,\{2,3\}\star\}$.

\section{Rate Splitting and Linear Encoding Schemes}
\label{CombNet-ach-rate splitting}
Throughout this section, we confine ourselves to linear encoding at the source. 
For simplicity, we describe our encoding schemes for block length $n=1$, and  highlight cases where we need to code over longer blocks.

We assume rates $R_1$ and $R_2$ to be non-negative integer values\footnote{There is no loss of generality in this assumption. One can deal with rational values of $R_1$ and $R_2$ by coding over blocks of large enough length $n$ and working with integer rates $nR_1$ and $nR_2$. Also, one can attain real valued rates through sequences of rational numbers that approach them.}.
Let $w_{1,1},\ldots,w_{1,R_1}$ and $w_{2,1},\ldots,w_{2,R_2}$ be variables in ${\mathbb{F}_q}$ for messages $W_1$ 
and $W_2$, respectively. We call them the information symbols of the common and the private message.
Also, let vector $W\in{\mathbb{F}_q}^{R_1+R_2}$ be defined as the vector with coordinates in the standard basis
$ W=[w_{1,1}\ldots w_{1,R_2} w_{2,1}\ldots w_{2,R_2}]^T$. 
The symbol carried by each resource is a linear combination of the information symbols. After properly rearranging all vectors $X_\mathcal{S}$, $\mathcal{S}\subseteq \mathcal{I}_1$, we have
\begin{align*}
 \left[\begin{array}{l}X_{\{1,\ldots,m\}}\\\vdots\\X_{\{2\}}\\X_{\{1\}}\\X_{\phi}\end{array}\right]=\mathbf{A}\cdot W,
\end{align*}
where $\mathbf{A}\in {\mathbb{F}_q}^{d\times (R_1+R_2)}$ is the encoding matrix.
At each public receiver $i$, $i\in\mathcal{I}_1$, the received signal $Y_i$ is given by $Y_i=\mathbf{A}_iW$, 
where $\mathbf{A}_i$ is a submatrix of  $\mathbf{A}$ corresponding to 
$X_\mathcal{S}$, $\mathcal{S}\ni i$. Similarly, the received signal at each private receiver $p$, $p\in\mathcal{I}_2$, 
is given by $Y_p=\mathbf{A}_pW$, where  $\mathbf{A}_p$ is the submatrix of the rows of $\mathbf{A}$ corresponding to 
$X_{\mathcal{S},p}$, $\mathcal{S}\subseteq \mathcal{I}_1$.

We design $\mathbf{A}$ to allow the public receivers decode  $W_1$ and the private receivers  decode  $W_1,W_2$. We then characterize the rate pairs achievable by our code design. The challenge in the optimal code design stems from the fact that destinations receive different subsets of the symbols that are sent by the encoder and they have two different decodability requirements. On the one hand, private receivers require their received signal to bring information about all information symbols of the common and the private message. On the other hand, public receivers might not be able to decode the common message if their received symbols depend on ``too many" private message variables. We make this statement precise in Lemma \ref{lb-CombNet2-lemmaW1}. 
In the following, we find conditions for decodability of  the messages.
\subsection{Decodability Lemmas}
\begin{lemma}
\label{lb-CombNet2-lemmaW1}
Let vector $Y$ be given by \eqref{lb-CombNet-lemmaW1-yi}, below, where $\mathbf{B}\in{\mathbb{F}_q}^{r\times R_1}$, $\mathbf{T}\in{\mathbb{F}_q}^{r\times R_2}$, $W_1\in{\mathbb{F}_q}^{R_1\times1}$, and $W_2\in{\mathbb{F}_q}^{R_2\times1}$.
\begin{align}
\label{lb-CombNet-lemmaW1-yi}
Y=\left[\begin{array}{c|c}\mathbf{B}&\mathbf{T}\end{array}\right]\cdot \left[\begin{array}{c}W_1\\W_2\end{array}\right].
\end{align} 
Message $W_1$ is recoverable from $Y$ \textit{if and only if} 
 $\rank\left(\mathbf{B}\right)=R_1$
and the column space of $\ \mathbf{B}$ is disjoint from that of $\mathbf{T}$.
\end{lemma}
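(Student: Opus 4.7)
The plan is to prove both directions by direct linear-algebra arguments, since ``recoverability of $W_1$ from $Y$'' reduces to saying that the map $(W_1,W_2)\mapsto Y$ induces an injective map on the $W_1$ coordinate, i.e., that collisions $\mathbf{B}W_1+\mathbf{T}W_2=\mathbf{B}W_1'+\mathbf{T}W_2'$ force $W_1=W_1'$. Rewriting a collision as $\mathbf{B}(W_1-W_1')=\mathbf{T}(W_2'-W_2)$ shows that the question is governed precisely by (i) whether $\mathbf{B}$ is injective on $W_1$ and (ii) whether the images of $\mathbf{B}$ and $\mathbf{T}$ intersect only at $0$.

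For the ``if'' direction, I would assume $\rank(\mathbf{B})=R_1$ and $\mathrm{col}(\mathbf{B})\cap\mathrm{col}(\mathbf{T})=\{0\}$, and take any collision. Since the common value $\mathbf{B}(W_1-W_1')=\mathbf{T}(W_2'-W_2)$ lies in both column spaces, disjointness forces it to be $0$; then $\rank(\mathbf{B})=R_1$ (trivial kernel) gives $W_1=W_1'$. Hence the map from $Y$ to $W_1$ is well-defined, and a decoder can, for instance, compute a left inverse of $\mathbf{B}$ on a complement of $\mathrm{col}(\mathbf{T})$ in $\mathrm{col}([\mathbf{B}\,|\,\mathbf{T}])$ to extract $W_1$ from $Y$.

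For the ``only if'' direction, I would argue by contrapositive and exhibit two distinct values of $W_1$ producing the same $Y$ in each failure case. If $\rank(\mathbf{B})<R_1$, pick a nonzero $W_1^{\ast}\in\ker(\mathbf{B})$; then $(W_1,W_2)=(W_1^{\ast},0)$ and $(0,0)$ give the same $Y=0$ but different $W_1$. If instead $\mathrm{col}(\mathbf{B})\cap\mathrm{col}(\mathbf{T})\neq\{0\}$, choose $W_1^{\ast},W_2^{\ast}$ with $\mathbf{B}W_1^{\ast}=\mathbf{T}W_2^{\ast}\neq 0$, in particular $W_1^{\ast}\neq 0$; then $(W_1^{\ast},-W_2^{\ast})$ and $(0,0)$ again produce identical $Y$ but differ in $W_1$. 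In either case $W_1$ cannot be a function of $Y$, contradicting recoverability.

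There is no real obstacle here: the lemma is essentially the statement that a linear map is injective on a subspace iff its restriction has trivial kernel, applied to the projection onto the $W_1$ component. The only minor care needed is to make explicit that ``recoverable'' is interpreted as zero-error decodability across all message pairs (as is consistent with the paper's zero-error linear setup), so that the existence of a single pair of colliding inputs with distinct $W_1$ components suffices to obstruct recovery.
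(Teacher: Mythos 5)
Your proof is correct and follows essentially the same linear-algebra argument as the paper: both reduce recoverability to the statement that a collision $\mathbf{B}(W_1-W_1')=\mathbf{T}(W_2'-W_2)$ forces $W_1=W_1'$, then translate this into the rank and column-space-disjointness conditions. The only cosmetic differences are that you phrase the ``only if'' direction via the contrapositive (exhibiting explicit colliding pairs) where the paper argues directly, and your ``if'' direction slightly streamlines the paper's two-case split by observing at once that the common vector lies in both column spaces and must therefore vanish.
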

\begin{corollary}
\label{lb-CombNet-cor-W1}
Message $W_1$ is recoverable from $Y$ in equation \eqref{lb-CombNet-lemmaW1-yi} only if 
$$\rank(\mathbf{T})\leq r-R_1.$$
\end{corollary}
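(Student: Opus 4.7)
The plan is to deduce the corollary directly from Lemma~\ref{lb-CombNet2-lemmaW1} by a dimension count on the column spaces of $\mathbf{B}$ and $\mathbf{T}$ inside the ambient space $\mathbb{F}_q^{r}$. Concretely, I would first invoke Lemma~\ref{lb-CombNet2-lemmaW1} to conclude that recoverability of $W_1$ from $Y$ forces $\rank(\mathbf{B})=R_1$ together with the disjointness condition $\mathrm{col}(\mathbf{B})\cap\mathrm{col}(\mathbf{T})=\{0\}$.

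Next, I would apply the standard identity
\begin{align*}
\dim\bigl(\mathrm{col}(\mathbf{B})+\mathrm{col}(\mathbf{T})\bigr)
=\rank(\mathbf{B})+\rank(\mathbf{T})-\dim\bigl(\mathrm{col}(\mathbf{B})\cap\mathrm{col}(\mathbf{T})\bigr).
\end{align*}
Substituting $\rank(\mathbf{B})=R_1$ and $\dim(\mathrm{col}(\mathbf{B})\cap\mathrm{col}(\mathbf{T}))=0$ gives
\begin{align*}
\dim\bigl(\mathrm{col}(\mathbf{B})+\mathrm{col}(\mathbf{T})\bigr)=R_1+\rank(\mathbf{T}).
\end{align*}
Since $\mathrm{col}(\mathbf{B})+\mathrm{col}(\mathbf{T})$ is a subspace of $\mathbb{F}_q^{r}$, its dimension cannot exceed $r$, so $R_1+\rank(\mathbf{T})\leq r$, i.e.\ $\rank(\mathbf{T})\leq r-R_1$, which is the claim.

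The argument is essentially a one-line dimension-count and I do not anticipate any real obstacle; the only point worth stating carefully is that the disjointness hypothesis in Lemma~\ref{lb-CombNet2-lemmaW1} refers to a trivial intersection (the zero subspace), not just to distinctness, since only this allows one to drop the intersection term in the dimension formula above. Because the statement is only a one-sided implication, no converse needs to be checked.
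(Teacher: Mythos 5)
Your proof is correct and is the natural, essentially forced argument; the paper states Corollary~\ref{lb-CombNet-cor-W1} without an explicit proof, treating it as an immediate dimension-count consequence of Lemma~\ref{lb-CombNet2-lemmaW1}, and that is exactly what you supply. You are also right to flag that ``disjoint'' in the lemma must be read as trivial intersection, which is what licenses dropping the intersection term in the Grassmann identity.
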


We defer the proof of Lemma \ref{lb-CombNet2-lemmaW1} to Appendix \ref{ap-CombNet2-lemmaW1} and instead discuss the high-level implication of the result.
Let $r_\mathbf{T}=\rank(\mathbf{T})$ where $r_\mathbf{T}\leq r-R_1$. Matrix $\mathbf{T}$ can be written as $\mathbf{L}_1\mathbf{L}_2$, 
where $\mathbf{L}_1$ is a full-rank matrix of dimension $r\times r_\mathbf{T}$ and $\mathbf{L}_2$ is a full-rank matrix of dimension $r_\mathbf{T}\times R_2$. $\mathbf{L}_1$ is essentially  just a set of linearly independent columns of $\mathbf{T}$ spanning its column space.  In other words, we can write
\begin{eqnarray}
\left[\begin{array}{c|c}\mathbf{B}&\mathbf{T}\end{array}\right]W&=&\left[\begin{array}{c|c}\mathbf{B}&\mathbf{L}_1\mathbf{L}_2\end{array}\right]W\\
&=&\left[\begin{array}{c|c}\mathbf{B}&\mathbf{L}_1\end{array}\right]\left[\begin{array}{c}W_{1}\\\mathbf{L}_2W_2\end{array}\right].
\end{eqnarray}
Since $r_\mathbf{T}+R_1\leq r$, $W_1,\mathbf{L}_2W_2$ are decodable if $\left[\begin{array}{c|c}\mathbf{B}&\mathbf{L}_1\end{array}\right]$ is full-rank.

Defining $\left[\begin{array}{c|c}\mathbf{B}&\mathbf{T}\end{array}\right]$ as a new $\mathbf{B}^\prime$ of dimension $r\times (R_1+R_2)$ and defining a null matrix $\mathbf{T}^\prime$ in Lemma \ref{lb-CombNet2-lemmaW1}, we reach at the  trivial result of the following corollary. 
\begin{corollary}
\label{lb-CombNet-cor-W1W2}
Messages $W_1,W_2$ are recoverable from $Y$ in equation \eqref{lb-CombNet-lemmaW1-yi} \textit{if and only if}
$$\rank\left(\left[\begin{array}{c|c}\mathbf{B}&\mathbf{T}\end{array}\right]\right)= R_1+R_2.$$
\end{corollary}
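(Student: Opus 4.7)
The plan is to reduce this corollary directly to Lemma~\ref{lb-CombNet2-lemmaW1} by treating $W_1$ and $W_2$ jointly as a single message. Concretely, I would define $W' = \left[\begin{array}{c} W_1 \\ W_2 \end{array}\right] \in {\mathbb{F}_q}^{(R_1+R_2) \times 1}$, $\mathbf{B}' = \left[\begin{array}{c|c} \mathbf{B} & \mathbf{T} \end{array}\right] \in {\mathbb{F}_q}^{r \times (R_1+R_2)}$, and let $\mathbf{T}'$ be an empty (zero-column) matrix representing no ``private'' part. Then $Y = \mathbf{B}' W'$ fits the template of \eqref{lb-CombNet-lemmaW1-yi}.

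Applying Lemma~\ref{lb-CombNet2-lemmaW1} to this reformulation, $W'$ is recoverable from $Y$ if and only if $\rank(\mathbf{B}') = R_1 + R_2$ and the column space of $\mathbf{B}'$ is disjoint from that of $\mathbf{T}'$. Since $\mathbf{T}'$ is empty, the disjointness condition is vacuously satisfied, and we are left with $\rank\bigl([\mathbf{B} \mid \mathbf{T}]\bigr) = R_1 + R_2$. Recovering $W'$ is by construction equivalent to recovering the pair $(W_1, W_2)$, which yields the desired if-and-only-if.

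As a sanity check one can see the same statement without invoking Lemma~\ref{lb-CombNet2-lemmaW1}: the mapping $W' \mapsto [\mathbf{B}\mid\mathbf{T}]\, W'$ is linear, so $(W_1,W_2)$ is uniquely determined by $Y$ for every choice of messages precisely when this map is injective on ${\mathbb{F}_q}^{R_1+R_2}$, i.e., when its matrix has full column rank $R_1+R_2$. There is no substantive obstacle in this proof; it is a one-line specialization of the previous lemma, which is exactly why the excerpt labels it a ``trivial result.'' The only care needed is to observe that, unlike the setting of Lemma~\ref{lb-CombNet2-lemmaW1} where one distinguishes two message components and must rule out ``leakage'' between them, here the joint decodability requirement collapses that distinction and leaves a pure full-column-rank condition.
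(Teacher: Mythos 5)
Your reduction---redefine $\mathbf{B}' = [\mathbf{B}\mid\mathbf{T}]$, take $\mathbf{T}'$ to be empty, and invoke Lemma~\ref{lb-CombNet2-lemmaW1} with the disjointness condition vacuous---is exactly the argument the paper gives in the sentence preceding the corollary. The proposal is correct and matches the paper's approach.
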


Since every receiver sees a different subset of the sent symbols,  it becomes clear from Corollary \ref{lb-CombNet-cor-W1} and \ref{lb-CombNet-cor-W1W2} that  an admissible linear code needs to satisfy many rank constraints on its different submatrices.
In this section, our primary approach to the design of such codes is through zero-structured matrices, discussed next.
\subsection{Zero-structured matrices}
\label{CombNet-zero-structure}
\begin{definition}
A zero-structured matrix $\mathbf{T}$ is an $r\times c$ matrix with entries either zero or indeterminate\footnote{Although zero-structured matrices are defined in Definition \ref{lb-Comb-def-zerostructure} with zero or indeterminate variables, we also refer to the assignments of such matrices as zero-structured matrices.} (from a finite field ${\mathbb{F}_q}$) in a specific structure, as follows.
This matrix consists of $2^t \times 2^t$ blocks, where each block is indexed on rows and columns by the subsets of $\{1,\cdots,t\}$. Block $b_{(\mathcal{S}_1,\mathcal{S}_2)}$, $\mathcal{S}_1,\mathcal{S}_2 \subseteq \{1,\cdots,t\}$, is an ${r_{\mathcal{S}_1}\times c_{\mathcal{S}_2}}$ matrix. Matrix $\mathbf{T}$ is structured so that all entries in block $b_{(\mathcal{S}_1,\mathcal{S}_2)}$ are set to zero if $\mathcal{S}_1\not\subseteq \mathcal{S}_2$, and remain indeterminate otherwise. Note that $c=\sum_{\mathcal{S}} c_\mathcal{S}$ and $r=\sum_\mathcal{S} r_\mathcal{S}$.
\label{lb-Comb-def-zerostructure}
\end{definition}
Equation \eqref{lb-CombNet-structureB}, below, demonstrates this definition for $t=2$.
\begin{align}
\label{lb-CombNet-structureB}
& \begin{array}{c}\\\mathbf{T}=\end{array}\begin{array}{l}   
\begin{array}{cccc}\hspace{-.1cm}\stackrel{c_{\{1,2\}}}{\longleftrightarrow}&\hspace{-.4cm}\stackrel{\, c_{\{1\}}}{\leftrightarrow}&\hspace{-.4cm}\stackrel{\,c_{\{2\}}}{\leftrightarrow}&\hspace{-0.25cm}\stackrel{c_{\phi}}{\leftrightarrow}\end{array}\\
\left[\begin{array}{c|c|c|c}
\ \hspace{1.5cm}&0&0&0\\\hline
\hspace{1.5cm}&\hspace{.2cm}&0&0\\	\hline	   
\hspace{1.5cm}&0&\hspace{.2cm}&0\\\hline
\hspace{1.5cm}&\hspace{.2cm}&\hspace{.2cm}&\hspace{.2cm}
                  \end{array}\right]
	    \begin{array}{c}\updownarrow\\\updownarrow\\\updownarrow\\\updownarrow\end{array}
	   \hspace{-.3cm} {\tiny{\begin{array}{l}r_{\{1,2\}}\\\\ r_{\{1\}}\\\\r_{\{2\}}\\\\r_{\phi}\end{array}}}
\end{array}
\end{align}
The idea behind using zero-structred encoding matrices is the following: the zeros are inserted in the encoding matrix such that the linear combinations that are formed for the public receivers do not depend on ``too many" private information symbols (see Corollary \ref{lb-CombNet-cor-W1}).

In the rest of this subsection, we find conditions on zero-structured matrices so that they can be made full column rank. 

\begin{lemma}
\label{lb-CombNet-fullrank}
There exists an assignment of the indeterminates in the zero-structured matrix $\mathbf{T}\in{\mathbb{F}_q}^{r\times c}$  (as specified in Definition~\ref{lb-Comb-def-zerostructure}) that makes it full column rank, provided that
\begin{align}
\label{lb-CombNet-fullrankzero}
c\leq \sum_{\mathcal{S}\in \Lambda}c_\mathcal{S}+\sum_{\mathcal{S}\in \Lambda^c}r_\mathcal{S},\quad\quad{\forall \Lambda \subseteq 2^{\{1,\ldots,t\}} \text{ superset saturated}}.
\end{align}
\end{lemma}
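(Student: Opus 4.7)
The plan is to reformulate the existence of a full-column-rank assignment as a bipartite matching problem, verify Hall's condition, and then invoke a generic (Schwartz--Zippel) argument to produce the assignment over a sufficiently large field $\mathbb{F}_q$. First, I would observe that $\mathbf{T}$ can be made full column rank if and only if there exist $c$ rows of $\mathbf{T}$ and a bijection $\sigma$ from the $c$ columns to these rows such that the $(\sigma(j),j)$-entry is indeterminate (not a forced zero) for every column $j$. The reason is that the determinant of the selected $c\times c$ submatrix, expanded by the Leibniz formula, is a polynomial in the indeterminates whose monomials correspond exactly to such bijections; non-cancellation of at least one monomial means the polynomial is not identically zero, so by Schwartz--Zippel a random assignment over a large enough $\mathbb{F}_q$ yields a non-vanishing determinant. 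Equivalently, I need a perfect matching saturating all columns in the bipartite graph $G$ whose left vertices are the $c$ columns (grouped into blocks of sizes $c_{\mathcal{S}_2}$), whose right vertices are the $r$ rows (grouped into blocks of sizes $r_{\mathcal{S}_1}$), and whose edges join a column in block $\mathcal{S}_2$ to a row in block $\mathcal{S}_1$ precisely when $\mathcal{S}_1\subseteq \mathcal{S}_2$.

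Second, I would apply Hall's marriage theorem to $G$. Since all columns in the same block have identical neighborhoods, Hall's condition $|N(C)|\geq |C|$ needs only be checked when $C$ is a union of entire column blocks, say over a family $\Gamma\subseteq 2^{\{1,\ldots,t\}}$. For such $C$, $N(C)$ is the union of row blocks indexed by $\Gamma^\downarrow = \{\mathcal{S}_1 : \mathcal{S}_1\subseteq \mathcal{S}_2 \text{ for some } \mathcal{S}_2\in\Gamma\}$. Since the neighborhood depends only on $\Gamma^\downarrow$, while $|C|=\sum_{\mathcal{S}_2\in\Gamma}c_{\mathcal{S}_2}$ is largest when $\Gamma$ is taken to be $\Gamma^\downarrow$ itself, Hall's condition reduces to
\[
\sum_{\mathcal{S}\in\Delta} c_\mathcal{S} \;\leq\; \sum_{\mathcal{S}\in\Delta} r_\mathcal{S}\qquad\text{for every downward-closed $\Delta\subseteq 2^{\{1,\ldots,t\}}$.}
\]

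Third, I would note that a set $\Delta$ is downward closed if and only if its complement $\Lambda=\Delta^c$ is superset saturated (upward closed). Using $c=\sum_\mathcal{S} c_\mathcal{S}$ to rewrite $\sum_{\mathcal{S}\in\Lambda^c}c_\mathcal{S}=c-\sum_{\mathcal{S}\in\Lambda}c_\mathcal{S}$, the reduced Hall condition becomes exactly $c\leq \sum_{\mathcal{S}\in\Lambda}c_\mathcal{S}+\sum_{\mathcal{S}\in\Lambda^c}r_\mathcal{S}$ ranging over superset saturated $\Lambda$, which is the hypothesis \eqref{lb-CombNet-fullrankzero}. Hence the hypothesis implies Hall's condition, yielding the required matching, a non-identically-zero $c\times c$ minor of $\mathbf{T}$, and finally a full-column-rank assignment via Schwartz--Zippel over a large enough $\mathbb{F}_q$. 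The main piece of care is the bookkeeping in the second step: verifying that the extremal Hall subsets of columns are precisely those corresponding to downward-closed $\Delta$, whose complements are the superset-saturated families appearing in the statement.
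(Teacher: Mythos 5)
Your proof is correct and arrives at the same inequality, but it packages the combinatorics differently from the paper. You characterize generic full column rank of $\mathbf{T}$ directly via a perfect matching in the bipartite column-to-row support graph and then verify Hall's condition, observing that it suffices to check column subsets that are unions of blocks indexed by a downward-closed family, whose complements are exactly the superset-saturated $\Lambda$ in \eqref{lb-CombNet-fullrankzero}. The paper instead builds a four-layer unicast network (Fig.~\ref{FigCombinationNetworkBasicLemma2}), proves in Lemma~\ref{lb-CombNet-declem-2} that a full-column-rank assignment of $\mathbf{T}$ is equivalent to a rate-$c$ flow on that network, and then evaluates the min-cut in Lemma~\ref{lb-CombNet-mincut}; by max-flow/min-cut this is of course dual to your Hall's-theorem argument, so the combinatorial core is the same. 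Your route is arguably more elementary and self-contained (no auxiliary network), while the paper's network-coding phrasing dovetails with the rest of the development (it reuses the same min-cut computation in Appendix~\ref{ap-CombNet-lemmaW1W2} and elsewhere). One small improvement you could make: once you have the column-saturating matching $\sigma$, you need not invoke Schwartz--Zippel at all. Setting the entry $(\sigma(j),j)$ to $1$ and all other indeterminates to $0$ makes the corresponding $c\times c$ submatrix a permutation matrix, hence full rank over \emph{any} $\mathbb{F}_q$, which matches the paper's Appendix~\ref{ap-CombNet-declem-2} construction and avoids the ``sufficiently large $q$'' caveat that the Schwartz--Zippel step silently introduces.
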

For $t=2$, \eqref{lb-CombNet-fullrankzero} is given by
\begin{align}
&c\leq r_{\{1,2\}}+r_{\{1\}}+r_{\{2\}}+r_\phi\\
&c\leq c_{\{1,2\}}+r_{\{1\}}+r_{\{2\}}+r_\phi\\
&c\leq c_{\{1\}}+ c_{\{1,2\}}+r_{\{2\}}+r_\phi\\
&c\leq c_{\{2\}}+c_{\{1,2\}}+r_{\{1\}}+r_\phi\\
&c\leq c_{\{1\}}+c_{\{2\}}+c_{\{1,2\}}+r_\phi\\
&c\leq c_\phi+c_{\{1\}}+c_{\{2\}}+c_{\{1,2\}}.
\end{align}


We briefly outline the proof of Lemma \ref{lb-CombNet-fullrank}, because this line of argument is used later in Section \ref{lb-CombNet-MES}. For simplicity of notation and clarity of the proof, we give details of the proof for $t = 2$. The same proof technique proves the general case. 

Let  $\mathbf{T}\in {\mathbb{F}_q}^{r\times c}$ be a zero-structured matrix given by equation \eqref{lb-CombNet-structureB}.
First, we reduce the problem of matrix $\mathbf{T}$ being full column rank to an information flow problem over the equivalent unicast network of Fig.~\ref{FigCombinationNetworkBasicLemma2}. Then we find conditions for feasibility of the equivalent unicast problem. 
The former is stated in Lemma \ref{lb-CombNet-declem-2} and the latter is formulated in Lemma \ref{lb-CombNet-mincut}, both to follow.
\begin{figure}
\centering
\begin{tikzpicture}[scale=2.2]
\tikzstyle{every node}=[draw,shape=circle,minimum size=1cm,font=\small\itshape]; 
\path (0:0cm) node (v0) {$A$}; 
\path (20:.5cm) node [rectangle,draw=none] {source}; 

\path(180+30:1.8cm)  node (v12) {$n_{\{1,2\}}$}; 
\path(180+70:1.0cm)  node (v2) {$n_{\{2\}}$}; 
\path (180+110:1.0cm) node (v1) {$n_{\{1\}}$}; 
\path (180+150:1.8cm) node (vemp) {$n_\phi$};

\path (180+45:2.5cm)  node (u12) {$n^\prime_{\{1,2\}}$};
\path(180+70:1.87cm)  node (u2) {$n^\prime_{\{2\}}$}; 
\path(180+110:1.87cm)  node (u1) {$n^\prime_{\{1\}}$}; 
\path (180+135:2.5cm) node (uemp) {$n^\prime_{\phi}$};

\path (180+90:2.6cm) node (d) {$B$};
\path (-80:2.8cm) node[rectangle,draw=none] {sink};

\draw[->] (v0) -- node[left,draw=none,fill=none] {$c_{\{1,2\}}$}    (v12) ;

\draw[->] (v0) -- node[left,draw=none,fill=none] {$c_{\{2\}}$}    (v2) ;

\draw[->] (v0) -- node[left,draw=none,fill=none,xshift=.2cm] {$c_{\{1\}}$}    (v1) ;

\draw[->] (v0) -- node[left,draw=none,fill=none] {$c_\phi$}    (vemp) ;


\draw[->,line width=.07cm] (v12) -- (u12);
\draw[->,line width=.07cm] (v12) -- (u2);
\draw[->,line width=.07cm] (v12) -- (u1);
\draw[->,line width=.07cm] (v12) -- (uemp);
\draw[->,line width=.07cm] (v2) -- (u2);
\draw[->,line width=.07cm] (v2) -- (uemp);
\draw[->,line width=.07cm] (v1) -- (u1);
\draw[->,line width=.07cm] (v1) -- (uemp);
\draw[->,line width=.07cm] (vemp) -- (uemp);

\draw[->] (u12) -- node[left,draw=none,fill=none] {$r_{\{1,2\}}$}    (d) ;

\draw[->] (u2) -- node[left,draw=none,fill=none] {$r_{\{2\}}$}    (d) ;

\draw[->] (u1) -- node[left,draw=none,fill=none] {$r_{\{1\}}$}    (d) ;

\draw[->] (uemp) -- node[left,draw=none,fill=none] {$r_\phi$}    (d) ;

\end{tikzpicture}
\caption{The source $A$ communicates a message of rate $c=\sum_{\mathcal{S}}c_\mathcal{S}$ to the sink  $B$ over a unicast network. The capacity of each edge is marked beside it. The bold edges are of infinite capacity.
The network is tailored so that the mixing of the information which happens at the third layer mimics the same mixing of the information that is present in the rows of matrix $\mathbf{T}$ in \eqref{lb-CombNet-structureB}. }
\label{FigCombinationNetworkBasicLemma2}
\end{figure}

The equivalent unicast network of Fig.~\ref{FigCombinationNetworkBasicLemma2} is formed as follows.
The network is a four-layer directed network with a source node $A$ in the first layer, four (in general $2^t$) nodes $n_\mathcal{S}$, $\mathcal{S}\subseteq \{1,\ldots,t\}$, in the second layer, another four (in general $2^t$) nodes $n^\prime_\mathcal{S}$, $\mathcal{S}\subseteq \{1,\ldots,t\}$, in the third layer, and finally a sink node $B$ in the fourth layer. 
The source wants to communicate a message of rate $c$ to the sink.
We have $c_\mathcal{S}$ (unit capacity) edges from the source $A$ to each node $n_\mathcal{S}$. Also, we have $r_\mathcal{S}$ (unit capacity) edges from  each node $n^\prime_\mathcal{S}$ to the sink  $B$. The edges from the second layer to the third layer are  of infinite capacity and they connect each node $n_\mathcal{S}$ to all nodes $n^\prime_{\mathcal{S}^\prime}$ where $\mathcal{S}^\prime\subseteq \mathcal{S}$. The equivalent unicast network is tailored so that the mixing of the information which happens at each node $n^\prime_\mathcal{S}$ (at the third layer) mimics the same mixing of the information that is present in the rows of matrix $\mathbf{T}$.
This equivalence is discussed formally in Lemma \ref{lb-CombNet-declem-2} and its proof is deferred to Appendix \ref{ap-CombNet-declem-2}.
\begin{lemma}
\label{lb-CombNet-declem-2}
Given the zero-structured matrix $\mathbf{T}\in{\mathbb{F}_q}^{r\times c}$ in \eqref{lb-CombNet-structureB}, the following two statements are equivalent. 
\begin{enumerate}
\item[$(i)$] There exists an assignment of variables in $\mathbf{T}$ that makes it full column rank.
\item[$(ii)$] A message of rate $c$ could be sent over its equivalent unicast network in Fig.~\ref{FigCombinationNetworkBasicLemma2}.
\end{enumerate}
\end{lemma}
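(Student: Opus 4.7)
The plan is to establish a direct correspondence between linear network codes on the unicast network of Fig.~\ref{FigCombinationNetworkBasicLemma2} and assignments of the indeterminates in the zero-structured matrix $\mathbf{T}$, in such a way that the sink $B$ decodes all $c$ source symbols iff the induced assignment makes $\mathbf{T}$ full column rank. The key structural fact to exploit is that $n'_{\mathcal{S}_1}$ has an incoming (infinite-capacity) edge from $n_{\mathcal{S}_2}$ precisely when $\mathcal{S}_1\subseteq\mathcal{S}_2$, which is exactly the condition for the block $b_{(\mathcal{S}_1,\mathcal{S}_2)}$ of $\mathbf{T}$ to be indeterminate (and not forced to zero).

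For the direction $(i)\Rightarrow(ii)$, I would take a full-column-rank assignment of $\mathbf{T}$ and run it forward as a linear network code. Source $A$ generates $c$ independent source symbols, partitioned into groups of size $c_\mathcal{S}$, and sends the $\mathcal{S}$-group on the $c_\mathcal{S}$ edges to $n_\mathcal{S}$; this defines the column-block of $W$ indexed by $\mathcal{S}$. Each $n_\mathcal{S}$ broadcasts its incoming symbols, unchanged, on each of its (infinite-capacity) outgoing edges to every $n'_{\mathcal{S}'}$ with $\mathcal{S}'\subseteq\mathcal{S}$. At $n'_{\mathcal{S}_1}$, form $r_{\mathcal{S}_1}$ linear combinations of the incoming symbols using the coefficients prescribed by the $\mathcal{S}_1$-row-block of $\mathbf{T}$, and send these as the $r_{\mathcal{S}_1}$ edges into $B$. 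The vector received at $B$ is exactly $\mathbf{T}W$, whose columns span $\mathbb{F}_q^r$ restricted to a subspace of dimension $c$, so $B$ recovers $W$ and a rate of $c$ is achieved.

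For $(ii)\Rightarrow(i)$, I would use the fact that a rate-$c$ single-unicast transmission admits a linear solution over a sufficiently large finite field (by the classical linear network coding theorem). In any such solution, each edge carries a fixed linear combination of the $c$ source symbols, and decodability at $B$ means the $r$ linear combinations arriving on the incoming edges of $B$ have coefficient matrix of rank $c$. Organize this coefficient matrix so that columns are grouped by which node $n_\mathcal{S}$ the source symbols first entered (giving the $c_\mathcal{S}$-sized column-blocks) and rows are grouped by which $n'_{\mathcal{S}_1}$ the edge emanates from (giving the $r_{\mathcal{S}_1}$-sized row-blocks). Because $n'_{\mathcal{S}_1}$ has no path from $n_{\mathcal{S}_2}$ when $\mathcal{S}_1\not\subseteq\mathcal{S}_2$, the corresponding block must be identically zero. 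Hence the coefficient matrix is a valid assignment of $\mathbf{T}$, and it is full column rank by decodability.

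The two directions are genuinely symmetric, so neither should pose a serious technical obstacle; the only subtlety is keeping the bookkeeping of indices straight and verifying that the zero-pattern coming from the network topology matches the zero-pattern of Definition~\ref{lb-Comb-def-zerostructure} exactly. A minor point is the choice of field: assignments of $\mathbf{T}$ in the lemma are over a fixed $\mathbb{F}_q$, but linear unicast codes are guaranteed to exist only over a sufficiently large field, so the equivalence should be read as "there exists $q$" on both sides, consistent with how the lemma is used elsewhere in the paper.
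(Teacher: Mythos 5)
Your $(i)\Rightarrow(ii)$ direction is essentially the paper's: interpret the full-column-rank assignment of $\mathbf{T}$ as the global coding matrix of a linear network code on the unicast network (source edges carry uncoded symbols, the row-blocks of $\mathbf{T}$ are the local encoding maps at the $n'_{\mathcal{S}}$ nodes), and the sink sees $\mathbf{T}W$ of rank $c$. For $(ii)\Rightarrow(i)$ you take a genuinely different route. You invoke the linear network coding theorem for single unicast, read off the global coefficient matrix at $B$, and note that the topology forces precisely the zero-pattern of Definition~\ref{lb-Comb-def-zerostructure}. That is correct but heavier than needed, and it is what forces your closing field-size caveat, which sits uneasily with the lemma as stated ($\mathbf{T}\in\mathbb{F}_q^{r\times c}$ with $q$ fixed). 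The paper's proof avoids the issue entirely: since the problem is single unicast, max-flow/min-cut (Menger) gives $c$ edge-disjoint $A$--$B$ paths; these induce a matching of size $c$ between the source-side unit-capacity edges (columns of $\mathbf{T}$) and the sink-side unit-capacity edges (rows of $\mathbf{T}$), and a matched pair can only occur at a position where the block is indeterminate. Placing $1$s at the matched positions and $0$s elsewhere yields a valid assignment of $\mathbf{T}$ containing a $c\times c$ permutation submatrix, hence full column rank, and this is a $\{0,1\}$ matrix valid over every $\mathbb{F}_q$ with $q\ge 2$ --- no ``sufficiently large field'' is required. If you replace your appeal to generic linear network coding with this routing/matching observation, your argument becomes both simpler and exactly consistent with the fixed $q$ in the lemma's statement.
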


Using Lemma \ref{lb-CombNet-declem-2},  $\mathbf{T}$ could be made full-rank if $c$ is less than or equal to the min-cut between nodes $A$ and $B$ over the equivalent unicast network. The min-cut between $A$ and $B$ is given by Lemma \ref{lb-CombNet-mincut} and the proof is delegated to Appendix \ref{ap-CombNet-mincut}. 
\begin{lemma}
\label{lb-CombNet-mincut}
The min-cut separating nodes $A$ and $B$ over the network of Fig.~\ref{FigCombinationNetworkBasicLemma2} is given by 
\begin{align}
\min_{\substack{\Lambda \subseteq 2^{\mathcal{I}_1}\\ \Lambda\text{ superset saturated}}}\sum_{\mathcal{S}\in \Lambda}c_\mathcal{S}+\sum_{\mathcal{S}\in \Lambda^c}r_\mathcal{S}.
\end{align}
\end{lemma}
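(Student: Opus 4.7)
My plan is to apply the standard max-flow/min-cut formula to the three-layer network of Fig.~\ref{FigCombinationNetworkBasicLemma2} and then argue that the optimal cut is parameterized exactly by the superset-saturated subsets of $2^{\{1,\ldots,t\}}$. Concretely, since the middle-layer edges (between $n_\mathcal{S}$ and $n'_{\mathcal{S}'}$ for $\mathcal{S}'\subseteq\mathcal{S}$) have infinite capacity, any finite $A$--$B$ cut must be composed only of the first-layer edges ($A\to n_\mathcal{S}$, total capacity $c_\mathcal{S}$ each) and the third-layer edges ($n'_{\mathcal{S}'}\to B$, total capacity $r_{\mathcal{S}'}$ each). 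I will therefore describe each such cut by two collections: let $U\subseteq 2^{\{1,\ldots,t\}}$ index those second-layer nodes $n_\mathcal{S}$ that remain on the $A$-side, and let $V\subseteq 2^{\{1,\ldots,t\}}$ index those third-layer nodes $n'_{\mathcal{S}'}$ that remain on the $A$-side. The cut cost is then
\begin{align*}
\mathrm{cost}(U,V) \;=\; \sum_{\mathcal{S}\notin U} c_\mathcal{S} \;+\; \sum_{\mathcal{S}'\in V} r_{\mathcal{S}'}.
\end{align*}

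Next I will translate the ``no infinite edge crosses the cut'' requirement. An infinite edge goes from $n_\mathcal{S}$ (kept on the $A$-side) to $n'_{\mathcal{S}'}$ (moved to the $B$-side) whenever $\mathcal{S}\in U$ and $\mathcal{S}'\subseteq \mathcal{S}$ but $\mathcal{S}'\notin V$. Avoiding this forces
\begin{align*}
V \;\supseteq\; \bigcup_{\mathcal{S}\in U} 2^{\mathcal{S}} \;=:\; V_{\min}(U),
\end{align*}
i.e.\ $V$ must contain the downward closure of $U$. Since $r_{\mathcal{S}'}\geq 0$, the optimal choice of $V$ for a fixed $U$ is $V=V_{\min}(U)$, so the min-cut equals $\min_{U}\bigl[\sum_{\mathcal{S}\notin U}c_\mathcal{S}+\sum_{\mathcal{S}'\in V_{\min}(U)}r_{\mathcal{S}'}\bigr]$.

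The main step will be to show that the optimal $U$ can always be chosen to be downward-closed (equivalently, that the complement $\Lambda:=U^c$ can be chosen superset saturated). Given any $U$, let $U'=V_{\min}(U)$ be its downward closure. Then $U'\supseteq U$, so $\sum_{\mathcal{S}\notin U'}c_\mathcal{S}\leq \sum_{\mathcal{S}\notin U}c_\mathcal{S}$. Moreover, since the downward closure is idempotent, $V_{\min}(U')=U'=V_{\min}(U)$, so the $r$-term is unchanged. Hence replacing $U$ by $U'$ only decreases (weakly) the cost, and we may restrict the minimisation to downward-closed $U$'s, in which case $V_{\min}(U)=U$ and the cost becomes
\begin{align*}
\sum_{\mathcal{S}\notin U} c_\mathcal{S} \;+\; \sum_{\mathcal{S}'\in U} r_{\mathcal{S}'}.
\end{align*}

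Finally, I will conclude by the bijection $U\leftrightarrow \Lambda=U^c$: $U$ is downward-closed iff $\Lambda$ is superset saturated, and the cost rewrites as $\sum_{\mathcal{S}\in\Lambda}c_\mathcal{S}+\sum_{\mathcal{S}\in\Lambda^c}r_\mathcal{S}$, yielding the formula in the statement. The only non-routine step is the closure argument of the third paragraph; once one sees that the infinite middle edges couple the two sides of the cut precisely through the downward-closure operation, the reduction to superset-saturated $\Lambda$ is forced.
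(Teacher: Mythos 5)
Your proposal is correct and follows essentially the same route as the paper's: both observe that the infinite middle edges keep a finite cut to the first and third layers, parameterize cuts by which nodes stay on the $A$-side, note that the infinite edges force the third-layer $A$-side set to contain the downward closure of the second-layer $A$-side set, and then show the optimal second-layer side can itself be taken downward-closed, whose complement is exactly a superset-saturated $\Lambda$. You flesh out the closure/idempotence step that the paper leaves as "it is not difficult to verify," but the underlying argument is the same.
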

\noindent Lemma \ref{lb-CombNet-declem-2} and Lemma \ref{lb-CombNet-mincut} give conditions for matrix $\mathbf{T}$ to become full rank. The proof for $t>2$ is along the same lines.

\subsection{Zero-structured linear codes: an achievable rate-region}
\label{lb-CombNet-ZES}
In our initial approach, we design the encoding matrix $\mathbf{A}$ to be  zero-structured. The idea is to have the resource symbols that are available to the public receivers  not depend on ``too many" private message variables.  
Equation \eqref{lb:CombNet:structure} below shows such an encoding matrix for two public and any number of private receivers.
\begin{align}
\label{lb:CombNet:structure}
 \begin{array}{c}\\\mathbf{A}=\end{array}\begin{array}{l}   
\begin{array}{ccccc}\hspace{.1cm}\stackrel{\ \hspace{.25cm}R_1\hspace{.25cm}\ }{\longleftrightarrow}&\hspace{-.3cm}\stackrel{\alpha_{\{1,2\}}}{\longleftrightarrow}&\hspace{-.3cm}\stackrel{\alpha_{\{2\}}}{\leftrightarrow}&\hspace{-.35cm}\stackrel{\alpha_{\{1\}}}{\leftrightarrow}&\hspace{-0.25cm}\stackrel{\alpha_{\phi}}{\leftrightarrow}\end{array}\\
\left[\begin{array}{c|c|c|c|c}
                   \hspace{.8cm}\ &\ \hspace{.4cm}\ &0&0&0\\\hline
		    \hspace{.8cm}\ &\ \hspace{.4cm}\ &\hspace{.2cm}&0&0\\	\hline	   
		    \hspace{.8cm}\ &\ \hspace{.4cm}\ &0&\hspace{.2cm}&0\\\hline
		    \hspace{.8cm}\ &\ \hspace{.4cm}\ &\hspace{.2cm}&\hspace{.2cm}&\hspace{.2cm}
                  \end{array}\right]
	    \begin{array}{c}\updownarrow\\\updownarrow\\\updownarrow\\\updownarrow\end{array}
	   \hspace{-.3cm} {\tiny{\begin{array}{l}|\e_{\{1,2\}}|\\\\ |\e_{\{2\}}|\\\\|\e_{\{1\}}|\\\\|\e_{\phi}|\end{array}}}.
\end{array}
\end{align}
Here, the non-zero entries are all indeterminate and to be designed appropriately. Also, parameters $\alpha_{\{1,2\}}$, $\alpha_{\{2\}}$, $\alpha_{\{1\}}$ and $\alpha_\phi$ are non-negative structural parameters, and they satisfy
\begin{align}
R_2=\alpha_{\{1,2\}}+\alpha_{\{2\}}+\alpha_{\{1\}}+\alpha_{\phi}.
\end{align}

In effect, matrix $\mathbf{A}$ splits message $W_2$ into four independent messages, $W_2^{\{1,2\}}$, $W_2^{\{2\}}$, $W_2^{\{1\}}$, $W_2^{\phi}$, of rates $\alpha_{\{1,2\}}$, $\alpha_{\{2\}}$, $\alpha_{\{1\}}$, $\alpha_{\phi}$, respectively. The zero structure of $\mathbf{A}$ ensures that only messages $W_1$, $W_2^{\{1,2\}}$ and $W_2^{\{1\}}$ are involved in the linear combinations that are received at public receiver $1$ and that only messages $W_1$, $W_2^{\{1,2\}}$ and $W_2^{\{2\}}$ are involved in the linear combinations that are received at public receiver $2$.
In general, matrix $\mathbf{A}$ splits message $W_2$ into independent messages $W_2^\mathcal{S}$ of rates $\alpha_\mathcal{S}$, $\mathcal{S}\subseteq \mathcal{I}_1$, such that
\begin{align}
\sum_{\mathcal{S}\subseteq \mathcal{I}_1} \alpha_{\mathcal{S}}=R_2.
\end{align}
Note that the zero-structure allows messages $W_2^\mathcal{S}$ to be involved (only) in the linear combinations that are sent over resources in $\mathcal{E}_{\mathcal{S}^\prime}$ where $\mathcal{S}^\prime\subseteq \mathcal{S}$. When referring to a zero-structured encoding matrix $\mathbf{A}$, we also specify the rate-split parameters $\alpha_\mathcal{S}$, $\mathcal{S}\subseteq \mathcal{I}_1$.
\begin{remark}
\label{lb-CombNet-alphapos}
As defined above, parameters $\alpha_\mathcal{S}$, $\mathcal{S}\subseteq \mathcal{I}_1$, are assumed to be integer-valued. Nonetheless, one can let these parameters be real and approximately attain them by encoding over blocks of large enough length.
\end{remark}

Conditions under which all receivers can decode their messages of interest are as follow:
\begin{itemize}
\item \textbf{Public receiver $i\in \mathcal{I}_1$:}  $Y_i$ is the vector of all the symbols  that are available to receiver $i$. Using the 
zero-structure of $\mathbf{A}$
in \eqref{lb:CombNet:structure}, we have
\begin{align}
Y_2&=\left[\begin{array}{l}X_{\{1,2\}}\\X_{\{2\}}\end{array}\right]\\&=\begin{array}{l}\begin{array}{ccccc}\hspace{0cm}\stackrel{\ \hspace{.25cm}R_1\hspace{.25cm}\ }{\longleftrightarrow}&\hspace{-.2cm}\stackrel{\alpha_{\{1,2\}}}{\longleftrightarrow}&\hspace{-.3cm}\stackrel{\alpha_{\{2\}}}{\leftrightarrow}&\hspace{-.35cm}\stackrel{\alpha_{\{1\}}}{\leftrightarrow}&\hspace{-0.25cm}\stackrel{\alpha_{\phi}}{\leftrightarrow}\end{array}\\
\left[\begin{array}{c|c|c|c|c}
                  \hspace{.8cm}\ &\ \hspace{.4cm}\ &0&0&0\\\hline
		   \hspace{.8cm}\ &\ \hspace{.4cm}\ &\hspace{.2cm}&0&0
                  \end{array}\right]\begin{array}{c}\updownarrow\\\updownarrow\end{array}
	   \hspace{-.3cm} {\tiny{\begin{array}{l}|\e_{\{1,2\}}|\vspace{.15cm}\\ |\e_{\{2\}}|\end{array}}}
\end{array} \begin{array}{c}\\\hspace{-.5cm}\cdot\hspace{-.1cm}\left[\begin{array}{c}W_1\\W_2\end{array}\right].\end{array}
\label{lb-CombNet-structureAi}
\end{align}
Generally, $Y_i$ is given by 
\begin{align}
Y_i=\mathbf{A}_iW,
\end{align} where $\mathbf{A}_i$ is a submatrix of  $\mathbf{A}$ corresponding to $X_\mathcal{S}$, $\mathcal{S}\ni i$. It has at most $R_1+\sum_{\substack{\mathcal{S}\subseteq \mathcal{I}_1\ \mathcal{S}\ni i}}\alpha_{\mathcal{S}}$ non-zero columns. 
We relate decodability of message $W_1$ at receiver $i$ to a particular submatrix of $\mathbf{A}_i$ being full column rank. 
Let 
\begin{align}
\mathbf{A}_i=\left[\mathbf{B}^{(i)}|\mathbf{T}^{(i)}\right].
\end{align} 
Choose $\mathbf{L}^{(i)}_{1}$ to be a largest submatrix of the columns of $\mathbf{T}^{(i)}$ that could be made full column rank (over all possible assignments).
Define
\begin{align}
\mathbf{G}^{(i)}=\left[\begin{array}{c|c}\mathbf{B}^{(i)}&\mathbf{L}^{(i)}_{1}\end{array}\right].
\end{align} 
We have the following two lemmas.
\begin{lemma}
\label{lem-Lbuilt}
Each public receiver $i$ can decode $W_1$ from \eqref{lb-CombNet-structureAi} 
if $\mathbf{G}^{(i)}$, as defined above, is full column rank.
\end{lemma}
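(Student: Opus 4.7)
The plan is to reduce the decodability claim to the rank/disjointness criterion of Lemma~\ref{lb-CombNet2-lemmaW1}, by exploiting the maximality that is built into the definition of $\mathbf{L}^{(i)}_1$.

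First I would show that, under any assignment making $\mathbf{G}^{(i)}=[\mathbf{B}^{(i)}\,|\,\mathbf{L}^{(i)}_1]$ full column rank, the column space of $\mathbf{L}^{(i)}_1$ coincides with the column space of $\mathbf{T}^{(i)}$. Let $k$ denote the number of columns of $\mathbf{L}^{(i)}_1$, which by hypothesis is the largest size of a subset of columns of $\mathbf{T}^{(i)}$ that can simultaneously be made linearly independent. Under the given assignment, the $k$ columns of $\mathbf{L}^{(i)}_1$ are in particular linearly independent (being part of the full-column-rank matrix $\mathbf{G}^{(i)}$), so $\operatorname{rank}(\mathbf{T}^{(i)})\ge k$. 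On the other hand, $\operatorname{rank}(\mathbf{T}^{(i)})\le k$, because otherwise one could pick $k+1$ linearly independent columns of $\mathbf{T}^{(i)}$ under this assignment, contradicting the maximality of $k$. Hence $\operatorname{rank}(\mathbf{T}^{(i)})=k$, and the $k$ independent columns in $\mathbf{L}^{(i)}_1$ span the column space of $\mathbf{T}^{(i)}$.

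Next I would invoke the assumption that $\mathbf{G}^{(i)}$ has full column rank. This simultaneously gives (i) $\operatorname{rank}(\mathbf{B}^{(i)})=R_1$, and (ii) the column space of $\mathbf{B}^{(i)}$ meets the column space of $\mathbf{L}^{(i)}_1$ only in the zero vector. Combining (ii) with the previous paragraph, the column space of $\mathbf{B}^{(i)}$ is disjoint from the column space of $\mathbf{T}^{(i)}$. Both hypotheses of Lemma~\ref{lb-CombNet2-lemmaW1}, applied to the block decomposition $\mathbf{A}_i=[\mathbf{B}^{(i)}\,|\,\mathbf{T}^{(i)}]$ of the received-signal matrix in \eqref{lb-CombNet-structureAi}, are thereby satisfied, so $W_1$ is recoverable from $Y_i=\mathbf{A}_iW$.

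The only step that needs a little care is the ``rank-jump'' argument in the first paragraph, since one could initially worry that under a specific assignment making $\mathbf{G}^{(i)}$ full rank the matrix $\mathbf{T}^{(i)}$ might acquire rank strictly larger than $k$; the maximality of $\mathbf{L}^{(i)}_1$ rules this out, and no further genericity argument is required. The remainder is a direct application of Lemma~\ref{lb-CombNet2-lemmaW1}, so I do not expect any real obstacle.
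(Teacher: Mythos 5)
Your proof is correct and follows essentially the same route as the paper's: reduce to the rank/disjointness criterion of Lemma~\ref{lb-CombNet2-lemmaW1}, then use the maximality in the definition of $\mathbf{L}^{(i)}_1$ to conclude that, under the assignment making $\mathbf{G}^{(i)}$ full column rank, the column space of $\mathbf{L}^{(i)}_1$ equals that of $\mathbf{T}^{(i)}$, so disjointness of the $\mathbf{B}^{(i)}$ and $\mathbf{L}^{(i)}_1$ column spaces transfers to $\mathbf{T}^{(i)}$. You simply spell out the ``rank cannot exceed $k$ under this assignment, else a larger $\mathbf{L}^{(i)}_1$ would exist'' step a bit more carefully than the paper's one-sentence parenthetical.
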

\begin{IEEEproof}
Conditions for decodability of $W_1$ are given in Lemma \ref{lb-CombNet2-lemmaW1} (and its following argument). By the manner $\mathbf{L}^{(i)}_1$ and $\mathbf{G}^{(i)}$ are defined, whenever $\mathbf{G}^{(i)}$ is full column rank, not only $\mathbf{B}^{(i)}$ is full column rank, but also  columns of $\mathbf{L}^{(i)}_1$ span all the column space of $\mathbf{T}^{(i)}$ (for all possible assignments-- otherwise a larger $\mathbf{L}^{(i)}$ would have been chosen) and the span of the column space of $\mathbf{T}^{(i)}$ is thus disjoint from that of $\mathbf{B}^{(i)}$.
\end{IEEEproof}
\begin{lemma}
\label{lb-lemGifull}
For each public receiver $i$, there exists an assignment of $\mathbf{A}$ (specific to $i$) such that $\mathbf{G}^{(i)}$ is full column rank, provided that
\begin{eqnarray}
\label{lb-lemGifull-cond}
R_1+\sum_{\substack{\mathcal{S}\subseteq \mathcal{I}_1\\ \mathcal{S}\ni i}}\alpha_{\mathcal{S}}\leq \sum_{\substack{\mathcal{S}\subseteq \mathcal{I}_1\\ \mathcal{S}\ni i}}|\mathcal{E}_{\mathcal{S}}|.
\end{eqnarray}
\end{lemma}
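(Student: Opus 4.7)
The plan is to extend the equivalent-unicast-network framework of Lemmas~\ref{lb-CombNet-declem-2} and~\ref{lb-CombNet-mincut} so that it also covers the $\mathbf{B}^{(i)}$ block of $\mathbf{A}_i$. Because the $W_1$ columns carry no zero pattern in $\mathbf{A}_i$, I would treat them as zero-structured columns indexed by the full set $\mathcal{I}_1$; with this convention both $\mathbf{A}_i$ and its submatrix $\mathbf{G}^{(i)}$ fit Definition~\ref{lb-Comb-def-zerostructure} on the row index $\{\mathcal{S}\subseteq\mathcal{I}_1:\mathcal{S}\ni i\}$, so the flow tools developed above apply verbatim.

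First I would identify the number of columns $\beta$ of $\mathbf{L}^{(i)}_1$, which by construction equals the maximum attainable rank of the zero-structured matrix $\mathbf{T}^{(i)}$. Applying the max-flow/min-cut generalisation of Lemmas~\ref{lb-CombNet-declem-2}--\ref{lb-CombNet-mincut} to $\mathbf{T}^{(i)}$ yields
\[
\beta=\min_{\Lambda}\Bigl[\sum_{\mathcal{S}\in\Lambda}\alpha_{\mathcal{S}}+\sum_{\mathcal{S}\in\Lambda^c}|\mathcal{E}_{\mathcal{S}}|\Bigr],
\]
the minimum being over superset-saturated $\Lambda\subseteq\{\mathcal{S}\subseteq\mathcal{I}_1:\mathcal{S}\ni i\}$. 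Next I would form the equivalent unicast network for $\mathbf{G}^{(i)}$ by taking the sub-network of Fig.~\ref{FigCombinationNetworkBasicLemma2} that keeps only the nodes $n_\mathcal{S},n'_\mathcal{S}$ with $\mathcal{S}\ni i$ and adding an extra second-layer node $n_{W_1}$ that receives $R_1$ unit-capacity edges from $A$ and sends infinite-capacity edges to every $n'_\mathcal{S}$. Existence of an assignment making $\mathbf{G}^{(i)}$ full column rank then reduces to routing $R_1+\beta$ units of flow from $A$ to $B$ in this augmented network.

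The min-cut calculation splits into two cases. If $n_{W_1}$ sits on $A$'s side, the infinite edges force every $n'_\mathcal{S}$ to $A$'s side, so the cut is at least $\sum_{\mathcal{S}\ni i}|\mathcal{E}_\mathcal{S}|$, which by \eqref{lb-lemGifull-cond} is at least $R_1+\sum_{\mathcal{S}\ni i}\alpha_\mathcal{S}\ge R_1+\beta$. If $n_{W_1}$ sits on $B$'s side, the edges $A\to n_{W_1}$ contribute $R_1$ and the remaining portion of the cut is analysed exactly as in the proof of Lemma~\ref{lb-CombNet-mincut}, giving $R_1+\sum_{\mathcal{S}\in\Lambda}\alpha_\mathcal{S}+\sum_{\mathcal{S}\in\Lambda^c}|\mathcal{E}_\mathcal{S}|\ge R_1+\beta$ for some superset-saturated $\Lambda$. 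Either way the min cut is at least $R_1+\beta$, so by linear unicast network coding (or the obvious extension of Lemma~\ref{lb-CombNet-declem-2}) there is an assignment of the indeterminates of $\mathbf{A}$ that realises this flow as a full-column-rank $\mathbf{G}^{(i)}$. The delicate point, and the reason I would work with the augmented network rather than analyse $\mathbf{B}^{(i)}$ and $\mathbf{L}^{(i)}_1$ separately, is that the events ``$\mathbf{L}^{(i)}_1$ full column rank'' and ``$[\mathbf{B}^{(i)}\mid\mathbf{L}^{(i)}_1]$ full column rank'' must hold under a common assignment of the indeterminates; the infinite-capacity edges from $n_{W_1}$ to every row-node $n'_\mathcal{S}$ encode exactly this coupling, since they mirror the absence of any zero-pattern on the $\mathbf{B}^{(i)}$ columns.
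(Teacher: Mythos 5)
Your proof is correct in outline but takes a genuinely different and considerably heavier route than the paper. The paper's proof is a two-line direct argument: first assign the indeterminates of $\mathbf{L}^{(i)}_1$ to make it full column rank (possible by the definition of $\mathbf{L}^{(i)}_1$); then, observing that the entries of $\mathbf{B}^{(i)}$ are a disjoint set of free indeterminates (they come from the $W_1$-columns of $\mathbf{A}$, whereas $\mathbf{L}^{(i)}_1$ comes from the $W_2$-columns), choose the $R_1$ columns of $\mathbf{B}^{(i)}$ to be any vectors in $\mathbb{F}_q^{r_i}$ that extend the column basis of $\mathbf{L}^{(i)}_1$. This is possible precisely because $\mathbf{L}^{(i)}_1$ has at most $\sum_{\mathcal{S}\ni i}\alpha_\mathcal{S}\le r_i-R_1$ columns by \eqref{lb-lemGifull-cond}, which is the only place the hypothesis enters.

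Your concluding remark about the ``delicate point'' is what sends you down the flow-network road, and it rests on a misapprehension: the events ``$\mathbf{L}^{(i)}_1$ full column rank'' and ``$\mathbf{G}^{(i)}$ full column rank'' do not compete for indeterminates, because $\mathbf{B}^{(i)}$ and $\mathbf{T}^{(i)}$ use disjoint sets of variables in $\mathbf{A}$; a common assignment is obtained trivially by assigning them sequentially. Your augmented-network argument can be made to work, but it needs more care than you give it: Lemma~\ref{lb-CombNet-declem-2} as stated links a flow of value $c$ (the \emph{total} number of columns) to full column rank, whereas here you want to route $R_1+\beta$ units through a network that has $R_1+\sum_{\mathcal{S}\ni i}\alpha_\mathcal{S}$ potential column-edges. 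To conclude that the $R_1+\beta$ independent columns you obtain are exactly those of $\mathbf{G}^{(i)}$, you must additionally argue that (i) at most $\beta$ units of flow can pass through the $n_{\mathcal{S}}$ nodes (so the flow decomposition routes exactly $R_1$ units through $n_{W_1}$ and thus uses every $\mathbf{B}^{(i)}$ column), and (ii) the $\beta$ column-paths through the $n_{\mathcal{S}}$ nodes pick out a maximal linearly-independent subset of $\mathbf{T}^{(i)}$'s columns, which may then be identified with (a valid choice of) $\mathbf{L}^{(i)}_1$. With those two observations your argument closes; without them the claim ``realises this flow as a full-column-rank $\mathbf{G}^{(i)}$'' is a gap. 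What your route buys is an explicit min-cut characterisation of $\beta$, which the paper never needs; what it costs is the extra machinery and the coupling argument that the paper sidesteps entirely by exploiting variable disjointness.
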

\begin{IEEEproof}
The proof is deferred to Appendix \ref{ap-lemGifull}.
\end{IEEEproof}


 \item \textbf{Private receiver $p\in \mathcal{I}_2$:} $Y_p$ is the  vector of all the symbols that are carried by the resources in  
$\mathcal{E}_{\mathcal{S},p}$, $\mathcal{S}\subseteq \mathcal{I}_1$. 
We have
\begin{align}
\label{lb-CombNet-structureAp}
Y_p&=\left[\begin{array}{l}X^p_{\{1,2\}}\\X^p_{\{2\}}\\X^p_{\{1\}}\\X^p_{\phi}\end{array}\right]\\&=\begin{array}{l}\begin{array}{ccccc}\hspace{0cm}\stackrel{\ \hspace{.25cm}R_1\hspace{.25cm}\ }{\longleftrightarrow}&\hspace{-.2cm}\stackrel{\alpha_{\{1,2\}}}{\longleftrightarrow}&\hspace{-.3cm}\stackrel{\alpha_{\{2\}}}{\leftrightarrow}&\hspace{-.35cm}\stackrel{\alpha_{\{1\}}}{\leftrightarrow}&\hspace{-0.25cm}\stackrel{\alpha_{\phi}}{\leftrightarrow}\end{array}\vspace{-.2cm}\\
\left[\begin{array}{c|c|c|c|c}
                   \hspace{.8cm}\ &\ \hspace{.4cm}\ &0&0&0\\\hline
		    \hspace{.8cm}\ &\ \hspace{.4cm}\ &\hspace{.2cm}&0&0\\\hline
		    \hspace{.8cm}\ &\ \hspace{.4cm}\ &0&\hspace{.2cm}&0\\\hline
		    \hspace{.8cm}\ &\ \hspace{.4cm}\ &\hspace{.2cm}&\hspace{.2cm}&\hspace{.2cm}
                  \end{array}\right] \begin{array}{c}\updownarrow\\\updownarrow\\\updownarrow\\\updownarrow\end{array}
	   \hspace{-.3cm} {\tiny{\begin{array}{l}\\|\e_{\{1,2\},p}|\vspace{.2cm}\\ \vspace{.2cm}|\e_{\{2\},p}|\\\vspace{.2cm}|\e_{\{1\},p}|\\\vspace{.2cm}|\e_{\phi,p}|\end{array}}}
\end{array}\begin{array}{c}\\\hspace{-.6cm}\cdot\hspace{0cm}\left[\begin{array}{c}W_1\\W_2\end{array}\right].\end{array}
\end{align}
Generally,  $Y_p$ is given by $Y_p=\mathbf{A}_pW$, where  $\mathbf{A}_p$ is the submatrix of the rows of $\mathbf{A}$ that corresponds to $X_{\mathcal{S},p}$, $\mathcal{S}\subseteq \mathcal{I}_1$. Note that $\mathbf{A}_p$ is a zero-structured matrix.
Messages $W_1,W_2$  are decodable at private receiver $p$ if and only if matrix $\mathbf{A}_p$ is full column rank.  From Lemma \ref{lb-CombNet-fullrank}, an assignment of $\mathbf{A}_p$ exists that makes it full column rank  provided that the following inequalities hold:
\begin{align}
&R_2\leq \sum_{\substack{\mathcal{S}\in \Lambda}}\alpha_\mathcal{S}+\sum_{\substack{\mathcal{S}\in \Lambda^c}}\card{\e_{\mathcal{S},p}},\quad\quad \forall \Lambda \subseteq 2^{\mathcal{I}_1}\text{ superset saturated},\label{lb-CombNetp1-lemmaw1w2}\\
&R_1+R_2\leq \sum_{\mathcal{S}\subseteq \mathcal{I}_1}\card{\e_{\mathcal{S},p}}.\label{lb-CombNetp1-lemmaw1w2'}
\end{align}
\end{itemize}

Inequalities \eqref{lb-lemGifull-cond}, \eqref{lb-CombNetp1-lemmaw1w2} and \eqref{lb-CombNetp1-lemmaw1w2'} provide constraints on parameters $\alpha_\mathcal{S}$, $\mathcal{S}\subseteq \mathcal{I}_1$, under which $W_1$ is decodable at the public receivers (i.e.,  matrices $\mathbf{G}^{(i)}$, $i\in\mathcal{I}_1$, could be made full rank), and $W_1,W_2$ are decodable at private receivers (i.e., matrices $\mathbf{A}_p$, $p\in\mathcal{I}_2$, could be made full rank). It remains to argue that there exists a universal assignment of $\mathbf{A}$ such that all receivers can decode their messages of interest. We do this by directly applying the sparse zeros lemma \cite[Lemma 2.3]{kcl07}.
\begin{lemma}
If  $|{\mathbb{F}_q}|>K$, a universal assignment of $\mathbf{A}$ exists such that all $\mathbf{G}^{(i)}$, $i\in\mathcal{I}_1$, and all $\mathbf{A}_p$, $p\in\mathcal{I}_2$, become simultaneously full column rank. 
\end{lemma}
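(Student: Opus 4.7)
The plan is to translate each rank requirement into a polynomial non-vanishing condition on the indeterminate entries of $\mathbf{A}$, and then apply the sparse zeros lemma \cite[Lemma 2.3]{kcl07} to obtain a simultaneous assignment. More precisely, view the indeterminate (non-zero) entries of $\mathbf{A}$ as formal variables in $\mathbb{F}_q$. For each public receiver $i\in\mathcal{I}_1$, Lemma \ref{lb-lemGifull} guarantees that under the given rate constraints there exists at least one assignment of the variables that makes $\mathbf{G}^{(i)}$ full column rank. Equivalently, the determinant $\Delta_i$ of some square submatrix of $\mathbf{G}^{(i)}$, regarded as a multivariate polynomial in the entries of $\mathbf{A}$, is not identically zero. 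Similarly, for each private receiver $p\in\mathcal{I}_2$, Lemma \ref{lb-CombNet-fullrank} together with \eqref{lb-CombNetp1-lemmaw1w2}--\eqref{lb-CombNetp1-lemmaw1w2'} guarantees that the determinant $\Delta_p$ of some square submatrix of $\mathbf{A}_p$ is not identically zero.

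Consider the product polynomial
\begin{equation}
P \;=\; \prod_{i\in\mathcal{I}_1}\Delta_i \;\cdot\; \prod_{p\in\mathcal{I}_2}\Delta_p .
\end{equation}
Since each factor is a non-zero polynomial and $\mathbb{F}_q$ is an integral domain, $P$ is itself a non-zero polynomial. A key observation is that each indeterminate entry of $\mathbf{A}$ appears at most once in any given matrix we consider (it is a single matrix entry), and a determinant is multilinear in its entries. Therefore each $\Delta_i$ and each $\Delta_p$ has degree at most one in every variable, and the product $P$ has degree at most $K$ in every variable.

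Invoking the sparse zeros lemma \cite[Lemma 2.3]{kcl07}, whenever $|\mathbb{F}_q|>K$, one can find an assignment of the variables in $\mathbb{F}_q$ at which $P$ evaluates to a non-zero element. At such an assignment, every factor $\Delta_i$ and $\Delta_p$ is simultaneously non-zero, i.e., all the matrices $\mathbf{G}^{(i)}$ for $i\in\mathcal{I}_1$ and all the matrices $\mathbf{A}_p$ for $p\in\mathcal{I}_2$ are simultaneously full column rank. The mildly subtle step, and the only one that requires care, is the per-variable degree bound that justifies using the sparse zeros lemma with the concise threshold $|\mathbb{F}_q|>K$; everything else reduces to combining the existence statements of Lemmas \ref{lb-lemGifull} and \ref{lb-CombNet-fullrank}.
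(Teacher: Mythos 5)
Your proof is correct and mirrors the paper's approach exactly: form the determinant polynomials for each decoding matrix, multiply them, observe that each factor is multilinear in the independent indeterminate entries of $\mathbf{A}$ so the product has degree at most $K$ in every variable, and invoke the sparse zeros lemma \cite[Lemma 2.3]{kcl07}. The paper itself defers the detail, but the parallel argument it gives in Appendix F for the pre-encoded variant follows precisely this structure, so there is nothing to add.
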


\begin{remark}
{Note that operation over smaller fields is also possible by coding over blocks of larger lengths. Coding over blocks of length $n$ is  effectively done over the field $\mathbb{F}_{q^n}$. Therefore, we require $q^n>K$; i.e., we need $n>\log_{q}K$.}
\end{remark}
The rate-region achievable by this scheme can be posed as a feasibility problem in terms of parameters $\alpha_\mathcal{S}$, $\mathcal{S}\subseteq \mathcal{I}_1$.
We summarize this region in the following proposition.
\begin{proposition}
\label{CombNetp1-innerbound}
The rate pair $(R_1,R_2)$ is achievable if there exists a set of real valued variables $\alpha_{\mathcal{S}}$, $\mathcal{S}\subseteq \mathcal{I}_1$, that satisfies the following inequalities:
\begin{align}
&\text{Structural constraints:}\nonumber\\
&\hspace{1.5cm} \alpha_\mathcal{S}\geq 0,\quad\quad \forall \mathcal{S}\subseteq \mathcal{I}_1 \label{CombNetp1-achpos}\\
 &\hspace{1.5cm} R_2=\sum_{\mathcal{S}\subseteq \mathcal{I}_1}\alpha_\mathcal{S}\label{CombNetp1-achR2or}\\
&\text{Decoding constraints at public receivers:}\nonumber\\
&\hspace{1.5cm} R_1+\sum_{\substack{\mathcal{S}\subseteq \mathcal{I}_1\\\mathcal{S}\ni i}}\alpha_{\mathcal{S}}\leq \sum_{\substack{\mathcal{S}\subseteq \mathcal{I}_1\\\mathcal{S}\ni i}}|\e_\mathcal{S}|,\quad\quad \forall i\in\mathcal{I}_1 \label{CombNetp1-achr1or}\\
&\text{Decoding constraints at private receivers:}\nonumber\\
&\hspace{1.5cm} R_2\leq \sum_{\substack{\mathcal{S}\in \Lambda}}\alpha_\mathcal{S}+\sum_{\substack{\mathcal{S}\in \Lambda^c}}\card{\e_{\mathcal{S},p}},\quad\quad \forall \Lambda \subseteq 2^{\mathcal{I}_1}\text{ superset saturated},\ \forall p\in\mathcal{I}_2\label{CombNetp1-ach1or}\\
&\hspace{1.5cm} R_1+R_2\leq \sum_{\mathcal{S}\subseteq \mathcal{I}_1}|\e_{\mathcal{S},p}|,\quad\quad \forall p\in\mathcal{I}_2.\label{CombNetp1-achR1+R2or}
\end{align}
\end{proposition}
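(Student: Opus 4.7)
The plan is to consolidate the construction and decodability arguments that were developed through Lemmas \ref{lb-CombNet2-lemmaW1}--\ref{lb-lemGifull} into a single, self-contained proof. Without loss of generality, first assume $R_1,R_2$ and the $\alpha_{\mathcal{S}}$ are non-negative integers; real values are recovered by coding over blocks of length $n$ large enough to clear denominators, and by the continuity argument of Remark~\ref{lb-CombNet-alphapos}. The construction is to split $W_2$ into $2^m$ independent sub-messages $W_2^{\mathcal{S}}$ of rates $\alpha_{\mathcal{S}}$, and build a zero-structured encoding matrix $\mathbf{A}$ (generalising \eqref{lb:CombNet:structure} to $m$ public receivers) whose column block associated to $W_2^{\mathcal{S}}$ has indeterminate entries precisely in those rows indexed by $\mathcal{E}_{\mathcal{S}'}$ with $\mathcal{S}'\subseteq \mathcal{S}$, and zeros elsewhere.

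Given this construction, the next step is to verify the two families of decodability conditions. For each public receiver $i\in\mathcal{I}_1$, the received signal has the form $Y_i=\mathbf{A}_i W$ with $\mathbf{A}_i=[\mathbf{B}^{(i)}\,|\,\mathbf{T}^{(i)}]$, and by Lemma~\ref{lem-Lbuilt} it is enough to make the associated matrix $\mathbf{G}^{(i)}$ full column rank. By Lemma~\ref{lb-lemGifull} this is possible whenever
\[
R_1+\sum_{\mathcal{S}\ni i}\alpha_{\mathcal{S}}\;\le\;\sum_{\mathcal{S}\ni i}|\mathcal{E}_{\mathcal{S}}|,
\]
which is exactly \eqref{CombNetp1-achr1or}. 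For each private receiver $p\in\mathcal{I}_2$, the received signal is $Y_p=\mathbf{A}_p W$, and $(W_1,W_2)$ is decodable if and only if the zero-structured matrix $\mathbf{A}_p$ is full column rank (Corollary~\ref{lb-CombNet-cor-W1W2}). Invoking Lemma~\ref{lb-CombNet-fullrank} on $\mathbf{A}_p$, with column sizes $R_1$ (for $W_1$) and $\alpha_{\mathcal{S}}$ (for $W_2^{\mathcal{S}}$) and row sizes $|\mathcal{E}_{\mathcal{S},p}|$, one obtains exactly the superset-saturated inequalities \eqref{CombNetp1-ach1or} (by choosing $\Lambda$ not to contain the $W_1$-column block) together with the ``total'' inequality \eqref{CombNetp1-achR1+R2or} (by taking $\Lambda$ empty so that every column must be covered by the receiver's rows).

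The last step is to make all these receivers' full-rank conditions hold under a \emph{single} assignment of the indeterminates in $\mathbf{A}$. The determinant of the appropriate square submatrix of each $\mathbf{G}^{(i)}$ and each $\mathbf{A}_p$ is a non-zero polynomial in the indeterminates (non-zero because each lemma guarantees at least one assignment that makes it non-zero individually). Their product is therefore a non-zero polynomial, and by the sparse zeros lemma \cite[Lemma 2.3]{kcl07} it has a non-root in $\mathbb{F}_q$ provided $q>K$; for smaller $q$, the same argument applied over $\mathbb{F}_{q^n}$ works once $n>\log_q K$.

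The main obstacle I anticipate is exactly this last simultaneous-assignment step: each lemma produces its own witness assignment, but the assignments need not agree across receivers, and it is the non-vanishing of a \emph{product} of determinantal polynomials that must be argued. The sparse zeros / Schwartz--Zippel style argument is the clean way to handle it, at the (mild) cost of a field-size or block-length requirement, and this is what bridges the per-receiver full-rank conditions into a legitimate single linear code achieving $(R_1,R_2)$.
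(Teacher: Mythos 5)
Your proof is correct and takes essentially the same approach as the paper: a zero-structured encoding matrix, Lemmas~\ref{lem-Lbuilt} and \ref{lb-lemGifull} for the public receivers, Lemma~\ref{lb-CombNet-fullrank} for the private receivers, and the sparse zeros lemma to patch together a universal assignment. One small bookkeeping slip: when applying Lemma~\ref{lb-CombNet-fullrank} to $\mathbf{A}_p$ with the $W_1$ column block merged into the $\mathcal{I}_1$ block (so $c_{\mathcal{I}_1}=R_1+\alpha_{\mathcal{I}_1}$ and $c=R_1+R_2$), the inequalities \eqref{CombNetp1-ach1or} arise from the \emph{nonempty} superset-saturated choices of $\Lambda$, which necessarily contain $\mathcal{I}_1$ and hence the $W_1$ block, so that $R_1$ cancels from both sides, while $\Lambda=\emptyset$ yields \eqref{CombNetp1-achR1+R2or} --- your parenthetical ``$\Lambda$ not containing the $W_1$-column block'' has this containment reversed (and, for superset-saturated $\Lambda$, coincides with $\Lambda=\emptyset$, so your two cases would overlap rather than exhaust the lemma's constraints).
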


\begin{remark}
\label{lb-CombNet-remarkdecodability}
Note that  inequality \eqref{CombNetp1-achr1or} ensures  decodability of only the common message (and not the superposed messages $W_2^\mathcal{S}$, $i\in \mathcal{S}\subseteq \mathcal{I}_1$) at the public receivers. To have public receiver $i$ decode all the superposed messages $W^{\mathcal{S}}_2$, $i\in \mathcal{S}\subseteq \mathcal{I}_1$, as well, one needs further constraints on $\alpha_\mathcal{S}$, as given below:
\begin{align}
\label{lb-CombNetp1-remark}
\sum_{\substack{ \mathcal{S}\subseteq \mathcal{I}_1\\\mathcal{S}\ni i}}\alpha_\mathcal{S}\leq \sum_{\mathcal{S}\in\Lambda}\alpha_\mathcal{S}+\sum_{\substack{\mathcal{S}\in \Lambda^c\\\mathcal{S}\ni i}}\card{\e_\mathcal{S}},\quad\quad \Lambda\subseteq {\{\{i\}\star\}}\text{ superset saturated}.
\end{align}
{More precisely, define $\tilde{\mathbf{A}}_i$ to be the submatrix of $\mathbf{A}_i$ that does not contain the all-zero columns. One observes that messages $W^{\mathcal{S}}_2$, $i\in \mathcal{S}\subseteq \mathcal{I}_1$, are all decodable if and only if $\tilde{\mathbf{A}}_i$ is full column rank. Since $\tilde{\mathbf{A}}_i$ is zero-structured, Lemma \ref{lb-CombNet-fullrank} gives the required constraints.}
\end{remark}
\begin{remark}
\label{remarkbme}
 In a similar manner, a \textit{general multicast code} could be designed for the scenario where $2^K$ message sets are communicated and each message set is destined for a subset of the $K$ receivers.
 \end{remark}

\begin{remark}
What the zero-structure encoding matrix does, in effect, is implement the standard techniques of rate splitting and linear superposition coding. We prove in Subsection \ref{CombNet-subsect-outer1} that this encoding scheme is rate-optimal for combination networks with two public and any number of private receivers. However, this encoding scheme is not in general optimal. We discuss this sub-optimality next and modify the encoding scheme to attain a strictly larger rate region.
\end{remark}

\subsection{An achievable rate-region from pre-encoding and structured linear codes}
\label{lb-CombNet-MES}
For combination networks with three or more public receivers (and any number of private receivers), the scheme outlined above turns out to be sub-optimal in general. We discussed one such example in Example \ref{lb-CombNet-example2to3} where $(R_1=0,R_2=2)$ was not achievable by  Proposition \ref{CombNetp1-innerbound} (see \eqref{exampleregion1}-\eqref{exampleregionlast}). 
If we were to relax the non-negativity condition on $\alpha_\phi$ in \eqref{exampleregion1}-\eqref{exampleregionlast}, we would get feasibility of $(R_1,R_2)=(0,2)$ for the following set of parameters $\alpha_\mathcal{S}$:  $\alpha_\phi=-1$, $\alpha_{\{1\}}=\alpha_{\{2\}}=\alpha_{\{3\}}=1$,  and $\alpha_{\{1,2\}}=\alpha_{\{1,3\}}=\alpha_{\{2,3\}}=\alpha_{\{1,2,3\}}=0$. 
Obviously, there is no longer a ``structural" meaning to this set of parameters. Nonetheless, it still has a peculiar meaning that we try to investigate in this example. 
As suggested by the positive parameters $\alpha_{\{1\}}$, $\alpha_{\{2\}}$, $\alpha_{\{3\}}$, we would like, in an optimal code design, to reveal a subspace of dimension one of the private message space to each public receiver (and only that public receiver). The subtlety lies in the fact that those pieces of private information are \textit{not} mutually independent, as message  $W_2$ is of rate $2$. The previous scheme does not allow such dependency.

Inspired by Example \ref{lb-CombNet-example2to3}, we modify the encoding scheme, using an appropriate pre-encoder, to obtain a strictly larger achievable region as expressed in Theorem \ref{lb-CombNet-Theoremk2}.
\begin{theorem}
\label{lb-CombNet-Theoremk2}
The rate pair $(R_1,R_2)$ is achievable if there exists a set of real valued variables $\alpha_{S}$, $S\subseteq \mathcal{I}_1$, that satisfy the following inequalities:
\begin{align}
&\text{Structural constraints:}\nonumber\\
&\hspace{1.5cm} \alpha_\mathcal{S}\geq 0,\quad\quad \forall  \mathcal{S}\subseteq \mathcal{I}_1,\ \mathcal{S}\neq \phi \label{CombNetk2-achpos}\\
 &\hspace{1.5cm} R_2=\sum_{\mathcal{S}\subseteq \mathcal{I}_1}\alpha_\mathcal{S},\label{CombNetk2-achR2or}\\
&\text{Decoding constraints at public receivers:}\nonumber\\
&\hspace{1.5cm} R_1+\sum_{\substack{\mathcal{S}\subseteq \mathcal{I}_1\\\mathcal{S}\ni i}}\alpha_{\mathcal{S}}\leq \sum_{\substack{\mathcal{S}\subseteq \mathcal{I}_1\\\mathcal{S}\ni i}}|\e_\mathcal{S}|,\quad\quad\forall i\in\mathcal{I}_1 \label{CombNetk2-achr1or}\\
&\text{Decoding constraints at private receivers:}\nonumber\\
&\hspace{1.5cm} R_2\leq \sum_{\substack{\mathcal{S}\in \Lambda}}\alpha_\mathcal{S}+\sum_{\substack{\mathcal{S}\in \Lambda^c}}\card{\e_{\mathcal{S},p}},\quad\quad \forall \Lambda
\subseteq 2^{\mathcal{I}_1}\text{ superset saturated},\ \forall p\in\mathcal{I}_2\label{CombNetk2-ach1or}\\
&\hspace{1.5cm} R_1+R_2\leq \sum_{\mathcal{S}\subseteq \mathcal{I}_1}|\e_{\mathcal{S},p}|,\quad\quad \forall p\in\mathcal{I}_2.\label{CombNetk2-achR1+R2or}
\end{align}
\end{theorem}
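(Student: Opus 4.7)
The plan is to build on Proposition~\ref{CombNetp1-innerbound}'s zero-structured linear code by inserting an MDS pre-encoding step. I would pre-encode the private message $W_2$ into a longer pseudo-private message $W^\prime_2=\mathbf{P}W_2$ via a maximum distance separable (MDS) generator matrix $\mathbf{P}$, and then apply the zero-structured code of Proposition~\ref{CombNetp1-innerbound} to $(W_1,W^\prime_2)$. Without loss of generality, assume $\alpha_\phi\le 0$ (otherwise the hypotheses reduce to those of Proposition~\ref{CombNetp1-innerbound}), and set $\beta=-\alpha_\phi\ge 0$ and $R^\prime_2=R_2+\beta$. Take $\mathbf{P}\in\mathbb{F}_q^{R^\prime_2\times R_2}$ to be the generator matrix of an MDS $[R^\prime_2,R_2]_q$ code, which exists over a sufficiently large field (or by coding over longer blocks). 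Define modified rate-split parameters $\alpha^\prime_\phi=0$ and $\alpha^\prime_\mathcal{S}=\alpha_\mathcal{S}$ for $\mathcal{S}\ne\phi$; then $\alpha^\prime_\mathcal{S}\ge 0$ for every $\mathcal{S}$ and $\sum_\mathcal{S}\alpha^\prime_\mathcal{S}=R^\prime_2$.

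To analyze the scheme I would consider an auxiliary \emph{augmented} combination network obtained from the original by appending $\beta$ virtual edges to $\mathcal{E}_\phi$, each connected to every private receiver, so that $|\mathcal{E}^{\mathrm{aug}}_{\phi,p}|=|\mathcal{E}_{\phi,p}|+\beta$ for every $p\in\mathcal{I}_2$. A direct check shows that on this augmented network, $(R_1,R^\prime_2)$ together with $\{\alpha^\prime_\mathcal{S}\}$ satisfies every hypothesis of Proposition~\ref{CombNetp1-innerbound}: the structural and non-negativity conditions are immediate; the public-receiver constraints~\eqref{CombNetp1-achr1or} are unchanged because the augmented edges lie in $\mathcal{E}_\phi$ and no $i\in\mathcal{I}_1$ belongs to $\phi$; and in the private-receiver constraints~\eqref{CombNetp1-ach1or}--\eqref{CombNetp1-achR1+R2or}, the extra $\beta$ contributed by the augmented edges on the right cancels the extra $\beta$ from $R^\prime_2=R_2+\beta$ on the left, reducing them to~\eqref{CombNetk2-ach1or}--\eqref{CombNetk2-achR1+R2or} of Theorem~\ref{lb-CombNet-Theoremk2} (using that any superset-saturated $\Lambda\subsetneq 2^{\mathcal{I}_1}$ has $\phi\in\Lambda^c$). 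Proposition~\ref{CombNetp1-innerbound} therefore yields a zero-structured encoder $\mathbf{A}^{\mathrm{ext}}$ on the augmented network under which each public receiver decodes $W_1$ and each private receiver decodes $(W_1,W^\prime_2)$ in full.

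Restricting $\mathbf{A}^{\mathrm{ext}}$ to the $d$ rows corresponding to the original network gives the encoder $\mathbf{A}$ for the actual scheme. Public decodability is preserved verbatim because the deleted rows correspond to the virtual $\mathcal{E}_\phi$ edges and are invisible to any public receiver. Substituting $W^\prime_2=\mathbf{P}W_2$, private receiver $p$ observes $Y_p=\bar{\mathbf{A}}_p\,[W_1;W_2]$ with $\bar{\mathbf{A}}_p:=[\mathbf{A}_p^{(W_1)}\mid \mathbf{A}_p^{(W^\prime_2)}\mathbf{P}]$, a $|\mathcal{E}_{:,p}|\times(R_1+R_2)$ matrix; constraint~\eqref{CombNetk2-achR1+R2or} ensures it has at least as many rows as columns.

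The main obstacle is to show that $\bar{\mathbf{A}}_p$ can be made full column rank simultaneously for every $p\in\mathcal{I}_2$ while keeping the public-decoding matrices $\mathbf{G}^{(i)}$ full column rank. A dimension count is encouraging: since $\mathbf{A}_p^{\mathrm{ext}}$ is full column rank (of rank $R_1+R^\prime_2$) and deleting $\beta$ rows decreases rank by at most $\beta$, $\ker(\mathbf{A}_p)$ has dimension at most $\beta$, whereas the image of $\mathrm{diag}(I_{R_1},\mathbf{P})$ is an $(R_1+R_2)$-dimensional subspace of the ambient $(R_1+R^\prime_2)$-dimensional space; the dimension sum matches the ambient dimension, so a generic choice of $\mathbf{P}$ places these two subspaces in trivial intersection and makes $\bar{\mathbf{A}}_p$ full column rank. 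I would formalize this by treating the free entries of $\mathbf{A}^{\mathrm{ext}}$ and $\mathbf{P}$ as indeterminates, exhibiting a single valid assignment (for instance along the lines of Example~\ref{lb-CombNet-example2to3}) so that the product of all relevant determinant polynomials is not identically zero, and then invoking the sparse-zeros lemma~\cite{kcl07} to obtain a universal assignment over a sufficiently large field.
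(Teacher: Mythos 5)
Your overall strategy coincides with the paper's: pre-encode $W_2$ into $W'_2=\mathbf{P}W_2$ of larger dimension $R'_2=R_2-\alpha_\phi$, then apply a zero-structured linear encoder. Your detour through an augmented network with $\beta$ virtual $\mathcal{E}_\phi$-edges is a clean reformulation and the bookkeeping you carry out (the cancellation of $\beta$ in the private-receiver constraints, the invariance of the public-receiver constraints) is correct. Public decodability after restriction is also fine, since $\mathrm{colspace}(\mathbf{T}^{(i)}\mathbf{P})\subseteq\mathrm{colspace}(\mathbf{T}^{(i)})$, so the disjointness in Lemma~\ref{lb-CombNet2-lemmaW1} is preserved.

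The gap is at the very step you flag as ``the main obstacle.'' A dimension count showing $\dim\ker(\mathbf{A}_p)\le\beta$ together with $\dim\mathrm{Im}(\mathrm{diag}(I_{R_1},\mathbf{P}))=R_1+R_2$ does \emph{not} by itself give a generic trivial intersection: $\mathrm{Im}(\mathrm{diag}(I_{R_1},\mathbf{P}))$ always contains the \emph{fixed} subspace $\mathbb{F}_q^{R_1}\times\{0\}$ independent of $\mathbf{P}$, so you cannot ``rotate it into general position'' against a fixed $\ker(\mathbf{A}_p)$; nor are you free to pick $\mathbf{A}^{\mathrm{ext}}$ arbitrarily, since it must simultaneously satisfy all the Proposition~\ref{CombNetp1-innerbound} rank constraints. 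What the sparse-zeros lemma requires is that each full-rank condition be achievable for \emph{some} joint assignment of $(\mathbf{A},\mathbf{P})$, and the hard part is exhibiting a single assignment making $\mathbf{T}^{(p)}\mathbf{P}$ full column rank $R_2$ for a given $p$ under only the hypothesis~\eqref{CombNetk2-ach1or}. You leave this to ``along the lines of Example~\ref{lb-CombNet-example2to3},'' but that example is a single instance, and the MDS property alone is neither sufficient (for a general zero-structured $\mathbf{T}^{(p)}$ the product $\mathbf{T}^{(p)}\mathbf{P}$ need not be full rank just because any $R_2$ rows of $\mathbf{P}$ are) nor what the paper actually uses. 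The paper closes this gap in Appendix~\ref{ap-CombNet-lemmaW1W2} by translating full rank of $\mathbf{T}^{(p)}\mathbf{P}$ into a unicast flow over the network of Fig.~\ref{FigCombinationNetworkLemmaNegativea0}, where $\mathbf{P}$ is the local transfer matrix at the source and the $\mathbf{T}^{(p)}$ blocks are the transfer matrices at intermediate nodes, and then shows the min-cut is at least $R_2$ exactly when~\eqref{CombNetk2-ach1or} holds. That combinatorial construction is the ingredient your proposal is missing.
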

\begin{remark}
Note that  compared to Proposition \ref{CombNetp1-innerbound}, the non-negativity constraint on $\alpha_\phi$ is relaxed in Theorem \ref{lb-CombNet-Theoremk2}.
\end{remark}
\begin{remark}
The inner-bound of Theorem \ref{lb-CombNet-Theoremk2} is tight for combination networks with $m=3$ (or fewer) public and any number of private receivers, see Section \ref{CombNet-subsect-outer2}. 
\end{remark}
\begin{proof}
{Let $(R_1,R_2)$ be in the rate region of Theorem \ref{lb-CombNet-Theoremk2}; i.e., there exist 
parameters $\alpha_\mathcal{S}$, $\mathcal{S}\subseteq \mathcal{I}_1$, that satisfy inequalities \eqref{CombNetk2-achpos}-\eqref{CombNetk2-achR1+R2or}.  In the following, let $(\alpha_\phi)^-=\min(0,\alpha_\phi)$ and $(\alpha_\phi)^+=\max(0,\alpha_\phi)$. Furthermore, without loss of generality, we assume  that parameters $R_1$, $R_2$, $\alpha_\mathcal{S}$, $\mathcal{S}\subseteq \mathcal{I}_1$, are integer valued (see Remark \ref{lb-CombNet-alphapos}).}

{First of all, pre-encode message $W_2$ into a message vector $W^\prime_2$ of dimension $R_2-(\alpha_\phi)^-$ through a pre-encoding matrix $\mathbf{P}\in {\mathbb{F}_q}^{[R_2-(\alpha_\phi)^-]\times R_2}$; i.e., we have
\begin{align}
\label{lb-CombNet-PreEnc}
W^\prime_2=\mathbf{P}W_2.
\end{align}
Then, encode messages $W_1$ and $W^\prime_2$ using a  zero-structured 
matrix with rate split parameters $\alpha_\mathcal{S}$, $S\subseteq \mathcal{I}_1$, omitting the columns corresponding to $S=\phi$ if $\alpha_\phi<0$. The encoding matrix is, therefore, given as follows:}
\vspace{-.5cm}

{\begin{align}
\label{lb:CombNet:structureP}
 \begin{array}{c}\\\mathbf{A}=\end{array}
 \begin{array}{l}   
\begin{array}{ccccc}\hspace{0.1cm}\stackrel{\ \hspace{.25cm}R_1\hspace{.25cm}\ }{\longleftrightarrow}&\hspace{-.3cm}\stackrel{\alpha_{\{1,2\}}}{\longleftrightarrow}&\hspace{-.3cm}\stackrel{\alpha_{\{1\}}}{\leftrightarrow}&\hspace{-.35cm}\stackrel{\alpha_{\{2\}}}{\leftrightarrow}&\hspace{-0.5cm}\stackrel{\hspace{.1cm}(\alpha_{\phi})^+}{\leftrightarrow}\end{array}\\
\left[\begin{array}{c|c|c|c|c}
                   \hspace{.8cm}\ &\ \hspace{.4cm}\ &0&0&0\\\hline
		    \hspace{.8cm}\ &\ \hspace{.4cm}\ &\hspace{.2cm}&0&0\\	\hline	   
		    \hspace{.8cm}\ &\ \hspace{.4cm}\ &0&\hspace{.2cm}&0\\\hline
		    \hspace{.8cm}\ &\ \hspace{.4cm}\ &\hspace{.2cm}&\hspace{.2cm}&\hspace{.2cm}
                  \end{array}\right]
	    \begin{array}{c}\updownarrow\\\updownarrow\\\updownarrow\\\updownarrow\end{array}
	   \hspace{-.3cm} {\tiny{\begin{array}{l}|\e_{\{1,2\}}|\\\\ |\e_{\{1\}}|\\\\|\e_{\{2\}}|\\\\|\e_{\phi}|\end{array}}}\cdot\ \mathbf{P}^\prime
\end{array}
\end{align}
where
\begin{align}
\mathbf{P}^\prime=\begin{array}{l}\\\left[\begin{array}{c|c}\substack{\\\\\\\\\\\\\\}\hspace{.1cm}\mathbf{I}_{R_1\times R_1}\hspace{.1cm}&0\\\hline\substack{\\\\\\\\\\\\\\}0&\hspace{.1cm}\mathbf{P}\hspace{.1cm}\end{array}\right]\end{array},
\end{align}
and all indeterminates are to be assigned from the finite field ${\mathbb{F}_q}$.}

{The conditions for decodability of $W_1$ at each public receiver $i\in \mathcal{I}_1$ are given by \eqref{CombNetk2-achr1or}, and the conditions for decodability of $W_1,W_2$ at each private receiver $p\in \mathcal{I}_2$ is given by \eqref{CombNetk2-ach1or}, \eqref{CombNetk2-achR1+R2or}. The proof is similar to the basic zero structured encoding scheme and is sketched in the following.}

{Let $\mathbf{A}_i$ be the submatrix of $\mathbf{A}$ that constitutes $Y_i$ at receiver $i$. We have 
\begin{align}
\mathbf{A}_i&=\left[\mathbf{B}^{(i)}|\mathbf{T}^{(i)}\right]\mathbf{P}^\prime\\
&=\left[\mathbf{B}^{(i)}|\mathbf{T}^{(i)}\mathbf{P}\right].
\end{align} 
As before, define $\mathbf{L}^{(i)}_{1}$ to be a largest submatrix of the columns of $\mathbf{T}^{(i)}\mathbf{P}$ that could be made full column rank (over all possible assignments).
Define $\mathbf{G}^{(i)}$ to be 
\begin{align}
\mathbf{G}^{(i)}=\left[\mathbf{B}^{(i)}|\mathbf{L}^{(i)}_{1}\right].
\end{align}
From Lemma \ref{lem-Lbuilt}, each receiver $i$ can decode $W_1$ from \eqref{lb-CombNet-structureAi} 
if $\mathbf{G}^{(i)}$, as defined above, is full column rank. This is possible provided that (see Lemma \ref{lb-lemGifull})
\begin{align}
R_1+\sum_{\substack{\mathcal{S}\subseteq \mathcal{I}_1\\ \mathcal{S}\ni i}}\alpha_{\mathcal{S}}\leq \sum_{\substack{\mathcal{S}\subseteq \mathcal{I}_1\\ \mathcal{S}\ni i}}|\mathcal{E}_{\mathcal{S}}|.
\end{align}}

{Decodability of $W_1$ at  private receiver $p\in\mathcal{I}_2$ is similarly guaranteed by \eqref{CombNetk2-achR1+R2or}. Finally, $\mathbf{T}^{(p)}\mathbf{P}$ could itself be full rank under \eqref{CombNetk2-ach1or}, see Appendix \ref{ap-CombNet-lemmaW1W2}, and therefore $W_2$ is decodable at private receivers (in addition to $W_1$). Since the variables involved in $\mathbf{B}^{(p)}$ and $\mathbf{T}^{(p)}\mathbf{P}$ are independent,  $\mathbf{G}^{(p)}$ could be made full column rank.}

{It remains to argue that for $|{\mathbb{F}_q}|>K$, and under \eqref{CombNetk2-achr1or}-\eqref{CombNetk2-achR1+R2or}, all matrices $\mathbf{G}^{(i)}$, $i\in\mathcal{I}$, could be made full rank simultaneously. The proof is based on the sparse zeros lemma and is deferred to Appendix \ref{sparselemmamix}. }

%

\end{proof}

We close this section with
 an example that shows the inner-bound of Theorem \ref{lb-CombNet-Theoremk2} is not in general tight when the number of public receivers exceeds $3$.
\begin{example}
\label{lb-CombNet-example3to4}
\begin{figure}
\centering
\begin{tikzpicture}[scale=2]

\tikzstyle{every node}=[draw,shape=circle,font=\small\itshape]; 

\path (3,3) node (v0) {$S$}; 
\path (3,3.75) node[rectangle,draw=none] () {$\hspace{-1.75cm}W_1=[w_{1,1}]$}; 
\path (3,3.375) node[rectangle,draw=none] () {$\hspace{-.25cm}W_2=[w_{2,1},w_{2,2},w_{2,3}]$}; 

\path (0,1.5) node (v1) {};
\path (6/5,1.5) node (v2) {};
\path (12/5,1.5) node (v3) {};
\path (18/5,1.5) node (v4) {};
\path (24/5,1.5) node (v5) {};
\path (6,1.5)  node (v6) {};

\path (0,0) node (d1) {$D_1$};
\path (1,0) node (d2) {$D_2$};
\path (2,0) node  (d3) {$D_3$};
\path (3,0) node (d4)  {$D_4$};
\path (4,0) node (d5)[shade]  {$D_5$};
\path (5,0) node (d6)[shade]  {$D_6$};
\path (6,0) node (d7)[shade]  {$D_7$};

%

\draw[->] (v0) --node[draw=none,fill=none,left] {$X_{\{1,2\}}$}  (v1);

\draw[->] (v0) --node[draw=none,fill=none,left]{$X_{\{1,3\}}$}   (v2);
\draw[->] (v0) --node[draw=none,fill=none,left] {$\color{black}{X_{\{1,4\}}}$} (v3);
\draw[->] (v0) --node[draw=none,fill=none,right] {$X_{\{2,3\}}$} (v4);

\draw[->] (v0) --node[draw=none,fill=none,right] {$\color{black}{X_{\{2,4\}}}$}  (v5);
\draw[->] (v0) --node[draw=none,fill=none,right]{$\color{black}{X_{\{3,4\}}}$}  (v6);

%
\draw[->] (v1) -- (d1);
\draw[->] (v1) -- (d2);
\draw[->] (v2) -- (d1);
\draw[->] (v2) -- (d3);
\draw[->] (v3) -- (d1);
\draw[->] (v3) -- (d4);
\draw[->] (v4) -- (d2);
\draw[->] (v4) -- (d3);
\draw[->] (v5) -- (d2);
\draw[->] (v5) -- (d4);

\draw[->] (v6) -- (d3);
\draw[->] (v6) -- (d4);

\draw[->] (v1) -- (d5);
\draw[->] (v2) -- (d5);
\draw[->] (v3) -- (d5);
\draw[->] (v5) -- (d5);

\draw[->] (v1) -- (d6);
\draw[->] (v4) -- (d6);
\draw[->] (v5) -- (d6);
\draw[->] (v6) -- (d6);

\draw[->] (v2) -- (d7);
\draw[->] (v4) -- (d7);
\draw[->] (v6) -- (d7);
\draw[->] (v3) -- (d7);

\end{tikzpicture}
\caption{The innerbound of Theorem \ref{lb-CombNet-Theoremk2} is not tight for $m>3$ public receivers: while the rate pair $(1,3)$ is not in the rate-region of Theorem \ref{lb-CombNet-Theoremk2}, the code  in \eqref{codedesign4} is a construction that allows communication at rates $R_1=1$, $R_2=3$.}
\label{CombNet-example3to4}
\end{figure}
{Consider the combination network of Fig.~\ref{CombNet-example3to4}. Destinations $1,2,3, 4$ are public and destinations $5,6,7$ are private receivers. The rate pair $(1,3)$ is achievable over this network by the following code design:
\begin{align}
\begin{array}{l}X_{\{1,2\}}={w_{1,1}}+w_{2,1}\\
X_{\{1,3\}}={w_{1,1}}+w_{2,2}\\
X_{\{1,4\}}={w_{1,1}}+w_{2,1}+w_{2,2}\\
X_{\{2,3\}}={w_{1,1}}+w_{2,3}\\
X_{\{2,4\}}={w_{1,1}}+w_{2,1}+w_{2,3}\\
X_{\{3,4\}}={w_{1,1}}+w_{2,2}-w_{2,3}.
\end{array}
\label{codedesign4}
\end{align}
However, for this rate pair,  the problem in \eqref{CombNetk2-achpos}-\eqref{CombNetk2-achR1+R2or}  is infeasible (this may be verified by Fourier-Motzkin elimination or a LP feasibility solver).}
\end{example}

Let us look at this example (see Fig.~\ref{CombNet-example3to4}) more closely. Each public receiver has three resources available and needs to decode (only) the common message which is of rate $R_1=1$. So no more than two dimensions of the private message space could be revealed to any public receiver. For example, look at the resources that are available to public receiver $4$. These three resources mimic the  same structure of Fig.~\ref{CombNet-example2to3} and demand a certain dependency among their (superposed) private information symbols. More precisely, the superposed private information symbols carried over these resources come from a message space of dimension two (a condition imposed by public receiver $4$), and every two out of three of these resources should carry mutually independent private information symbols (a condition imposed by  the private receivers). A similar dependency structure is needed among the information symbols on $X_{\{1,2\}}$, $X_{\{1,3\}}$, $X_{\{1,4\}}$, and also among $X_{
\{1,2\}}$, $X_{\{2,3\}}$, $X_{\{2,4\}}$ and $X_{\{1,3\}}$, $X_{\{2,3\}}$, $X_{\{3,4\}}$. This shows a more involved dependency structure among the revealed partial private information symbols, and explains 
 why our modified encoding scheme of Theorem \ref{lb-CombNet-Theoremk2} cannot be optimal for this example. 
 
 Now consider the coding scheme of \eqref{codedesign4} that achieves $(1,3)$ (we assume $|{\mathbb{F}_q}|>2$).
This code ensures decodability of $W_1,w_{2,1},w_{2,2}$ at public receiver $1$, decodability of $W_1,w_{2,1}$, $w_{2,3}$ at public receiver $2$, decodability of $W_1,w_{2,2}$, $w_{2,3}$ at public receiver $3$, decodability of $W_1,w_{2,1}+w_{2,2},w_{2,1}+w_{2,3}$ at public receiver $4$ and decodability of $W_1,W_2$ at private receivers $5,6,7$. The (partial) private information that is revealed to different subsets of the public receivers is also as follows: no private information is revealed to subsets ${\{1,2,3,4\}}$, ${\{1,2,3\}}$, ${\{1,2,4\}}$, ${\{1,3,4\}}$, ${\{2,3,4\}}$, the private information revealed to ${\{1,2\}}$ is $w_{2,1}$, to ${\{1,3\}}$ is $w_{2,2}$, to ${\{2,3\}}$ is $w_{2,3}$, to ${\{1,4\}}$ is $w_{2,1}+w_{2,2}$,  to ${\{2,4\}}$ is $w_{2,1}+w_{2,3}$, to ${\{3,4\}}$ is $w_{2,2}-w_{2,3}$, and finally no private information is revealed to ${\{1\}}$, ${\{2\}}$, ${\{3\}}$, ${\{4\}}$. Now, it becomes clearer that the dependency structure that is needed among the partial private
information may, in general, be more involved than is allowed by our simple pre-encoding technique.

In the next section, we develop a simple block Markov encoding scheme to tackle Example \ref{lb-CombNet-example3to4}, and derive a new achievable scheme for the general problem.

\section{A Block Markov Encoding Scheme}
\label{CombNet-ach-blockMarkov}
\subsection{Main idea}
We saw that the difficulty in the code design stems mainly from the following tradeoff:
On one hand, private receivers require all the public and private information symbols and therefore prefer to receive mutually independent information over their resources. On the other hand,  each public receiver requires that its received encoded symbols do not depend on ``too many" private information symbols. This imposes certain dependencies among the encoded symbols of the public receivers' resources .

The main idea in this section is to capture these dependencies over sequential blocks, rather than capturing it through the structure of the {one-time} (one-block) code.
For this, we use a block Markov encoding scheme.
We start with an example where both previous schemes were sub-optimal and we show the optimality of the block Markov encoding scheme for it.

\begin{example}
\label{CombNet-BME-example22-3}
Consider the combination network in Fig.~\ref{CombNet-example3to4}. We saw in Example \ref{lb-CombNet-example3to4} that the rate pair $(R_1=1,R_2=3)$ was not achievable by the zero-structured linear code, even after employing a random pre-encoder in the the modified scheme. In this example, we achieve the rate pair $(1,3)$ through a block Markov encoding scheme, and hence, show that block Markov encoding  could perform strictly better.
Let us first extend the combination network by adding one extra resource to the set $\e_{\{4\}}$, and connecting it to all the private receivers (see Fig.~\ref{CombNet-example3to4bmecode-final}). 
{Over this extended combination network, the larger rate pair $(R_1=1,R^\prime_2=4)$ is achievable  using a  zero-structured linear code characterized in \eqref{CombNetp1-achpos}-\eqref{CombNetp1-achR1+R2or}, \eqref{lb-CombNetp1-remark}:
\begin{align}
&\alpha_{\{1,2,3\}}=\alpha_{\{1,2,4\}}=\alpha_{\{1,3,4\}}=\alpha_{\{2,3,4\}}=1\\
&\alpha_\phi=\alpha_{\{1\}}=\alpha_{\{2\}}=\alpha_{\{3\}}= \alpha_{\{4\}}=\alpha_{\{1,2\}}=\alpha_{\{1,3\}}=\alpha_{\{1,2,3,4\}}=0.
\end{align}}
 One such code design is given in the following.
Let message $W^\prime_2=[w^\prime_{2,1},w^\prime_{2,2},w^\prime_{2,3},w^\prime_{2,4}]$ be the private message of the larger rate ($R^\prime_2=4$) which is to be communicated over the extended combination network, and let $X_{\{1,2\}}$, $X_{\{1,3\}}$, $X_{\{1,4\}}$, $X_{\{2,3\}}$, $X_{\{2,4\}}$, $X_{\{3,4\}}$, $X_{\{4\}}$ be  symbols that are sent over the extended combination network. The encoding is described below (we assume that $\mathbb{F}_q$ has a characteristic larger than $2$):
\begin{align}
\label{lb-CombNet-bme-example2}
\begin{array}{ll}
X_{\{1,2\}}=w_{1,1}+w^\prime_{2,3}&X_{\{2,3\}}=w_{1,1}+2w^\prime_{2,3}\\
X_{\{1,3\}}=2w_{1,1}+w^\prime_{2,3}&X_{\{2,4\}}=w_{1,1}+w^\prime_{2,2}\\
X_{\{1,4\}}=w_{1,1}+w^\prime_{2,1}&X_{\{3,4\}}=w_{1,1}+w^\prime_{2,4}\\
X_{\{4\}}=w_{1,1}+w^\prime_{2,1}+w^\prime_{2,2}+w^\prime_{2,4}.&
\end{array}
\end{align}
Since the resource edge in $\e_{\{4\}}$ is a virtual resource, we aim to emulate it through a block Markov encoding scheme. Using the code design of \eqref{lb-CombNet-bme-example2}, Receiver $4$ may decode, besides the common message, three private information symbols ($w^\prime_{2,1}$, $w^\prime_{2,2}$,  $w^\prime_{2,4}$), see also \eqref{lb-CombNetp1-remark} and Remark \ref{lb-CombNet-remarkdecodability}. Since all these three symbols are decodable at Receiver $4$ and all the private receivers, any of them could be used to emulate the virtual resource in $\e_{\{4\}}$. 

{More precisely,
consider communication over $n$ transmission blocks, and let $(W_1[t],W^{\prime}_2[t])$ be the message pair that is  encoded in block $t\in\{1,\ldots,n\}$. 
In the $t^{\text{th}}$ block, encoding is done as suggested by the code in \eqref{lb-CombNet-bme-example2}. 
To provide Receiver $4$ and the private receivers with the information of $X_{\{4\}}[t]$ (as promised by the virtual resource in $\e_{\{4\}}$), we use $w^{\prime}_{2,4}[t+1]$ in the next block, to convey $X_{\{4\}}[t]$. 
Since this symbol is ensured to be decoded at Receiver $4$ and the private receivers, it indeed emulates $\e_{\{4\}}$.
In the $n^{\text{th}}$ block, we simply encode $X_{\{4\}}[n-1]$ and directly communicate it with receiver~$4$ and the private receivers. 
Upon receiving all the $n$ blocks, the receivers perform backward decoding \cite{WillemsMeulen85}. }

So in $n$ transmissions, we may  send $n-1$ symbols of $W_1$ and $3(n-1)+1$ new symbols of $W_2$ over the original combination 
network;  i.e., for $n\to \infty$, we can achieve the rate-pair $(1,3)$. 

Note that public receivers each have four resources available and therefore rate pair $(1,3)$  is an optimal sum-rate point. 
\begin{figure}
\centering
\begin{tikzpicture}[scale=2]

\tikzstyle{every node}=[draw,shape=circle,font=\small\itshape]; 

\path (3,3) node (v0) {$S$};

\path  (3,3.75) node[rectangle,draw=none]() {$W_1[t+1]=[w_{1,1}[t+1]]$}; 
\path  (3,3.375) node[rectangle,draw=none](message) {$W_2^{\prime}[t+1]=[w^{\prime}_{2,1}[t+1],w^{\prime}_{2,2}[t+1],w^{\prime }_{2,3}[t+1],\textcolor{blue}{w^{\prime t+1}_{2,4}[t+1]}]$}; 

\path (0,1.5) node (v1) {};
\path (6/5,1.5) node (v2) {};
\path (12/5,1.5) node (v3) {};
\path (18/5,1.5) node (v4) {};
\path (24/5,1.5) node (v5) {};
\path (6,1.5)  node (v6) {};
\path (7.5,1.5)  node[color=blue,fill] (vrt) {};

\path (0,0) node (d1) {$D_1$};
\path (1,0) node (d2) {$D_2$};
\path (2,0) node  (d3) {$D_3$};
\path (3,0) node (d4)  {$D_4$};
\path (4,0) node (d5)[shade]  {$D_5$};
\path (5,0) node (d6)[shade]  {$D_6$};
\path (6,0) node (d7)[shade]  {$D_7$};

%

\draw[->] (v0) --node[draw=none,fill=none,yshift=-.4cm] {$X_{\!\{1,2\}}\![t]$}  (v1);

\draw[->] (v0) --node[draw=none,fill=none,yshift=-.4cm,xshift=.2cm] {$X_{\!\{1,3\}}\![t]$}  (v2);
\draw[->] (v0) --node[draw=none,fill=none,yshift=-.4cm,xshift=.4cm] {$X_{\!\{1,4\}}\![t]$}  (v3);
\draw[->] (v0) --node[draw=none,fill=none,yshift=-.4cm,xshift=.75cm] {$X_{\!\{2,3\}}\![t]$}  (v4);

\draw[->] (v0) --node[draw=none,fill=none,yshift=-.4cm,xshift=1.15cm] {$X_{\!\{2,4\}}\![t]$}  (v5);
\draw[->] (v0) --node[draw=none,fill=none,yshift=-.4cm,xshift=1.6cm] {$X_{\!\{3,4\}}\![t]$}  (v6);
\draw[->,very thick,blue] (v0) --node[draw=none,fill=none,yshift=-.4cm,xshift=2.1cm] {$\textcolor{blue}{X_{\{4\}}}\![t]$}  (vrt);
\draw[->,blue,dashed,very thick ](6.7,2.1) to [out=0,in=-90] (7.5,3.1)to [out=90,in=45] (4.5,3.5);

\draw[->,gray!75] (v1) -- (d1);
\draw[->,gray!75] (v1) -- (d2);
\draw[->,gray!75] (v2) -- (d1);
\draw[->,gray!75] (v2) -- (d3);
\draw[->,gray!75] (v3) -- (d1);
\draw[->,gray!75] (v3) -- (d4);
\draw[->,gray!75] (v4) -- (d2);
\draw[->,gray!75] (v4) -- (d3);
\draw[->,gray!75] (v5) -- (d2);
\draw[->,gray!75] (v5) -- (d4);

\draw[->,gray!75] (v6) -- (d3);
\draw[->,gray!75] (v6) -- (d4);

\draw[->] (v1) -- (d5);
\draw[->] (v2) -- (d5);
\draw[->] (v3) -- (d5);
\draw[->] (v5) -- (d5);

\draw[->] (v1) -- (d6);
\draw[->] (v4) -- (d6);
\draw[->] (v5) -- (d6);
\draw[->] (v6) -- (d6);

\draw[->] (v2) -- (d7);
\draw[->] (v4) -- (d7);
\draw[->] (v6) -- (d7);
\draw[->] (v3) -- (d7);

\draw[->,very thick,blue] (vrt) -- (d5);
\draw[->, very thick,blue] (vrt) -- (d6);
\draw[->,very thick,blue] (vrt) -- (d7);
\draw[->,very thick,blue] (vrt) -- (d4);

\end{tikzpicture}
\caption{The extended combination network of Example \ref{CombNet-BME-example22-3}. A block Markov encoding scheme achieves the rate pair $(1,3)$ over the original combination network. At time $t+1$, $w^{\prime}_{2,4}[t+1]$ contains the information of symbol $X_{\{4\}}[t]$. }
\label{CombNet-example3to4bmecode-final}
\end{figure} \end{example}

\subsection{The block Markov encoding scheme: an achievable region}
In both Examples \ref{CombNet-BME-example12-3} and \ref{CombNet-BME-example22-3}, the achievability is through a block Markov encoding scheme, and the construction of it is explained with the help of an \textit{extended combination network}.
Before further explaining this construction, let us clarify what we mean by an extended combination network.

\begin{definition}[Extended combination network]
An extended combination network is formed from the original combination network by adding some extra nodes,  called \textit{virtual nodes}, to the intermediate layer. The source is connected to all of the virtual nodes through edges that we call \textit{virtual resources}. Each virtual resource is connected to a subset of  receivers which we refer to as the end-destinations of that virtual resource. This subset is chosen, depending on the structure of the original combination network and the target rate pair, through an optimization problem that we will address later in this section.
\end{definition}

The idea behind extending the combination network is as follows. The encoding is such that in order to decode the common and private messages in block $t$, each receiver may need the information that it will decode in block $t+1$ (recall that receivers perform backward decoding).
So,  the source wants to design both its outgoing symbols in block $t$ and the side information that the receiver will have in block $t+1$. This is captured by designing a code over an extended combination network, where the virtual resources play, in a sense, the role of the side information. 

Over the extended combination networks, we will design a \textit{general multicast code} (as opposed to one for nested message sets). We will only consider multicast codes based on basic linear superposition coding to emulate the virtual resources (see Remark \ref{remarkbme}). We further elaborate on this in the following.

\begin{definition}[Emulatable virtual resources]
Given an extended combination network and a general multicast code over it,
a virtual resource $v$ is called emulatable if the multicast code allows reliable communication at a rate of at least $1$ to all end-destinations of that virtual resource (over the extended combination network). We call a set of virtual resources  emulatable if they are all simultaneously emulatable.
\end{definition}

We now outline the steps in devising a block Markov encoding scheme for this problem. 
\begin{enumerate}
\item Add a set of virtual resources to the original combination network to form an {extended combination network}. 
\item Design a general (as opposed to one for nested message sets) multicast code over the extended combination network such that all the virtual resources are emulatable. 
\item Use the multicast code to make all the virtual resources {emulatable}. More precisely, use the information symbols in block $t+1$ to also convey the information carried on the virtual resources in block $t$. Use the remaining information symbols to communicate the common and private information symbols.
\end{enumerate}
An achievable rate-region could then be found by optimizing over the virtual resources and the multicast code. 

Formulating this problem in its full generality is not the goal of this section. We instead aim to construct a simple block Markov encoding scheme, show its advantages in  code design, and characterize a region achievable by it. To this end, we confine ourselves to the following two assumptions: 
(i) the virtual resources that we introduce are connected to all private receivers and different subsets of  public receivers, and (ii) the multicast code that we design over the extended combination network is a {basic} linear superposition code along the lines of the codes in Section \ref{lb-CombNet-ZES}.


In order to devise our simple block Markov scheme, we first create an extended combination network by adding for every  $\mathcal{S}\subseteq \mathcal{I}_1$, $\beta_\mathcal{S}$ many virtual resources which are connected to all the private receivers and all the public receivers in $\mathcal{S}\subseteq \mathcal{I}_1$ (and only those). We denote this subset of virtual resources by $\mathcal{V}_\mathcal{S}$.

Over this extended combination network, we then design a (general) multicast code. 
We say that a multicast code achieves rate tuple $(R_1,\alpha_{\{1,\ldots,m\}},\ldots,\alpha_\phi)$ over the extended combination network, if it reliably communicates a message of rate $R_1$ to all receivers, and independent messages of rates $\alpha_\mathcal{S}$, $\mathcal{S}\subseteq \mathcal{I}_1$, to all public receivers in $\mathcal{S}$ and all private receivers. 
To design such a multicast code, we use a basic linear superposition code (i.e.,  a zero-structured encoding matrix). It turns out that the rate tuple \scalebox{.9}[1]{$(R_1,\alpha_{\{1,\ldots,m\}},\ldots,\alpha_\phi)$} is achievable if the following inequalities are satisfied (see Remark \ref{lb-CombNet-remarkdecodability}):
\begin{align}
&\hspace{0cm}\text{Decodability constraints at public receivers:}\nonumber\\
&\hspace{1.5cm}\sum_{\substack{\mathcal{S}\subseteq \mathcal{I}_1\\\mathcal{S}\ni i}}\!\!\alpha_\mathcal{S}\leq\sum_{\substack{\mathcal{S}\in\Lambda}}\!\!\alpha_\mathcal{S}\!+\!\!\sum_{\substack{\mathcal{S}\in\Lambda^c\\\mathcal{S}\ni i}}\!\!\left(\card{\e_\mathcal{S}}+\beta_\mathcal{S}\right),\quad \ {\forall \Lambda\subseteq \{\{i\}\star\}\ \text{ superset saturated}},\ \forall i\in \mathcal{I}_1\label{CombNet-bm-r1}\\
&\hspace{1.5cm}R_1+\sum_{\substack{\mathcal{S}\subseteq \mathcal{I}_1\\\mathcal{S}\ni i}}\!\!\alpha_\mathcal{S}\leq \sum_{\substack{\mathcal{S}\subseteq \mathcal{I}_1\\\mathcal{S}\ni i}}\!\!\left(\card{\e_\mathcal{S}}+\beta_\mathcal{S}\right),\ \forall i\in \mathcal{I}_1\\
&\hspace{0cm}\text{Decodability constraints at private receivers:}\nonumber\\
&\hspace{1.5cm}R^\prime_2\leq \sum_{\mathcal{S}\in\Lambda}\!\!\alpha_\mathcal{S}+\sum_{\mathcal{S}\in\Lambda^c}\!\!\left(\card{\e_{\mathcal{S},p}}+\beta_\mathcal{S}\right),\quad{\forall \Lambda\subseteq 2^{\mathcal{I}_1}\ \text{ superset saturated}},\ \forall p\in \mathcal{I}_2\\
&\hspace{1.5cm}R_1+R^\prime_2\leq\sum_{\mathcal{S}\subseteq \mathcal{I}_1}\!\!\left(\card{\e_{\mathcal{S},p}}+\beta_\mathcal{S}\right),\ \forall p\in \mathcal{I}_2,
\end{align}
where 
\begin{align}
R_2^\prime=\sum_{\mathcal{S}\subseteq \mathcal{I}_1} \alpha_\mathcal{S}.\label{sumrateR2p}
\end{align}

Given such a multicast code, we find conditions for the virtual resources to be emulatable. The proof is relegated to Appendix~\ref{ap-CombNet-bm-lem-match}.
\begin{lemma}
\label{lb-CombNet-bm-lem-match}
Given an extended combination network with $\beta_\mathcal{S}$ virtual resources $\mathcal{V}_\mathcal{S}$, $\mathcal{S}\subseteq \mathcal{I}_1$, and a multicast code design that achieves the rate tuple  $(R_1,\alpha_{\{1,\ldots,m\}},\ldots,\alpha_\phi)$, all virtual resources are emulatable provided that the following set of inequalities hold.
\begin{align}
\label{lb-CombNet-bm-match}
\sum_{\mathcal{S}\in\Lambda}\beta_\mathcal{S}\leq\sum_{\mathcal{S}\in\Lambda}\alpha_\mathcal{S},\quad \forall \Lambda\subseteq 2^{\mathcal{I}_1} \text{superset saturated}.
\end{align}
\end{lemma}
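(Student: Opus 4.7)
The plan is to cast the emulation problem as a capacitated bipartite assignment and appeal to a Hall / max-flow min-cut result. Given the assumed multicast code over the extended combination network, each of the $\alpha_{\mathcal{S}}$ information symbols of the sub-message of rate $\alpha_{\mathcal{S}}$ is reliably decoded by every public receiver in $\mathcal{S}$ and every private receiver. A virtual resource $v\in \mathcal{V}_{\mathcal{S}'}$ has as end-destinations exactly the public receivers in $\mathcal{S}'$ together with all private receivers, so to emulate $v$ it suffices to dedicate, in block $t+1$, any one information symbol belonging to a sub-message of type $\mathcal{S}$ with $\mathcal{S}\supseteq \mathcal{S}'$: writing $X_v[t]$ into that slot delivers $X_v[t]$ to every end-destination of $v$, while the remaining slots still carry fresh data. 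With backward decoding as in \cite{WillemsMeulen85}, this is consistent across the $n$ blocks, and since we only specialize the values of some free data symbols rather than alter the code structure, the multicast code itself continues to function correctly.

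Next, I would build a bipartite graph $H$ with left vertices labeled by virtual-resource types $\mathcal{S}'\subseteq \mathcal{I}_1$ of multiplicity $\beta_{\mathcal{S}'}$, right vertices labeled by sub-message types $\mathcal{S}\subseteq \mathcal{I}_1$ of capacity $\alpha_{\mathcal{S}}$, and an edge from $\mathcal{S}'$ to $\mathcal{S}$ whenever $\mathcal{S}\supseteq \mathcal{S}'$. Emulating all $\sum_{\mathcal{S}'}\beta_{\mathcal{S}'}$ virtual resources simultaneously is exactly a feasible integral matching saturating the left side of $H$, or equivalently a feasible flow of value $\sum_{\mathcal{S}'}\beta_{\mathcal{S}'}$ in the natural $s$--$t$ network (source to left with capacity $\beta_{\mathcal{S}'}$, bipartite edges of infinite capacity, right to sink with capacity $\alpha_{\mathcal{S}}$).

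By max-flow / min-cut (the capacitated form of Hall's theorem), such a flow exists if and only if
\[
\sum_{\mathcal{S}'\in \mathcal{C}}\beta_{\mathcal{S}'}\;\leq\;\sum_{\mathcal{S}\in N(\mathcal{C})}\alpha_{\mathcal{S}},\qquad \forall \mathcal{C}\subseteq 2^{\mathcal{I}_1},
\]
where $N(\mathcal{C})=\{\mathcal{S}:\exists\,\mathcal{S}'\in \mathcal{C},\ \mathcal{S}\supseteq \mathcal{S}'\}$. Two structural observations then reduce this to \eqref{lb-CombNet-bm-match}: (i) $N(\mathcal{C})$ is always superset-saturated, and $N(\Lambda)=\Lambda$ whenever $\Lambda$ is superset-saturated; (ii) replacing $\mathcal{C}$ by the smallest superset-saturated set containing it can only increase the left-hand side while keeping $N(\mathcal{C})$ unchanged, so the binding constraints are exactly those for superset-saturated $\Lambda$. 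Combining (i) and (ii), the Hall condition collapses to $\sum_{\mathcal{S}\in \Lambda}\beta_{\mathcal{S}}\leq \sum_{\mathcal{S}\in \Lambda}\alpha_{\mathcal{S}}$ for every superset-saturated $\Lambda$, which is precisely \eqref{lb-CombNet-bm-match}.

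The main obstacle is recognizing this reduction from the general Hall / min-cut statement to the clean superset-saturated form of \eqref{lb-CombNet-bm-match}; once one notes that the compatibility relation $\mathcal{S}'\subseteq \mathcal{S}$ makes every relevant neighborhood upward-closed in $2^{\mathcal{I}_1}$, the equivalence is immediate. A minor technical caveat to handle is feasibility across blocks over a finite field, but the standard block-Markov / backward-decoding framework together with the fact that only integer numbers of symbols are involved makes this routine; boundary blocks are handled by initializing with a dummy block and closing with a terminating block that transmits the residual virtual-resource symbols directly, incurring only a vanishing rate loss as $n\to\infty$.
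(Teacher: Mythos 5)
Your proof is correct and follows essentially the same route as the paper's: cast the emulation problem as a bipartite assignment between virtual resources of type $\mathcal{S}'$ and (decodable) information symbols of type $\mathcal{S}\supseteq\mathcal{S}'$, apply max-flow/min-cut (equivalently the capacitated Hall condition), and observe that because the compatibility relation is upward-closed in $2^{\mathcal{I}_1}$, the binding cut constraints are precisely the superset-saturated ones, which reproduces \eqref{lb-CombNet-bm-match}. The paper expands the same bipartite structure into a four-layer unicast network and invokes its Lemma~\ref{lb-CombNet-mincut} for the min-cut, whereas you apply Hall's deficiency form directly on the aggregated graph and do the superset-saturated reduction by hand; these are the same calculation packaged differently.
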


It remains to characterize the common and private rates that our simple block Markov encoding scheme achieves over the original 
combination network. To do so, we disregard 
the information symbols that are used to emulate the virtual resources, for they bring 
redundant information, and characterize the remaining rate of the common and private 
information symbols. In the above scheme, this is simply 
$(R_1,R_2^\prime-\sum_{\mathcal{S}\subseteq \mathcal{I}_1}\beta_\mathcal{S})$, 
where the real valued parameters $\alpha_\mathcal{S}, \beta_\mathcal{S}$  satisfy 
inequalities \eqref{CombNet-bm-r1}-\eqref{sumrateR2p} and the following non-negativity constraints:
\begin{eqnarray}
\label{lb-CombNet-nonnegbeta}
&\alpha_\mathcal{S}\geq 0,\\
&\beta_\mathcal{S}\geq 0.
\end{eqnarray}

To simplify the representation, we define
$\gamma_\mathcal{S}=\alpha_\mathcal{S}-\beta_\mathcal{S},\ \forall \mathcal{S}\subseteq \mathcal{I}_1$,
and then eliminate $\alpha$'s and $\beta$'s from all inequalities involved. We thus have the following theorem.

\begin{theorem}
\label{lb-CombNetbme-Theorem}
The rate pair $(R_1,R_2)$ is achievable if there exist parameters $\gamma_\mathcal{S}$, $\mathcal{S}\subseteq \mathcal{I}_1$, such that they satisfy the following inequalities:
\begin{align}
&\hspace{-.65cm}\sum_{\mathcal{S}\in\Lambda}\gamma_\mathcal{S}\geq0,\quad\quad\forall \Lambda\subseteq 2^{\mathcal{I}_1}\text{ superset saturated}\label{CombNetbme-posor} \\
&\hspace{-.65cm}R_2=\sum_{\mathcal{S}\subseteq \mathcal{I}_1} \gamma_\mathcal{S},\\
&\hspace{-.65cm}\text{Decodability constraints at public receivers:}\nonumber\\
&\hspace{-.65cm}\sum_{\substack{\mathcal{S}\subseteq \mathcal{I}_1\\\mathcal{S}\ni i}}\gamma_\mathcal{S}\leq\sum_{\substack{\mathcal{S}\in\Lambda}}\gamma_\mathcal{S}+\sum_{\substack{\mathcal{S}\in\Lambda^c\\\mathcal{S}\ni i}}\card{\e_\mathcal{S}},\quad { \forall \Lambda\subseteq \{\{i\}\star\}\ \text{ superset saturated}},\ \forall i\in \mathcal{I}_1 \label{CombNetbme-subr1or}\\
&R_1+\sum_{\substack{\mathcal{S}\subseteq \mathcal{I}_1\\\mathcal{S}\ni i}}\gamma_\mathcal{S}\leq \sum_{\substack{\mathcal{S}\subseteq \mathcal{I}_1\\\mathcal{S}\ni i}}\card{\e_\mathcal{S}},\quad \forall i\in \mathcal{I}_1 \\
&\hspace{-.65cm}\text{Decodability constraints at private receivers:}\nonumber\\
&\hspace{-.65cm}R_2\leq \sum_{\mathcal{S}\in\Lambda}\gamma_\mathcal{S}+\sum_{\mathcal{S}\in\Lambda^c}\card{\e_{\mathcal{S},p}},\quad{ \forall \Lambda\subseteq 2^{\mathcal{I}_1}\ \text{ superset saturated}},\ \forall p\in \mathcal{I}_2 \\
&R_1+R_2\leq\sum_{\mathcal{S}\subseteq \mathcal{I}_1}\card{\e_{\mathcal{S},p}},\quad \forall p\in \mathcal{I}_2 .
\end{align}
\end{theorem}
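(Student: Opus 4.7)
The plan is to reverse the change of variables $\gamma_\mathcal{S}=\alpha_\mathcal{S}-\beta_\mathcal{S}$ through which the theorem was obtained from the $(\alpha,\beta)$-formulation assembled in the preceding text, and then invoke the construction already put in place there: the extended combination network with virtual resources, the zero-structured multicast code over it, and block Markov emulation.

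First, given any real-valued $\gamma_\mathcal{S}$ satisfying all the inequalities of the theorem, I would take the positive/negative parts $\alpha_\mathcal{S}:=\max(\gamma_\mathcal{S},0)$ and $\beta_\mathcal{S}:=\max(-\gamma_\mathcal{S},0)$, so that $\alpha_\mathcal{S},\beta_\mathcal{S}\geq 0$ and $\alpha_\mathcal{S}-\beta_\mathcal{S}=\gamma_\mathcal{S}$. Setting $R_2':=\sum_\mathcal{S}\alpha_\mathcal{S}$ yields $R_2'-\sum_\mathcal{S}\beta_\mathcal{S}=\sum_\mathcal{S}\gamma_\mathcal{S}=R_2$. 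The next task is to verify that each $(\alpha,\beta)$-inequality from \eqref{CombNet-bm-r1}--\eqref{sumrateR2p} is algebraically identical to the corresponding $\gamma$-inequality of the theorem: in each case the substitution $\alpha_\mathcal{S}-\beta_\mathcal{S}=\gamma_\mathcal{S}$ collapses the public-receiver constraints (after noting that for $\Lambda\subseteq\{\{i\}\star\}$ one has $\sum_{\mathcal{S}\ni i}\alpha_\mathcal{S}-\sum_\Lambda\alpha_\mathcal{S}=\sum_{\Lambda^c,\mathcal{S}\ni i}\alpha_\mathcal{S}$), while substituting $R_2'=R_2+\sum_\mathcal{S}\beta_\mathcal{S}$ collapses the two private-receiver constraints. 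Simultaneously the emulation condition $\sum_{\mathcal{S}\in\Lambda}\beta_\mathcal{S}\leq\sum_{\mathcal{S}\in\Lambda}\alpha_\mathcal{S}$ of Lemma \ref{lb-CombNet-bm-lem-match} becomes exactly the non-negativity constraint \eqref{CombNetbme-posor}. Integer-valued parameters can then be obtained by the usual rescaling argument in Remark \ref{lb-CombNet-alphapos}.

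Having produced valid parameters, the rest of the argument is precisely the scheme described above the theorem. I would attach $\beta_\mathcal{S}$ virtual resources $\mathcal{V}_\mathcal{S}$ to each $\mathcal{S}\subseteq\mathcal{I}_1$, each connecting the source to every private receiver and to all public receivers in $\mathcal{S}$; apply the zero-structured multicast code of Section \ref{lb-CombNet-ZES} (using Remark \ref{lb-CombNet-remarkdecodability}) with rate-split $\alpha_\mathcal{S}$ over this extended network to deliver the rate tuple $(R_1,\alpha_{\{1,\ldots,m\}},\ldots,\alpha_\phi)$ over a sufficiently large finite field; then stack the code over $n$ blocks. In block $t+1$, a matching guaranteed by Lemma \ref{lb-CombNet-bm-lem-match} reserves $\sum_\mathcal{S}\beta_\mathcal{S}$ of the $R_2'$ positions of $W_2'[t+1]$ to re-convey the symbols that appeared on the virtual resources in block $t$, so that the end-destinations of each virtual resource recover the promised symbol one block later via backward decoding. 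The remaining $R_2'-\sum_\mathcal{S}\beta_\mathcal{S}=R_2$ positions carry fresh private data per block, and a single terminating boundary block of vanishing rate-loss produces asymptotic rates $(R_1,R_2)$ as $n\to\infty$.

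The only genuine subtlety is the case of negative $\gamma_\mathcal{S}$, which forces $\beta_\mathcal{S}>0$ and hence requires virtual resources to be successfully emulated; making the whole construction internally consistent hinges on this emulation being feasible, which is exactly what \eqref{CombNetbme-posor} guarantees via Lemma \ref{lb-CombNet-bm-lem-match}. The remainder is routine bookkeeping, matching the two formulations term by term.
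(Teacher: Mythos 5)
Your proposal is correct and follows essentially the same route as the paper: the paper obtains the theorem by substituting $\gamma_\mathcal{S}=\alpha_\mathcal{S}-\beta_\mathcal{S}$ into the $(\alpha,\beta)$-constraints and eliminating, while you run the same substitution in reverse by taking $\alpha_\mathcal{S}=\gamma_\mathcal{S}^+$, $\beta_\mathcal{S}=\gamma_\mathcal{S}^-$ and verifying that every constraint of \eqref{CombNet-bm-r1}--\eqref{sumrateR2p} together with Lemma~\ref{lb-CombNet-bm-lem-match}'s emulation condition depends on $(\alpha,\beta)$ only through $\gamma$. The algebraic identifications you carry out (in particular that $\Lambda\subseteq\{\{i\}\star\}$ makes $\sum_{\mathcal{S}\ni i}\alpha_\mathcal{S}-\sum_{\mathcal{S}\in\Lambda}\alpha_\mathcal{S}=\sum_{\mathcal{S}\in\Lambda^c,\,\mathcal{S}\ni i}\alpha_\mathcal{S}$) are exactly what is needed, and the rest is the block Markov construction already set up in the surrounding text.
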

{
Comparing the rate-regions in Theorem \ref{lb-CombNet-Theoremk2} and Theorem \ref{lb-CombNetbme-Theorem}, we see that the former has a more relaxed set of inequalities in \eqref{CombNetbme-posor} while the latter is more relaxed in inequalities  \eqref{CombNetbme-subr1or}. 
Although the two regions are not comparable in general, it turns out that for $m\leq 3$, the two rate-regions coincide and characterize the capacity region (see Theorem \ref{lb-CombNet-Theorem-outer2} and Theorem \ref{lb-CombNet-Theorem-outer3}). Furthermore, the combination network in Fig. \ref{CombNet-example3to4} serves as an instance where the rate-region in Theorem \ref{lb-CombNetbme-Theorem} includes rate pairs that are not included in the region of Theorem \ref{lb-CombNet-Theoremk2} (see Examples \ref{lb-CombNet-example3to4} and \ref{CombNet-BME-example22-3}).

Fig.~\ref{shekl} plots the rate-regions of Theorem \ref{lb-CombNet-Theoremk2} and Theorem \ref{lb-CombNetbme-Theorem} for the network of Fig.~\ref{CombNet-example3to4}. The grey region is  the region of Theorem \ref{lb-CombNet-Theoremk2} (i.e., achievable using pre-encoding, rate-splitting, and linear superposition coding) and the red shaded region is the region of Theorem \ref{lb-CombNetbme-Theorem} (i.e.,  achievable using the proposed block Markov encoding scheme). In this example, the proposed block Markov encoding scheme strictly outperforms our previous schemes.
\begin{figure}
\centering
\includegraphics[width=.7\textwidth]{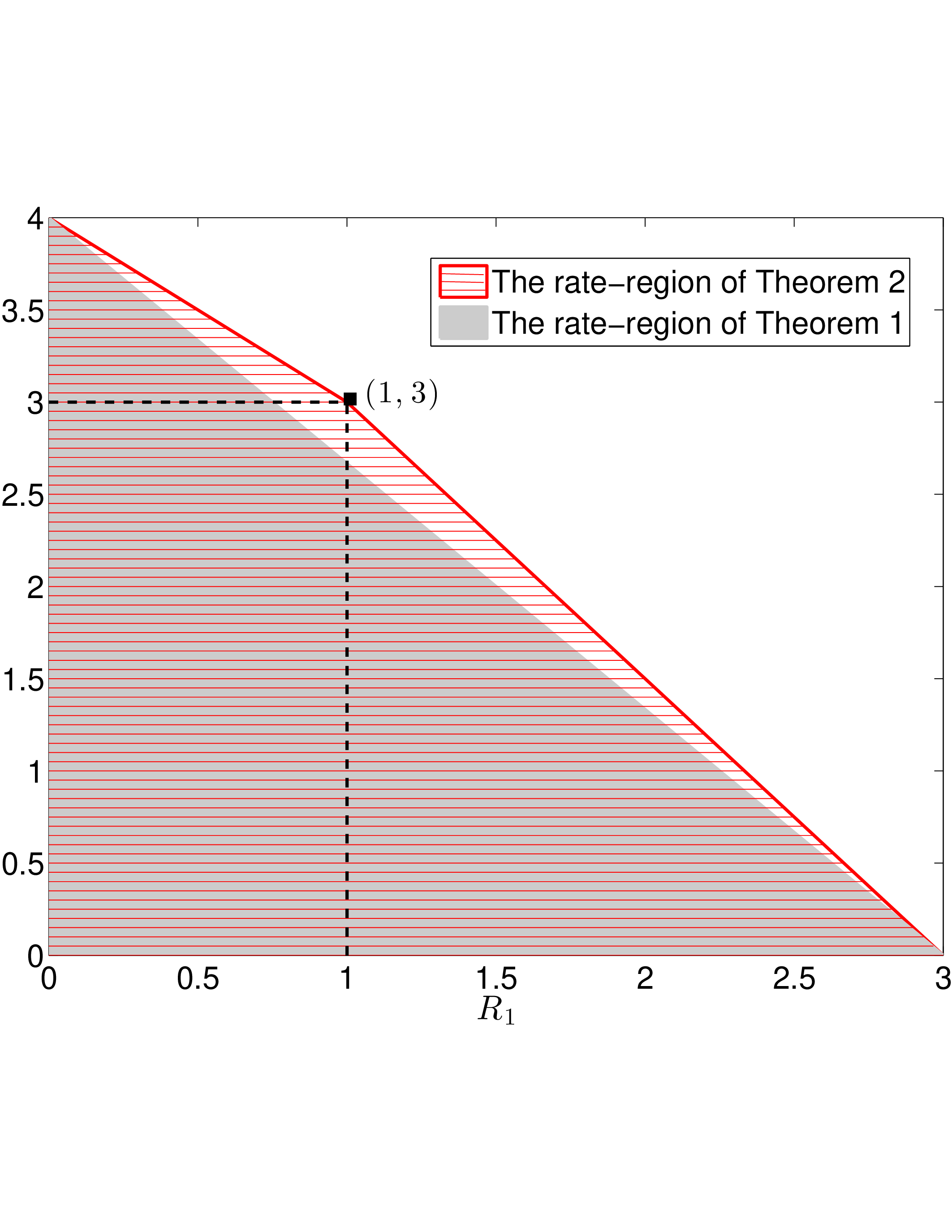}
\caption{The rate-regions of Theorems  \ref{lb-CombNet-Theoremk2} and \ref{lb-CombNetbme-Theorem} for the combination network in Fig.~\ref{CombNet-example3to4}}
\label{shekl}
\end{figure}


\begin{remark}
 It remains open whether the rate-region of Theorem \ref{lb-CombNetbme-Theorem} always includes the rate-region of Theorem \ref{lb-CombNet-Theoremk2}, or not. We conjecture that this is true.
\end{remark}
\section{Optimality Results}
\label{CombNet-outerbound}
In this section, we prove our optimality results. 
More precisely, we prove optimality of the zero-structured encoding scheme of Subsection \ref{lb-CombNet-ZES} when $m=2$, optimality of the structured linear code with pre-encoding discussed in Subsection \ref{lb-CombNet-MES} when $m=3$ (or fewer), and optimality of the block Markov encoding of Section \ref{CombNet-ach-blockMarkov} when $m= 3$ (or fewer). This is summarized in the following theorems.
\begin{theorem}
\label{lb-CombNet-Theorem-outer1}
Over a combination network with two public and any number of private receivers, the rate pair $(R_1,R_2)$ is achievable if and only if it lies in the rate-region of Proposition \ref{CombNetp1-innerbound}.
\end{theorem}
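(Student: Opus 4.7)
The achievability direction is immediate from Proposition~\ref{CombNetp1-innerbound}; the plan focuses on the matching converse. Given any sequence of codes with blocklength $n$ and vanishing error probability $\epsilon_n$ for the rate pair $(R_1,R_2)$, I would exhibit real-valued parameters $\alpha_\mathcal{S}$, $\mathcal{S}\subseteq\mathcal{I}_1=\{1,2\}$, built from the code's induced joint distribution, and show that each constraint in Proposition~\ref{CombNetp1-innerbound} holds up to an $\epsilon_n$ slack; a standard compactness argument as $n\to\infty$ then extracts a feasible solution in the limit.

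The crucial choice is to define the $\alpha_\mathcal{S}$ by a Venn-diagram decomposition of $H(W_2\mid W_1)$ along the two public receivers' observations $Y_1^n,Y_2^n$:
\begin{align*}
n\alpha_{\{1,2\}} &= I(Y_1^n;Y_2^n\mid W_1),\\
n\alpha_{\{i\}} &= H(Y_i^n\mid Y_{3-i}^n,W_1),\quad i=1,2,\\
n\alpha_{\emptyset} &= H(W_2\mid Y_1^n,Y_2^n,W_1).
\end{align*}
Non-negativity of all four quantities is immediate, and the chain rule together with the facts that each $X^n$ is a deterministic function of $(W_1,W_2)$ and that $W_1\perp W_2$ yields $\sum_\mathcal{S}\alpha_\mathcal{S}=H(W_2\mid W_1)/n=R_2$, as the structural equality requires.

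Each remaining constraint reduces to a short calculation. A chain-rule identity gives $\alpha_{\{i\}}+\alpha_{\{1,2\}}=H(Y_i^n\mid W_1)/n$, so the public receiver inequality follows by combining Fano's inequality with the cut-set bound $H(Y_i^n)\leq n\sum_{\mathcal{S}\ni i}|\mathcal{E}_\mathcal{S}|$. For each superset-saturated $\Lambda$ and private receiver $p$, I would apply Fano to $R_2\leq I(W_2;Y_p^n\mid W_1)/n+\epsilon_n$, split $Y_p^n$ into $X_{\Lambda,p}^n$ and $X_{\Lambda^c,p}^n$, bound the entropy of the latter by the cut-set, and match $H(X_{\Lambda,p}^n\mid W_1)\leq H(X_\Lambda^n\mid W_1)$ with the appropriate partial sum of $\alpha$'s. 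For $\Lambda$ containing $\{1\}$ or $\{2\}$ the matching is an exact chain-rule identity; for the delicate case $\Lambda=\{\{1,2\}\}$, the key observation is that $X_{\{1,2\}}^n$ is a common deterministic function of both $Y_1^n$ and $Y_2^n$, whence data processing yields $I(Y_1^n;Y_2^n\mid W_1)\geq H(X_{\{1,2\}}^n\mid W_1)$. The remaining private sum constraint $R_1+R_2\leq\sum_\mathcal{S}|\mathcal{E}_{\mathcal{S},p}|$ is a direct Fano plus cut-set bound on $Y_p^n$.

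The main obstacle is precisely the choice that makes the $\Lambda=\{\{1,2\}\}$ inequality work: defining $\alpha_{\{1,2\}}=H(X_{\{1,2\}}^n\mid W_1)/n$ (the naive choice) fails whenever $X_{\{1\}}^n$ and $X_{\{2\}}^n$ carry identical pieces of $W_2$ that are \emph{not} already present in $X_{\{1,2\}}^n$; promoting $\alpha_{\{1,2\}}$ to $I(Y_1^n;Y_2^n\mid W_1)/n$ absorbs that shared information into the $\{1,2\}$-slot while preserving all of the chain-rule identities used elsewhere. For $m=2$ this Venn-diagram choice sidesteps the full sub-modularity machinery developed later in the paper, which becomes essential for the $m\geq 3$ converses.
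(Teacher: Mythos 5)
Your proof is correct, and it takes a genuinely different route from the paper. The paper's proof of Theorem~\ref{lb-CombNet-Theorem-outer1} first applies Fourier--Motzkin elimination to remove all $\alpha_\mathcal{S}$ from Proposition~\ref{CombNetp1-innerbound} and obtain the explicit projected region $R_1 \le \min_i(\card{\e_{\{i\}}}+\card{\e_{\{1,2\}}})$, $R_1+R_2 \le \min_p\sum_\mathcal{S}\card{\e_{\mathcal{S},p}}$, $2R_1+R_2 \le \min_p(\card{\e_{\{1\}}}+2\card{\e_{\{1,2\}}}+\card{\e_{\{2\}}}+\card{\e_{\phi,p}})$, and then proves each projected inequality directly via Fano, cut-set counting, and one application of sub-modularity (the $2R_1+R_2$ bound). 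You instead keep the lifted feasibility description and explicitly exhibit the auxiliary variables, setting $n\alpha_{\{1,2\}}=I(Y_1^n;Y_2^n\mid W_1)$, $n\alpha_{\{i\}}=H(Y_i^n\mid Y_{3-i}^n,W_1)$, $n\alpha_\phi=H(W_2\mid Y_1^n,Y_2^n,W_1)$, and verify each constraint of Proposition~\ref{CombNetp1-innerbound} directly. I checked the delicate steps: the sum-to-$R_2$ identity holds because $Y_1^n,Y_2^n$ are deterministic in $(W_1,W_2)$; the public-receiver bound follows from the exact identity $\alpha_{\{i\}}+\alpha_{\{1,2\}}=H(Y_i^n\mid W_1)/n$; for $\Lambda=\{\{1,2\}\}$ the data-processing chain $I(Y_1^n;Y_2^n\mid W_1)\ge I(X_{\{1,2\}}^n;Y_2^n\mid W_1)\ge H(X_{\{1,2\}}^n\mid W_1)$ (using that $X_{\{1,2\}}^n$ is a function of each of $Y_1^n,Y_2^n$) gives exactly the inequality you need. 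The two approaches are essentially dual: the paper projects out the auxiliaries and proves the projection, while you prove membership in the polyhedron by producing an entropic feasible point. Your route is arguably cleaner for $m=2$ since it avoids Fourier--Motzkin entirely, but as you correctly note it is special to $m=2$: for $m\ge 3$ the multivariate mutual-information terms in the Venn decomposition can be negative, so non-negativity of the $\alpha_\mathcal{S}$ fails, and the paper's decompression machinery (Lemmas~\ref{CombNet-count} and~\ref{lb-CombNet-decompression}) is needed to organize the sub-modularity arguments.
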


\begin{theorem}
\label{lb-CombNet-Theorem-outer2}
Over a combination network with three (or fewer) public and any number of private receivers, the rate pair $(R_1,R_2)$ is achievable if and only if it lies in the rate-region of Theorem \ref{lb-CombNet-Theoremk2}.
\end{theorem}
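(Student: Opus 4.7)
The achievability direction is immediate from Theorem \ref{lb-CombNet-Theoremk2} itself, so I concentrate on the converse. My goal is to show that for any achievable rate pair $(R_1,R_2)$ one can exhibit real numbers $\alpha_\mathcal{S}$, $\mathcal{S}\subseteq \mathcal{I}_1=\{1,2,3\}$, satisfying the listed feasibility constraints, including the non-negativity $\alpha_\mathcal{S}\geq 0$ for $\mathcal{S}\neq \phi$. Since the encoding is deterministic and each edge carries at most $\log_2 q$ bits, I have the basic bound $H(X_\mathcal{S}^n)\leq n|\mathcal{E}_\mathcal{S}|$ on every resource group, which will be the only ``hard'' capacity information I use from the network structure.

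My plan is to construct the $\alpha_\mathcal{S}$'s out of conditional entropies evaluated on the actual $n$-block code, in such a way that the constraints of Theorem \ref{lb-CombNet-Theoremk2} appear naturally. The two sets of ``easy'' constraints come first. The private-receiver cut $R_1+R_2\leq \sum_\mathcal{S}|\mathcal{E}_{\mathcal{S},p}|$ drops out of Fano applied to receiver $p$, bounding $n(R_1+R_2)\leq H(Y_p^n)+n\epsilon_n\leq \sum_\mathcal{S} H(X_{\mathcal{S},p}^n)+n\epsilon_n$. For each public receiver $i$, starting from $nR_1\leq I(W_1;Y_i^n)+n\epsilon_n$ and expanding using $Y_i^n=(X_\mathcal{S}^n:\mathcal{S}\ni i)$, I obtain an inequality of the form $R_1+\sum_{\mathcal{S}\ni i}\alpha_\mathcal{S}^{(i)}\leq\sum_{\mathcal{S}\ni i}|\mathcal{E}_\mathcal{S}|$ where $n\alpha_\mathcal{S}^{(i)}$ is a conditional entropy of $X_\mathcal{S}^n$ given $W_1$ and a specific subset of the other $X_\mathcal{T}^n$'s (chosen by the chain-rule order used to expand $H(Y_i^n|W_1)$). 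I then define a single common $n\alpha_\mathcal{S}$ that is consistent with these per-receiver inequalities and also telescopes, over all $\mathcal{S}$, to $n R_2$ up to a Fano term; the natural candidate is a conditional entropy $H(X_\mathcal{S}^n\mid W_1, X_{\mathcal{T}}^n:\mathcal{T}\in\Lambda(\mathcal{S}))$ for a carefully chosen superset-saturated family $\Lambda(\mathcal{S})$. The private-receiver constraints \eqref{CombNetk2-ach1or} then follow by combining Fano at receiver $p$ with $H(X_\mathcal{S}^n)\leq n|\mathcal{E}_\mathcal{S}|$ applied to the sets $\mathcal{S}\in\Lambda^c$.

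The main obstacle is reconciling these choices: the same random variables $X_\mathcal{S}^n$ appear in the definitions for different receivers and for different saturated families $\Lambda$, and the required non-negativities $\alpha_\mathcal{S}\geq 0$ for $\mathcal{S}\neq\phi$ are not individual conditional-entropy inequalities but combinations thereof. To handle this I would invoke the framework of Balister and Bollob\'{a}s for submodularity of the entropy function, expressing every target inequality as a non-negative combination of instances $H(A\cup B)+H(A\cap B)\leq H(A)+H(B)$. Following the promise in the introduction, I would set up the graphical representation over the lattice $2^{\mathcal{I}_1}$ that organizes these applications, and prove a lemma certifying that the required combination always exists for $m\leq 3$. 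The restriction $m\leq 3$ is essential here: with only eight subsets in $2^{\{1,2,3\}}$ the combinatorial search for a submodularity certificate can be carried out explicitly, whereas Example \ref{lb-CombNet-example3to4} indicates that at $m=4$ no such certificate can exist (since Theorem \ref{lb-CombNet-Theoremk2} is no longer tight).

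Putting the pieces together, the constructed vector $(\alpha_\mathcal{S})_{\mathcal{S}\subseteq\{1,2,3\}}$ satisfies every inequality of Theorem \ref{lb-CombNet-Theoremk2} up to vanishing Fano terms; letting $n\to\infty$ and using the closedness of the feasibility region in $(R_1,R_2)$ finishes the converse. I expect the graphical/submodularity step to be by far the most delicate; the Fano and edge-capacity manipulations are straightforward once the right definition of $\alpha_\mathcal{S}$ is in place, but finding that definition \emph{and} exhibiting an explicit submodularity certificate for each of the many inequalities is where essentially all of the work lies.
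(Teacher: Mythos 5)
Your instincts are broadly right---Fano plus $H(X_\mathcal{S}^n)\leq n|\mathcal{E}_\mathcal{S}|$ are the only network-specific ingredients, and the Balister--Bollob\'{a}s submodularity machinery is indeed where the $m\leq 3$ restriction bites---but the plan as stated does not close the gap. The step you flag as the obstacle is a genuine dead end: there is no single assignment $n\alpha_\mathcal{S}=H(X_\mathcal{S}^n\mid W_1,\,X_\mathcal{T}^n:\mathcal{T}\in\Lambda(\mathcal{S}))$ that satisfies all the constraints simultaneously. With the natural candidate $\Lambda(\mathcal{S})=\{\mathcal{T}:\mathcal{T}\supsetneq\mathcal{S}\}$, the chain rule over a superset-saturated $\Lambda$ in decreasing-cardinality order conditions on strictly more at each step (e.g.\ on $X^n_{\{1,2\}}$ when processing $\{1,3\}$ inside $\{\{1\}\star\}$), giving $H(X^n_{\{\{i\}\star\}}|W_1)\leq\sum_{\mathcal{S}\ni i}n\alpha_\mathcal{S}$, which is the \emph{wrong} direction for the public-receiver constraint \eqref{CombNetk2-achr1or}. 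Conditioning on larger families flips that inequality but then ruins the positivity/private-receiver constraints for other $\Lambda$'s. The incomparable chain-rule orderings required by the three public receivers cannot all be realized by one choice of $\alpha_\mathcal{S}$.

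The paper avoids this by never fixing $\alpha_\mathcal{S}$ and, crucially, by two steps you do not mention. First, Lemma~\ref{lb-CombNetk2-relaxregion} shows (by a purely polyhedral argument with no entropy content) that for $m\leq 3$ the constraints $\alpha_\mathcal{S}\geq 0$ can be replaced by the weaker $\sum_{\mathcal{S}\in\Lambda}\alpha_\mathcal{S}\geq 0$ over superset-saturated $\Lambda$ without changing the $(R_1,R_2)$ projection. Second, the converse is carried out entirely in the projected space: Fourier--Motzkin on the relaxed inner bound produces inequalities $m_1R_1+m_2R_2\leq E$, each a nonnegative combination of constraints that cancels all $\alpha_\mathcal{S}$'s; applying the same combination to the surrogate outer bounds \eqref{CombNetk2-converse-1}--\eqref{CombNetk2-converse-4} leaves a balanced pair of multi-sets $\mathbf{\Gamma}$ (standard pattern, from the public receivers) and $\mathbf{\Lambda}$ (saturated pattern), and Lemma~\ref{lb-CombNet-decompression} shows $\mathbf{\Lambda}\leq\mathbf{\Gamma}$ for $m\leq 3$, so submodularity (Lemma~\ref{otherslemma}) compares the two residual entropy sums. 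This is not ``certifying a non-negativity constraint by submodularity''; it is a comparison between the two incompatible chain-rule expansions you were trying to reconcile, moved to where it can actually be made. Without discovering both Lemma~\ref{lb-CombNetk2-relaxregion} and the decompression lemma (Lemma~\ref{CombNet-count}/\ref{lb-CombNet-decompression}), your outline stalls exactly at the step you identified as containing all the work.
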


\begin{theorem}
\label{lb-CombNet-Theorem-outer3}
Over a combination network with three (or fewer) public and any number of private receivers, the rate pair $(R_1,R_2)$ is achievable if and only if it lies in the rate-region of Theorem \ref{lb-CombNetbme-Theorem}.
\end{theorem}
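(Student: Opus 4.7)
The achievability direction is immediate from Theorem~\ref{lb-CombNetbme-Theorem}, so only the converse requires proof: every achievable $(R_1,R_2)$ must satisfy the inequalities \eqref{CombNetbme-posor}--\eqref{CombNetbme-subr1or} (and the private-receiver analogues).

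My plan is to reduce the converse to the already-proved Theorem~\ref{lb-CombNet-Theorem-outer2}, which identifies the capacity region for $m\leq 3$ with the rate region of Theorem~\ref{lb-CombNet-Theoremk2}. Since Theorem~\ref{lb-CombNetbme-Theorem} is an achievability result, the inclusion $\text{Region}_{\ref{lb-CombNetbme-Theorem}}\subseteq \text{Capacity}=\text{Region}_{\ref{lb-CombNet-Theoremk2}}$ is automatic, and Theorem~\ref{lb-CombNet-Theorem-outer3} is equivalent to the reverse LP-level containment
\[
\text{Region}_{\ref{lb-CombNet-Theoremk2}}\ \subseteq\ \text{Region}_{\ref{lb-CombNetbme-Theorem}}\qquad(m\leq 3),
\]
a purely combinatorial claim about two polytopes in the $(\alpha_\mathcal{S})/(\gamma_\mathcal{S})$ variables. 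Having reduced to this statement, no further information-theoretic or entropy arguments are needed.

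To prove this inclusion I would proceed constructively. Given feasible $(\alpha_\mathcal{S})$ for Theorem~\ref{lb-CombNet-Theoremk2}, initialize $\gamma_\mathcal{S}=\alpha_\mathcal{S}$ and iteratively \emph{push mass up the subset lattice}: whenever some $\gamma_{\mathcal{S}}$ with $\mathcal{S}\ni i$ violates the additional block-Markov public-receiver inequality \eqref{CombNetbme-subr1or} for some $\Lambda\subseteq\{\{i\}\star\}$ superset saturated, decrement $\gamma_{\mathcal{S}}$ by a small $\delta$ and increment $\gamma_{\mathcal{S}'}$ by the same $\delta$ for a strict superset $\mathcal{S}'\supsetneq\mathcal{S}$ with $\mathcal{S}'\in\Lambda$. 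The key observation is that every superset-saturated partial sum $\sum_{T\in\Lambda'}\gamma_T$ is monotone non-decreasing under such upward shifts. Consequently, the private-receiver inequalities, the relaxed positivity constraint \eqref{CombNetbme-posor}, and the sum-rate equality $R_2=\sum_\mathcal{S}\gamma_\mathcal{S}$ are all preserved throughout the redistribution.

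The main obstacle is showing that this cascade terminates without violating the main public-receiver constraint $R_1+\sum_{\mathcal{S}\ni j}\gamma_\mathcal{S}\leq \sum_{\mathcal{S}\ni j}|\e_\mathcal{S}|$ at any receiver $j\neq i$: a shift from $\gamma_{\{i\}}$ into $\gamma_{\{i,j\}}$ adds load at receiver $j$, possibly forcing a further shift into $\gamma_{\{i,j,k\}}$, and so on. For $m\leq 3$ the lattice of subsets containing any fixed public index has at most four elements, so the cascade is finite and can be analyzed by direct case work; the slack afforded by Theorem~\ref{lb-CombNet-Theoremk2}'s feasibility at the other public receivers is exactly what allows the shifts to close. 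A cleaner and more uniform alternative is to recast existence of $(\gamma_\mathcal{S})$ as a transportation/flow-feasibility problem on the subset lattice and apply a max-flow/min-cut criterion in the spirit of Lemma~\ref{lb-CombNet-mincut}, from which the necessary conditions can be read off and matched against the inequalities of Theorem~\ref{lb-CombNet-Theoremk2}. Either route is expected to break down for $m\geq 4$, consistent with Example~\ref{lb-CombNet-example3to4} and Fig.~\ref{shekl} showing that the two LP regions genuinely differ in that regime.
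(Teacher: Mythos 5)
Your reduction (achievability from Theorem~\ref{lb-CombNetbme-Theorem}, capacity $=$ Region$_{\ref{lb-CombNet-Theoremk2}}$ from Theorem~\ref{lb-CombNet-Theorem-outer2}, hence the converse is equivalent to the purely polyhedral containment Region$_{\ref{lb-CombNet-Theoremk2}}\subseteq$ Region$_{\ref{lb-CombNetbme-Theorem}}$ for $m\le 3$) is logically sound and is a genuinely different strategy from the paper's. But the reduction is the easy part; the containment itself is where the substance lives, and you have not proved it. Your mass-pushing sketch correctly notes that superset-saturated partial sums of $\gamma$ are monotone under an upward shift $\gamma_\mathcal{S}\mapsto\gamma_\mathcal{S}-\delta$, $\gamma_{\mathcal{S}'}\mapsto\gamma_{\mathcal{S}'}+\delta$ with $\mathcal{S}'\supsetneq\mathcal{S}$, which takes care of \eqref{CombNetbme-posor} and the private-receiver constraints. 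But the very feature that makes these sums non-decreasing is what can break the public-receiver constraint $R_1+\sum_{\mathcal{T}\ni j}\gamma_\mathcal{T}\le\sum_{\mathcal{T}\ni j}|\e_\mathcal{T}|$ for any $j\in\mathcal{S}'\setminus\mathcal{S}$, and nothing in your argument rules out that constraint being tight at $j$ when the push is needed at $i$. You name this obstacle yourself and then dispose of it by asserting that ``direct case work'' for $m\le 3$ will close the cascade, or alternatively that a flow/min-cut reformulation ``is expected'' to give the answer. Neither is carried out, and neither is obviously true: you would also need to handle the boundary case $\mathcal{S}=\{1,2,3\}$ where there is no superset to push into, and show termination of the cascade when simultaneous violations at several receivers force competing pushes. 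So the key technical step is an unproved claim, not a proof.

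It is also worth noting that the containment you want is precisely the inclusion the paper explicitly leaves open (it is conjectured, following Theorem~\ref{lb-CombNetbme-Theorem}, that Region$_{\ref{lb-CombNetbme-Theorem}}$ always contains Region$_{\ref{lb-CombNet-Theoremk2}}$), so even the restricted $m\le 3$ case would be a separate combinatorial result the paper never establishes by LP manipulation; for $m\le 3$ the paper derives the equality of the two regions only as a corollary of both being equal to capacity. The paper's actual proof of Theorem~\ref{lb-CombNet-Theorem-outer3} does not pass through the polyhedral containment at all: it observes that Lemma~\ref{lb-CombNetk2-relaxregion} is already subsumed, extends Lemma~\ref{CombNet-opt-similar} by one more entropy inequality
\begin{align*}
\frac{1}{n}H(X^n_{\{\{i\}\star\}}|W_1) \le \frac{1}{n}H(X^n_{\Lambda}|W_1) + \sum_{\substack{\mathcal{S}\in\Lambda^c\\\mathcal{S}\ni i}}\card{\e_{\mathcal{S}}},
\end{align*}
which mirrors the extra public-receiver constraint \eqref{CombNetbme-subr1or}, and then re-runs the Fourier--Motzkin plus decompression argument (Lemmas~\ref{lb-CombNet-decompression} and~\ref{otherslemma}), exactly as for Theorem~\ref{lb-CombNet-Theorem-outer2}. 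If you want to pursue your route, you must actually supply the missing combinatorial argument---most plausibly the max-flow/min-cut reformulation you mention---rather than gesture at it.
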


\subsection{Explicit projection of the polyhedron and the proof of Theorem \ref{lb-CombNet-Theorem-outer1}}
\label{CombNet-subsect-outer1}
The achievability part of  Proposition \ref{CombNetp1-innerbound} was discussed in Section \ref{lb-CombNet-ZES}. We prove the converse here.
Using Fourier-Motzkin elimination method, we first eliminate all parameters $\alpha_\mathcal{S}$, $\mathcal{S}\subseteq \mathcal{I}_1$, in the rate-region of Proposition \ref{CombNetp1-innerbound} and we obtain the following region (recall that $\mathcal{I}_1=\{1,2\}$ and $\mathcal{I}_2=\{3,\ldots,K\}$): 
\begin{align}
&R_1\leq \min\left(|\e_{\{1\}}|+|\e_{\{1,2\}}|,|\e_{\{2\}}|+|\e_{\{1,2\}}|\right)\label{lb-CombNet2-outer1}\\
&R_1+R_2\leq\min_{p\in \mathcal{I}_2}\left\{|\e_{\phi,p}|+|\e_{\{1\},p}|+|\e_{\{2\},p}|+|\e_{\{1,2\},p}|\right\}\label{lb-CombNet2-outer2}\\
&2R_1+R_2\leq \min_{p\in \mathcal{I}_2}\left\{|\e_{\{1\}}|+2|\e_{\{1,2\}}|+|\e_{\{2\}}|+|\e_{\phi,p}|\right\}.\label{lb-CombNet2-outer3}
\end{align}
Note that the right hand side (RHS) of \eqref{lb-CombNet2-outer1} is the minimum of the min-cuts to the two public receivers, and the RHS of \eqref{lb-CombNet2-outer2} is the minimum of the min-cuts to the private receivers.

Our converse proof is  similar to \cite{PrabhakaranDiggaviTse07}.
Inequalities \eqref{lb-CombNet2-outer1} and \eqref{lb-CombNet2-outer2} are immediate (using cut-set bounds) and are easy to derive.  Inequality \eqref{lb-CombNet2-outer3} is, however, not immediate and we prove it in the following.
Assume communication over blocks of length $n$. 
The rate $R_2$ is bounded, for each private receiver $p\in{I}_2$ and any $\epsilon>0$, as follows:
\begin{align*}
nR_2=&H(W_2|W_1)\\
\leq&H(W_2|W_1)- H(W_2|W_1Y^n_p)+ H(W_2|W_1Y^n_p)\\
\stackrel{(a)}{\leq}&I(W_2;Y^n_p|W_1)+n\epsilon\\
=&H(Y^n_p|W_1)+n\epsilon\\
\stackrel{(b)}{\leq}&H(X^n_{\{\phi,\{1\},\{2\}\{1,2\}\},p}X^n_{\{\{1\},\{2\},\{1,2\}\}}|W_1)+n\epsilon\\
\stackrel{(c)}{\leq}&H(X^n_{\{\{1\},\{1,2\}\}}|W_1)+H(X^n_{\{\{2\},\{1,2\}\}}|W_1)+H(X^n_{\{\phi,\{1\},\{2\}\{1,2\}\},p}|X^n_{\{\{1\},\{2\},\{1,2\}\}}W_1)+n\epsilon\\
\stackrel{(d)}{\leq}&H(X^n_{\{\{1\},\{1,2\}\}})+H(X^n_{\{\{2\},\{1,2\}\}})-2nR_1+H(X^n_{\{\phi,\{1\},\{2\}\{1,2\}\},p}|X^n_{\{\{1\},\{2\},\{1,2\}\}}W_1)+3n\epsilon\\
\stackrel{(e)}{\leq}&H(X^n_{\{\{1\},\{1,2\}\}})+H(X^n_{\{\{2\},\{1,2\}\}})-2nR_1+H(X^n_{\phi,p})+3n\epsilon\\
\stackrel{(f)}{\leq}&n(\card{\e_{\{1\}}}+\card{\e_{\{1,2\}}})+n(\card{\e_{\{2\}}}+\card{\e_{\{1,2\}}})-2nR_1+n(\card{\e_{\phi,p}})+3n\epsilon,
\end{align*}
In the above chain of inequalities, step $(a)$ follows from Fano's inequality. Step $(b)$ follows because the received 
sequence $Y^n_p$ is given by all symbols in $X^n_{\{\phi,\{1\},\{2\}\{1,2\}\},p}$ and we further add all 
symbols $X^n_{\{1,2\}}$,$X^n_{\{1\}}$,$X^n_{\{2\}}$. Step $(c)$ follows from sub-modularity of entropy.  Step $(d)$ is a 
result of \eqref{fano0}-\eqref{fano1}, below, where \eqref{fano1} is due to Fano's inequality ($W_1$ should be recoverable with arbitrarily small error probability from $X^n_{\{\{1\},\{1,2\}\}}$).
\begin{align}
H(X^n_{\{\{1\},\{1,2\}\}}|W_1)=&H(X^n_{\{\{1\},\{1,2\}\}}W_1)-nR_1\label{fano0}\\=&H(X^n_{\{\{1\},\{1,2\}\}})+H(W_1|X^n_{\{\{1\},\{1,2\}\}})-nR_1\\\leq& H(X^n_{\{\{1\},\{1,2\}\}})+n\epsilon-nR_1.\label{fano1}
\end{align}
Similarly, we have 
$$H(X^n_{\{\{2\},\{1,2\}\}}|W_1)\leq H(X^n_{\{\{2\},\{1,2\}\}})-nR_1+n\epsilon.$$
 Step $(e)$ follows from the fact that  for any $\mathcal{S}\subseteq \mathcal{I}_1$, $X^n_{\mathcal{S},p}$ is contained in $X^n_\mathcal{S}$ (by definition) and that conditioning reduces the entropy. Finally, step $(f)$ follows because each entropy term $H(X^n_\mathcal{S})$ is bounded by $n|\mathcal{S}|$ (remember that  all rates are written in units of $\log_2|\mathbb{F}_q|$ bits).


%
\begin{figure}
\centering
\begin{tikzpicture}[scale=.7]
\tikzstyle{every node}=[draw,shape=circle,minimum size=.001cm,font=\small\itshape]; 

\node (s) at (4,8) {$S$};

\node (d1) at (1,2) {$D_1$};
\draw [dashed,thick](4,5.5) arc (-90:-180:2cm) ;
\draw[dashed,thick] (6,6.3) arc (-70:-120:3.5cm) ;
\draw[dashed,thick] (6,6.75) arc (-70:-150:3cm) ;
\draw (1.6,7) node [draw=none]  {$C_1$};
\draw (6.3,6.25) node[draw=none]    {$C_2$};
\draw (6.3,7) node [draw=none]   {$C_3$};

\node (d2) at (4,2) {$D_2$};
\node (d3) at (7,2) [shade] {$D_3$};

\node (v1) at (1,5) {};
\node (v2) at (3,5) {};
\node (v3) at (5,5) {};
\node (v4) at (7,5) {};

\draw[->] (s) to node[draw=none,right,below]{$e_1$}(v1);
\draw[->] (s) to node[draw=none,right,below]{$\ \ e_2$}(v2);
\draw[->] (s) to node[draw=none,right,below,xshift=.3cm,yshift=-.1cm]{$e_3$}(v3);
\draw[->] (s) to node[draw=none,right,below,xshift=.6cm,yshift=-.1cm]{$e_4$} (v4);
\draw[->] (v1) --(d1);
\draw[->] (v2) --(d1);
\draw[->] (v2) --(d2);
\draw[->] (v3) --(d2);
\draw[->] (v4) --(d2);
\draw[->] (v1) --(d3);
\draw[->] (v2) --(d3);
\draw[->] (v3) --(d3);
\draw[->] (v4) --(d3);

\end{tikzpicture}
\caption{The rate-region in Theorem \ref{lb-CombNet-Theorem-outer1} evaluates to $R_1\leq 2$, $R_1+R_2\leq 4$ and $2R_1+R_2\leq 5$.}
\label{outerinterpret}
\end{figure}
We discuss an intuitive explanation of this outer-bound via the example in Fig.~\ref{outerinterpret}. Clearly, the common message $W_1$ could be reliably communicated with receiver $1$ (which has a min-cut equal to $2$) only if $R_1\leq 2$. Similarly, $R_1 \leq 3$ (according to the min-cut to receiver $2$) and $R_1 + R_2 \leq 4$ (according to the min-cut to receiver $3$). Now, consider the three cuts $C_1, C_2, C_3$ shown in Fig.~\ref{outerinterpret}. How much information about message $W_2$ could be carried over edges $e_1,e_2,e_3,e_4$, altogether? Edges of the cut $\{e_1,e_2\}$ can carry at most $2-R_1$  units of information about message $W_2$, for they have a total capacity of $2$ and have to also ensure decodability of message $W_1$ (which is of rate $R_1$). Similarly, edges of the cut $\{e_2,e_3,e_4\}$ can carry at most $3 - R_1$ units of information about message $W_2$. So altogether, these edges cannot carry more than $2 - R_1 + 3 - R_1$ bits of information about message $W_2$; i.e., $R_2 \leq 5 - 2R_
1$, or $2R_1 + R_2 \leq 5$.

\subsection{Sub-modularity of the entropy function and the proofs of Theorems \ref{lb-CombNet-Theorem-outer2} and  \ref{lb-CombNet-Theorem-outer3}}
\label{CombNet-subsect-outer2}

While it was not difficult to eliminate all parameters $\alpha_\mathcal{S}$, $\mathcal{S}\subseteq \mathcal{I}_1$, from the rate-region characterization for $m=2$, this becomes a tedious task when the number of public receivers increases. In this section, we prove Theorem \ref{lb-CombNet-Theorem-outer2}  by showing an outer-bound on the rate-region that matches the inner-bound of Theorem \ref{lb-CombNet-Theoremk2}  when $m=3$. We bypass the issue of explicitly eliminating all parameters $\alpha_\mathcal{S}$, $\mathcal{S}\subseteq \mathcal{I}_1$, by first proving an outer-bound which looks \textit{similar} to the inner-bound and then using sub-modularity of entropy to conclude the proof. The same converse technique will be used to prove Theorem \ref{lb-CombNet-Theorem-outer3}. Together, Theorem \ref{lb-CombNet-Theorem-outer2} and Theorem \ref{lb-CombNet-Theorem-outer3} allow us to conclude that the two regions in Theorems \ref{lb-CombNet-Theoremk2} and \ref{lb-CombNetbme-Theorem} coincide for $m=3$ public (and any number of private) receivers and characterize the capacity.
We start with an example.
\begin{example}
\label{CombNet-example-conv2}
Consider the combination network of Fig.~\ref{example-converse} where receivers $1,2,3$ are public and receivers $4,5$ are private receivers.
We ask if the rate pair $(R_1=1,R_2=2)$ is achievable over this network.
To answer this question, let us first see if this rate pair is within the inner-bound of Theorem \ref{lb-CombNet-Theoremk2}. 
By solving the feasibility problem defined in inequalities \eqref{CombNetk2-achpos}-\eqref{CombNetk2-achR1+R2or}, using Fourier-Motzkin elimination method, we obtain the following inner-bound inequality, and conclude that the rate pair $(1,2)$ is not within the inner-bound of Theorem~\ref{lb-CombNet-Theoremk2}.
\begin{align}
\label{lb-CombNet-ex-in-eq}
4R_1+2R_2\leq 7.
\end{align}
Once this is established, we can also answer the following question: what linear combination of inequalities in \eqref{CombNetk2-achpos}-\eqref{CombNetk2-achR1+R2or} gave rise to the inner-bound inequality in \eqref{lb-CombNet-ex-in-eq}? 
The answer is that summing \textit{two} copies of  \eqref{CombNetk2-achr1or} (for $i=1$), \textit{one} copy of  \eqref{CombNetk2-achr1or} (for $i=2$), \textit{one}  copy of  \eqref{CombNetk2-achr1or} (for $i=3$), \textit{one}  copy of  \eqref{CombNetk2-ach1or} (for $\Lambda=\{\{1\}\star,\{2,3\}\star\}$, $p=4$),  \textit{one}  copy of  \eqref{CombNetk2-ach1or} (for $\Lambda=\{\{1\}\star,\{2\}\star,\{3\}\star\}$, $p=5$), and finally one copy of the non-negativity constraint in \eqref{CombNetk2-achpos} (for $\mathcal{S}=\{1,2,3\}$) gives rise to $4R_1+2R_2\leq 7$.

We now write the following upper-bounds on $R_1$ and $R_2$ (which we prove in detail in Section \ref{lb-CombNet-subsec-conv}). Notice the similarity of each outer-bound constraint in \eqref{Combnet-lb-ex-conv1}-\eqref{Combnet-lb-ex-conv6} to an inner-bound constraint that played a role in the derivation of $4R_1+2R_2\leq 7$.
\begin{align}
&R_1+\frac{1}{n}H(X^n_{\{\{1\}\star\}}|W_1)\leq 1\label{Combnet-lb-ex-conv1}\\
&R_1+\frac{1}{n}H(X^n_{\{\{2\}\star\}}|W_1)\leq 2\label{Combnet-lb-ex-conv2}\\
&R_1+\frac{1}{n}H(X^n_{\{\{3\}\star\}}|W_1)\leq 2\\
&R_2 \leq \frac{1}{n}H(X^n_{\{\{1\}\star,\{2,3\}\star\}}|W_1)+1\\
&R_2\leq \frac{1}{n}H(X^n_{\{\{1\}\star,\{2\}\star,\{3\}\star\}}|W_1)\\
&0\leq \frac{1}{n}H(X^n_{\{1,2,3\}}|W_1).\label{Combnet-lb-ex-conv6}
\end{align}
Take two copies of \eqref{Combnet-lb-ex-conv1} and one copy each of \eqref{Combnet-lb-ex-conv2}-\eqref{Combnet-lb-ex-conv6} to yield an outer-bound inequality of the following form. 
\begin{align}
&4R_1+2R_2\nonumber\\\quad\leq& 7-\frac{1}{n}\left(\begin{array}{l}2H(X^n_{\{1\}}|W_1)+H(X^n_{\{2\}}X^n_{\{2,3\}}|W_1)+H(X^n_{\{2,3\}}X^n_{\{3\}}|W_1)\\-H(X^n_{\{1\}}X^n_{\{2,3\}}|W_1)-H(X^n_{\{1\}}X^n_{\{3\}}X^n_{\{2,3\}}X^n_{\{2\}}|W_1)\end{array}\right)\\
\stackrel{(a)}{\leq}& 7\label{lb-Combnet-ex-conv-stepa}
\end{align}
where $(a)$ holds by sub-modularity of entropy.
\begin{figure}
\centering
\begin{tikzpicture}[scale=2]
\tikzstyle{every node}=[draw,shape=circle]; 
\path (2.75,2.5cm) node (v0) {$S$}; 


\path (.75,1.25) node (v2) {};
\path (1.75,1.25) node (v3) {};
\path (2.75,1.25cm)  node (v5) {};
\path (3.75,1.25) node (v1) {};
\path (4.75,1.25)  node (v4) {};

\path (0.75,0) node (d1) {$D_1$};
\path (1.75,0) node (d2) {$D_2$};
\path (2.75,0) node  (d3) {$D_3$};
\path (3.75,0) node (d4)[shade]  {$D_4$};

\path (4.75,0) node (d5)[shade]  {$D_5$};

\draw[->] (v0) --node[right,draw=none,xshift=0cm]{$\e_{\{1\}}$} (v1);
\draw[->] (v0) --node[left,draw=none,xshift=-.2cm]{$\e_{\{2,3\}}$} (v2);
\draw[->] (v0) --node[left,draw=none,xshift=0cm]{$\e_{\{2\}}$} (v3);
\draw[->] (v0) --node[right,draw=none,xshift=.1cm]{$\e_{\{3\}}$} (v4);
\draw[->] (v0) --node[left,draw=none,xshift=.1cm]{$\e_{\phi}$} (v5);

\draw[->] (v1) -- (d1);
\draw[->] (v1) -- (d4);
\draw[->] (v1) -- (d5);
\draw[->] (v5) -- (d4);
\draw[->] (v2) -- (d2);
\draw[->] (v2) -- (d3);
\draw[->] (v2) -- (d4);
\draw[->] (v3) -- (d2);
\draw[->] (v3) -- (d5);
\draw[->] (v4) -- (d3);
\draw[->] (v4) -- (d5);

\end{tikzpicture}
\caption{Is rate pair $(1,2)$ achievable over this combination network?}
\label{example-converse}
\end{figure}
\end{example}

The intuition from Example \ref{CombNet-example-conv2} gives us a method to prove the converse of Theorem \ref{lb-CombNet-Theoremk2} (for $m=3$).

\subsubsection{Outer bound}
\label{lb-CombNet-subsec-conv}
{The starting point in proving the converse is the following lemma which we only state here and prove in Appendix~\ref{ap-CombNetk2-relaxregion}.}

{\begin{lemma}
\label{lb-CombNetk2-relaxregion}
Consider the rate-region characterization in Theorem \ref{lb-CombNet-Theoremk2}  
(where $\mathcal{I}_1=\{1,2,3\}$ and $\mathcal{I}_2=\{4,\ldots, K\}$). The constraints given by inequality 
\eqref{CombNetk2-achpos} in Theorem \ref{lb-CombNet-Theoremk2} can be replaced by  \eqref{lb-CombNet-relax}, below, without affecting the rate-region.
\begin{align}
\label{lb-CombNet-relax}
&\sum_{\mathcal{S}\in \Lambda}\alpha_{\mathcal{S}}\geq 0,\quad \forall \Lambda\subseteq 2^{\mathcal{I}_1}\ \text{superset saturated}.
\end{align}
\end{lemma}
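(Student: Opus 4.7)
The plan is to establish the two inclusions of the rate-region separately. The direction ``original $\subseteq$ relaxed'' is immediate: if $\alpha_{\mathcal{S}}\geq 0$ for every non-empty $\mathcal{S}$, then for any superset-saturated $\Lambda$, either $\phi\notin\Lambda$ and every summand in $\sum_{\mathcal{S}\in\Lambda}\alpha_\mathcal{S}$ is non-negative, or $\phi\in\Lambda$, which by superset saturation forces $\Lambda=2^{\mathcal{I}_1}$, making the sum equal to $R_2\geq 0$.

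For the reverse inclusion, given parameters $\{\alpha_\mathcal{S}\}$ satisfying the relaxed system for some $(R_1,R_2)$, my plan is to construct $\{\alpha'_\mathcal{S}\}$ satisfying the original system for the same $(R_1,R_2)$ via an iterative redistribution procedure. While some $\alpha_{\mathcal{S}_0}<0$ with $\mathcal{S}_0\neq\phi$, I would pick such an $\mathcal{S}_0$ maximal (so every strict superset $\mathcal{T}\supsetneq\mathcal{S}_0$ has $\alpha_\mathcal{T}\geq 0$). The relaxed constraint applied to the upset generated by $\mathcal{S}_0$ gives $\sum_{\mathcal{T}\supsetneq\mathcal{S}_0}\alpha_\mathcal{T}\geq|\alpha_{\mathcal{S}_0}|$, providing sufficient positive reserve. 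I then pick non-negative decrements $\gamma_\mathcal{T}\leq\alpha_\mathcal{T}$ for $\mathcal{T}\in T^+:=\{\mathcal{T}\supsetneq\mathcal{S}_0\}$ with $\sum_\mathcal{T}\gamma_\mathcal{T}=|\alpha_{\mathcal{S}_0}|$, and update $\alpha_{\mathcal{S}_0}\mapsto 0$, $\alpha_\mathcal{T}\mapsto\alpha_\mathcal{T}-\gamma_\mathcal{T}$. Such an update preserves $R_2$; preserves \eqref{CombNetk2-achr1or} (since $\sum_{\mathcal{S}\ni i}\alpha_\mathcal{S}$ is unchanged for $i\in\mathcal{S}_0$---every $\mathcal{T}\supsetneq\mathcal{S}_0$ also contains $i$---and only decreases for $i\notin\mathcal{S}_0$); and preserves $\sum_{\mathcal{S}\in\Lambda}\alpha_\mathcal{S}$ whenever $\mathcal{S}_0\in\Lambda$ by the same cancellation. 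The procedure terminates after finitely many steps because each iteration strictly decreases $|\{\mathcal{S}\neq\phi:\alpha_\mathcal{S}<0\}|$ by one.

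The main technical obstacle is establishing that valid decrements $\{\gamma_\mathcal{T}\}$ always exist: the residual constraints from superset-saturated $\Lambda$ with $\mathcal{S}_0\notin\Lambda$ require $\sum_{\mathcal{T}\in\Lambda\cap T^+}\gamma_\mathcal{T}\leq\sum_{\mathcal{S}\in\Lambda}\alpha_\mathcal{S}$ (to preserve \eqref{CombNetk2-ach1or} and \eqref{lb-CombNet-relax}). I would treat this as a transportation-type LP and establish feasibility via a min-cut argument: for any covering of $T^+$ by a set $Z\subseteq T^+$ (at individual capacities $\alpha_\mathcal{T}$) and a family $\{\Lambda_i\}$ of superset-saturated sets avoiding $\mathcal{S}_0$ (at aggregate capacities $b_{\Lambda_i}:=\sum_{\mathcal{S}\in\Lambda_i}\alpha_\mathcal{S}$), the cut value $\sum_{\mathcal{T}\in Z}\alpha_\mathcal{T}+\sum_i b_{\Lambda_i}$ should be at least $|\alpha_{\mathcal{S}_0}|$. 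The key facts I would combine are: (i) $(\bigcup_i\Lambda_i)\cup\{\mathcal{T}:\mathcal{T}\supseteq\mathcal{S}_0\}$ is superset-saturated, so its relaxed constraint yields $\sum_{\mathcal{T}\in T^+\setminus\bigcup_i\Lambda_i}\alpha_\mathcal{T}+\sum_{\mathcal{S}\in\bigcup_i\Lambda_i}\alpha_\mathcal{S}\geq|\alpha_{\mathcal{S}_0}|$; and (ii) the inclusion--exclusion surplus $\sum_i b_{\Lambda_i}-\sum_{\mathcal{S}\in\bigcup_i\Lambda_i}\alpha_\mathcal{S}$ equals $\sum_{k\geq 2}b_{\Lambda^{(k)}}$, where $\Lambda^{(k)}:=\{\mathcal{S}:|\{i:\mathcal{S}\in\Lambda_i\}|\geq k\}$ is itself superset-saturated (an ``at least $k$ hits'' set across the $\Lambda_i$) and hence $b_{\Lambda^{(k)}}\geq 0$ by \eqref{lb-CombNet-relax}. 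Adding these yields cut value $\geq|\alpha_{\mathcal{S}_0}|$, establishing feasibility of the flow LP and closing the reverse inclusion.
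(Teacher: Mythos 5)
Your plan is a genuinely different route from the paper's. The paper fixes negative entries cardinality by cardinality using small \emph{compensated} $\delta$-moves specific to $\mathcal{I}_1=\{1,2,3\}$ (e.g.\ $\alpha_{\{i\}}\mapsto\alpha_{\{i\}}+\delta$, $\alpha_{\{i,j\}},\alpha_{\{i,k\}}\mapsto\alpha_{\cdot}-\delta$, $\alpha_{\{1,2,3\}}\mapsto\alpha_{\{1,2,3\}}+\delta$), chosen so that every sum $\sum_{\mathcal{S}\ni\ell}\alpha_\mathcal{S}$ is exactly invariant; it then verifies \eqref{CombNetk2-ach1or} for the dangerous $\Lambda\not\ni\{i\}$ by passing to the constraint for $\Lambda'=\{\{i\}\star\}\ni\{i\}$ (which is untouched) and using the invariant $\alpha'_{\{i\}}\leq0$. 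Your one-shot redistribution of $|\alpha_{\mathcal{S}_0}|$ from $T^+$ to $\mathcal{S}_0$ cannot exploit that trick, since after your move $\alpha'_{\mathcal{S}_0}=0$ and $\alpha'_\mathcal{T}\geq 0$ for $\mathcal{T}\in T^+$, so augmenting $\Lambda$ to $\Lambda\cup\{\mathcal{T}:\mathcal{T}\supseteq\mathcal{S}_0\}$ only increases $\sum_\Lambda\alpha'_\mathcal{S}$, the wrong direction.

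The concrete gap is the assertion that imposing $\sum_{\mathcal{T}\in\Lambda\cap T^+}\gamma_\mathcal{T}\leq\sum_{\mathcal{S}\in\Lambda}\alpha_\mathcal{S}$ for superset-saturated $\Lambda\not\ni\mathcal{S}_0$ ``preserves \eqref{CombNetk2-ach1or} and \eqref{lb-CombNet-relax}.'' It preserves \eqref{lb-CombNet-relax}, but not \eqref{CombNetk2-ach1or}: after the update, $\sum_{\mathcal{S}\in\Lambda}\alpha_\mathcal{S}$ drops by $\sum_{\mathcal{T}\in\Lambda\cap T^+}\gamma_\mathcal{T}$, so what you actually need is that this drop be bounded by the \emph{slack} $\sum_{\mathcal{S}\in\Lambda}\alpha_\mathcal{S}+\sum_{\mathcal{S}\in\Lambda^c}\card{\e_{\mathcal{S},p}}-R_2$, which is \emph{smaller} than $\sum_{\mathcal{S}\in\Lambda}\alpha_\mathcal{S}$ whenever $\sum_{\mathcal{S}\in\Lambda^c}\card{\e_{\mathcal{S},p}}<R_2$ (a common situation, e.g.\ for large $\Lambda$). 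Thus the transportation LP you prove feasible is not the one you need feasible; and your cut bound---whose crux is the inclusion--exclusion identity $\sum_ib_{\Lambda_i}-\sum_{\mathcal{S}\in\bigcup_i\Lambda_i}\alpha_\mathcal{S}=\sum_{k\geq2}b_{\Lambda^{(k)}}$ and the nonnegativity of $b_{\Lambda^{(k)}}$---is specific to $b_\Lambda=\sum_{\mathcal{S}\in\Lambda}\alpha_\mathcal{S}$ and does not survive replacing these by the per-receiver slacks, which involve the $\card{\e_{\mathcal{S},p}}$. Finally, note that nothing in your argument uses $|\mathcal{I}_1|=3$, yet the paper states (remark after Lemma~\ref{lb-CombNet-decompression}) that Lemma~\ref{lb-CombNetk2-relaxregion} is valid only for $m\leq 3$; a proof valid for all $m$ therefore cannot be right, and the missing $m\leq3$-specific ingredient is precisely the control of the ``$\mathcal{S}_0\notin\Lambda$'' instances of \eqref{CombNetk2-ach1or}.
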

By Lemma \ref{lb-CombNetk2-relaxregion}, the rate-region of Theorem \ref{lb-CombNet-Theoremk2} is equivalently given by constraints
 \eqref{CombNetk2-achR2or}-\eqref{CombNetk2-achR1+R2or}, \eqref{lb-CombNet-relax}. We now find an outer-bound which looks \textit{similar} to this inner-bound.}

{\begin{lemma}
Any achievable rate pair $(R_1,R_2)$ satisfies outer-bound constraints \eqref{CombNetk2-converse-1}-\eqref{CombNetk2-converse-4} for any given $\epsilon>0$.
\label{CombNet-opt-similar}
\begin{align}
&\frac{1}{n}H( X^n_\Lambda|W_1)\geq 0,\quad\quad{\forall \Lambda\subseteq 2^{\mathcal{I}_1}\text{ superset saturated}}\label{CombNetk2-converse-1}\\
&R_1+\frac{1}{n}H(  X^n_{\{\{i\}\star\}}|W_1) \leq\sum_{\mathcal{S}\in\{\{i\}\star\}}\card{\e_\mathcal{S}} + \epsilon,\quad\quad{\forall i\in \mathcal{I}_1}\label{CombNetk2-converse-2}\\
&R_2\leq\frac{1}{n}H( X^n_{\Lambda}|W_1) +  \sum_{\mathcal{S}\in\Lambda^c}   \card{\e_{\mathcal{S},p}} + \epsilon,\quad\quad \forall \Lambda\subseteq 2^{\mathcal{I}_1}\text{ superset saturated},\ \forall p\in \mathcal{I}_2\label{CombNetk2-converse-3}\\
&R_1+R_2\leq\sum_{\mathcal{S}\subseteq \mathcal{I}_1}\card{\e_{\mathcal{S},p}}+\epsilon,\quad\quad{\forall p\in \mathcal{I}_2}.\label{CombNetk2-converse-4}
\end{align} 
\end{lemma}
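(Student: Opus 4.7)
The plan is to derive each of the four families of inequalities by standard Fano-type and cut-set arguments, leaning on the fact that the observations at the receivers decompose cleanly in the notation of $\Lambda$. The single structural observation that drives everything is: a public receiver $i \in \mathcal{I}_1$ sees exactly $Y_i^n = X^n_{\{\{i\}\star\}}$, because its incoming edges are precisely those in $\bigcup_{\mathcal{S}\ni i}\mathcal{E}_\mathcal{S}$; a private receiver $p \in \mathcal{I}_2$ sees $Y_p^n = (X^n_{\mathcal{S},p})_{\mathcal{S}\subseteq \mathcal{I}_1}$, which for any superset-saturated $\Lambda$ can be partitioned into $X^n_{\Lambda,p}$ and $X^n_{\Lambda^c,p}$, with the former a sub-vector of $X^n_{\Lambda}$.

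Inequality \eqref{CombNetk2-converse-1} is immediate from non-negativity of conditional entropy, so no work is required. For \eqref{CombNetk2-converse-2}, I would start from Fano's inequality applied to public receiver $i$: since $W_1$ is decodable from $Y_i^n = X^n_{\{\{i\}\star\}}$, we have $n R_1 \leq I(W_1;X^n_{\{\{i\}\star\}}) + n\epsilon = H(X^n_{\{\{i\}\star\}}) - H(X^n_{\{\{i\}\star\}}\mid W_1) + n\epsilon$, and then bound the unconditional entropy by the sum of edge capacities $n\sum_{\mathcal{S}\in\{\{i\}\star\}}|\mathcal{E}_{\mathcal{S}}|$, yielding the claim after rearranging and dividing by $n$.

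For \eqref{CombNetk2-converse-3}, apply Fano at private receiver $p$ to $W_2$ conditioned on $W_1$: $nR_2 = H(W_2\mid W_1) \leq I(W_2;Y_p^n\mid W_1) + n\epsilon \leq H(Y_p^n\mid W_1) + n\epsilon$. Then split $Y_p^n$ using the $\Lambda$-partition via the chain rule, drop conditioning in one term, and use that $X^n_{\Lambda,p}$ is a sub-vector of $X^n_{\Lambda}$:
\begin{align*}
H(Y_p^n\mid W_1) \leq H(X^n_{\Lambda,p}\mid W_1) + H(X^n_{\Lambda^c,p}) \leq H(X^n_{\Lambda}\mid W_1) + n\sum_{\mathcal{S}\in\Lambda^c}|\mathcal{E}_{\mathcal{S},p}|.
\end{align*}
Dividing by $n$ gives \eqref{CombNetk2-converse-3}. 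Finally, \eqref{CombNetk2-converse-4} is the usual cut-set bound: $n(R_1+R_2) = H(W_1,W_2) \leq I(W_1,W_2;Y_p^n) + n\epsilon \leq H(Y_p^n) + n\epsilon \leq n\sum_{\mathcal{S}\subseteq \mathcal{I}_1}|\mathcal{E}_{\mathcal{S},p}| + n\epsilon$.

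There is no real obstacle here; the lemma is essentially a bookkeeping exercise applying Fano once per target inequality and recognizing how $Y_i^n$ and $Y_p^n$ refine with respect to superset-saturated families $\Lambda$. The substantive content of the converse lies not in this lemma but in how it is later combined (through sub-modularity of entropy, as illustrated in Example \ref{CombNet-example-conv2}) to match the inner bound of Theorem \ref{lb-CombNet-Theoremk2}; the role of Lemma \ref{CombNet-opt-similar} is to produce outer-bound inequalities that are structurally parallel to the inner-bound inequalities so that every inner-bound inequality has a matching counterpart with $\alpha_\mathcal{S}$ replaced by a conditional entropy $\frac{1}{n}H(X^n_\Lambda\mid W_1)$.
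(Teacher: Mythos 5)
Your proof is correct and follows essentially the same approach as the paper: Fano's inequality at each receiver, the identification $Y_i^n=X^n_{\{\{i\}\star\}}$ and $Y_p^n=X^n_{\{\phi\star\},p}$, and the cardinality bound on entropy. The only cosmetic difference is in the derivation of \eqref{CombNetk2-converse-3}, where you partition $Y_p^n$ into $(X^n_{\Lambda,p},X^n_{\Lambda^c,p})$ and then upgrade $X^n_{\Lambda,p}$ to $X^n_\Lambda$, whereas the paper first adjoins $X^n_\Lambda$ to the conditioning and then drops the determined sub-part; these are equivalent rearrangements of the same chain rule step.
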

\begin{remark}
Notice the similarity of inequalities \eqref{CombNetk2-converse-1}, \eqref{CombNetk2-converse-2}, \eqref{CombNetk2-converse-3}, \eqref{CombNetk2-converse-4} with constraints \eqref{lb-CombNet-relax}, \eqref{CombNetk2-achr1or}, \eqref{CombNetk2-ach1or}, \eqref{CombNetk2-achR1+R2or}, respectively. We provide no similar outer-bound for the inner-bound constraint \eqref{CombNetk2-achR2or} because it is redundant\footnote{This inequality is redundant because it is the only inequality that contains the free variable $\alpha_\phi$.}.
\end{remark}
\begin{proof}
Inequalities in \eqref{CombNetk2-converse-1} hold by the positivity of entropy. To show inequalities in \eqref{CombNetk2-converse-2}, we bound $R_1$ for each public receiver $ i\in \mathcal{I}_1$ as follows:
\begin{align}
nR_1=& H(W_1)\\
=&H(W_1)- H(W_1|{Y}^n_i)+ H(W_1|{Y}^n_i)\\
\stackrel{(a)}{\leq}&I(W_1;{Y}^n_i)+n\epsilon\\
=&I(W_1;X^n_{\{\{i\}\star\}})+n\epsilon\\
=&H(X^n_{\{\{i\}\star\}})-H(X^n_{\{\{i\}\star\}}|W_1)+n\epsilon\\
\stackrel{(b)}{\leq}&n\left(\sum_{\mathcal{S}\in{\{\{i\}\star\}}}\card{\e_\mathcal{S}}\right)-H(X^n_{\{\{i\}\star\}}|W_1)+n\epsilon.\label{lb-CombNet-opt-ref-R1}
\end{align}
In the above chain of inequalities, $(a)$ follows from Fano's inequality and $(b)$ follows by bounding the cardinality of the alphabet set of $X^n_{\{\{i\}\star\}}$ and using $H(X)\leq\log|\mathcal{X}|$, where $\mathcal{X}$ is the alphabet set of $X$.
In a similar manner,  we have the following bound on $nR_1+nR_2$ for each private receiver $p$ which proves inequality \eqref{CombNetk2-converse-4}. 
\begin{align}
nR_1+nR_2=& H(W_1W_2)\\
\leq&I(W_1W_2;Y^n_p)+n\epsilon\\
=&I(W_1W_2;X^n_{\{\phi\star\},p})+n\epsilon\\
=&H(X^n_{\{\phi\star\},p})+n\epsilon\\
{\leq}&n\left(\sum_{\mathcal{S}\in{\{\phi\star\}}}\card{\e_{\mathcal{S},p}}\right)+n\epsilon.\label{lb-CombNet-opt-ref-R1+R2}
\end{align}
Finally, we bound $R_2$ to obtain the inequalities in \eqref{CombNetk2-converse-3}. In the following, we have $p\in \mathcal{I}_2$, $\Lambda\subseteq 2^{\mathcal{I}_1}$, and $\epsilon>0$.
\begin{align}
nR_2=& H(W_2|W_1)\\
=&H(W_2|W_1)- H(W_2|W_1Y^n_p)+ H(W_2|W_1Y^n_p)\\
\stackrel{(a)}{\leq}&I(W_2;Y^n_p|W_1)+n\epsilon\\
=&I(W_2;X^n_{\{\phi\star\},p}|W_1)+n\epsilon\\
=&H(X^n_{\{\phi\star\},p}|W_1)+n\epsilon\label{lb-outer-R2}\\
\stackrel{(b)}{\leq}&H(X^n_{\{\phi\star\},p}X^n_\Lambda|W_1)+n\epsilon\\
=&H(X^n_\Lambda|W_1)+H(X^n_{\{\phi\star\},p}|X^n_\Lambda W_1)+n\epsilon\\
\stackrel{(c)}{\leq}&H(X^n_\Lambda|W_1)+H(X^n_{\Lambda^c,p})+n\epsilon\\
\stackrel{(d)}{\leq}&H(X^n_\Lambda|W_1)+n\left(\sum_{\mathcal{S}\in\Lambda^c}\card{\e_{\mathcal{S},p}}\right)+n\epsilon.\label{lb-CombNet-opt-ref-R2}
\end{align}
In the above chain of inequalities, step $(a)$ follows from Fano's inequality. Step $(b)$ holds for any subset $\Lambda\subseteq 2^{\mathcal{I}_1}$ and in particular subsets which are superset saturated. Step $(c)$ follows because  conditioning decreases entropy. Step $(d)$ follows by by bounding the cardinality of the alphabet set of $X^n_{\Lambda^c,p}$ and using $H(X)\leq\log|\mathcal{X}|$, where $\mathcal{X}$ is the alphabet set of $X$.
\end{proof}}

{We shall use sub-modularity of the entropy function to prove that the outer bound of Lemma \ref{CombNet-opt-similar} and the inner bound of Theorem \ref{lb-CombNet-Theoremk2} match. Let us introduce a few techniques, as it may not be clear how sub-modularity could be used in full generality.}

\subsubsection{Sub-modularity lemmas}

We adopt some definitions and results from \cite{BalisterBollobas07} and prove a lemma that takes a central role in the converse proof in Section \ref{lb-CombNet-subsec-proofthm}.

{Let $[\mathbf{F}]$ be a family of multi-sets\footnote{Multi-set is a generalization of the notion of a set in which members are allowed to appear more than once.} of subsets of  $\{s_1,\ldots,s_N\}$. Given a multi-set $\mathbf{\Gamma}=[\Gamma_1,\ldots,\Gamma_l]$ (where $\Gamma_i\subseteq\{s_1,\ldots,s_N\}$, $i=1,\ldots,l$), let $\mathbf{\Gamma}^\prime$ be a multi-set obtained from $\mathbf{\Gamma}$ by replacing $\Gamma_i$ and $\Gamma_j$ by $\Gamma_i\cap\Gamma_j$ and $\Gamma_i\cup\Gamma_j$ for some $i,j\in\{1,\ldots,l\}$, $i\neq j$.  The multi-set $\mathbf{\Gamma}^\prime$ is then said to be an \textit {elementary compression} of $\mathbf{\Gamma}$. The elementary compression is, in particular, \textit{non-trivial} if neither  $\Gamma_i\subseteq \Gamma_j$ nor $\Gamma_j\subseteq \Gamma_i$. 
A sequence of elementary compressions gives a \textit{compression}. A partial order~$\geq$ is defined over $[\mathbf{F}]$ as follows. $\mathbf{\Gamma}\geq\mathbf{\Lambda}$ if $\mathbf{\Lambda}$ is a compression of $\mathbf{\Gamma}$ (equality if and only if the compression is composed of all trivial elementary compressions). 
A simple consequence of the sub-modularity of the entropy function is the following lemma \cite[Theorem 5]{BalisterBollobas07}.
\begin{lemma}{\cite[Theorem 5]{BalisterBollobas07}}
\label{otherslemma}
Let ${X}=(X_{s_i})_{1}^N$ be a sequence of random variables with $H({X})$ finite and let $\mathbf{\Gamma}$ and $\mathbf{\Lambda}$ be finite multi-sets of subsets of $\{s_1,\ldots,s_N\}$ such that $\mathbf{\Gamma}\geq\mathbf{\Lambda}$. Then
$$\sum_{\Gamma\in\mathbf{\Gamma}}H(X_\Gamma)\geq \sum_{\Lambda\in\mathbf{\Lambda}}H(X_\Lambda).$$ 
 \end{lemma}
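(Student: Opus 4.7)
The plan is to reduce the statement to the single-step submodularity inequality and then induct on the length of the compression sequence. Since $\mathbf{\Gamma} \geq \mathbf{\Lambda}$ means, by definition, that $\mathbf{\Lambda}$ is obtained from $\mathbf{\Gamma}$ by a finite sequence of elementary compressions, it suffices to prove the inequality in the case where $\mathbf{\Lambda}$ is a single elementary compression of $\mathbf{\Gamma}$; the general result then follows by composing elementary compressions one at a time.

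So suppose $\mathbf{\Lambda}$ is obtained from $\mathbf{\Gamma} = [\Gamma_1, \ldots, \Gamma_l]$ by replacing the pair $(\Gamma_i, \Gamma_j)$ with the pair $(\Gamma_i \cap \Gamma_j, \Gamma_i \cup \Gamma_j)$. All the other members of the multi-set are unchanged, so the inequality
\[
\sum_{\Gamma \in \mathbf{\Gamma}} H(X_\Gamma) \;\geq\; \sum_{\Lambda \in \mathbf{\Lambda}} H(X_\Lambda)
\]
reduces to showing
\[
H(X_{\Gamma_i}) + H(X_{\Gamma_j}) \;\geq\; H(X_{\Gamma_i \cap \Gamma_j}) + H(X_{\Gamma_i \cup \Gamma_j}).
\]
This is precisely the submodularity of the joint entropy function, which is a classical consequence of the nonnegativity of conditional mutual information: writing $A = \Gamma_i \setminus \Gamma_j$, $B = \Gamma_j \setminus \Gamma_i$ and $C = \Gamma_i \cap \Gamma_j$, one has $I(X_A ; X_B \mid X_C) \geq 0$, which rearranges to the desired inequality. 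In degenerate cases (e.g., $\Gamma_i \subseteq \Gamma_j$) the elementary compression is trivial and both sides are equal.

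Now induct on the number $k$ of elementary compressions in a fixed compression sequence $\mathbf{\Gamma} = \mathbf{\Gamma}^{(0)}, \mathbf{\Gamma}^{(1)}, \ldots, \mathbf{\Gamma}^{(k)} = \mathbf{\Lambda}$. The base case $k=0$ gives equality. For the inductive step, apply the single-step bound to pass from $\mathbf{\Gamma}^{(t)}$ to $\mathbf{\Gamma}^{(t+1)}$ and chain the inequalities:
\[
\sum_{\Gamma \in \mathbf{\Gamma}} H(X_\Gamma) \;\geq\; \sum_{\Gamma \in \mathbf{\Gamma}^{(1)}} H(X_\Gamma) \;\geq\; \cdots \;\geq\; \sum_{\Lambda \in \mathbf{\Lambda}} H(X_\Lambda).
\]
There is no real obstacle here: the only thing to verify carefully is that the definition of elementary compression matches the submodular pair $(A \cap B,\, A \cup B)$, which it does by inspection, and that the bookkeeping of the unchanged members of the multi-set is correct, which is immediate since both sums contain the same contributions from those members.
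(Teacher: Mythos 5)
Your proof is correct, and it is the natural argument: the paper itself does not reproduce a proof of this lemma (it cites Balister and Bollob\'as, Theorem 5, and merely notes that the result is ``a simple consequence of the sub-modularity of the entropy function''). Your reduction to a single elementary compression, the identification of the difference with $I(X_{\Gamma_i\setminus\Gamma_j};X_{\Gamma_j\setminus\Gamma_i}\mid X_{\Gamma_i\cap\Gamma_j})\geq 0$, and the induction over the finite compression sequence together constitute exactly that argument.
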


For our converse, we consider the family of multi-sets of subsets of $2^{\mathcal{I}_1}$ where $\mathcal{I}_1=\{1,2,3\}$. 
We denote multi-sets by bold greek capital  letters (e.g., $\mathbf{\Gamma}$ and $\mathbf{\Lambda}$), subsets of $2^{\mathcal{I}_1}$ by   greek capital letters (e.g., $\Gamma_i$, ${\Sigma}$ and ${\Lambda}$), and elements of $2^{\mathcal{I}_1}$ by calligraphic capital letters (e.g., $\mathcal{S}$ and $\mathcal{T}$).
Over a family of multi-sets of subsets of $2^{\mathcal{I}_1}$, we define  \textit{multi-sets of saturated pattern} and \textit{multi-sets of standard pattern} as follows.}
\begin{definition}[Multi-sets of saturated pattern]
A multi-set (of subsets of $2^{\mathcal{I}_1}$) is said to be of \textit{(superset) saturated pattern} if all its elements are superset saturated.  E.g., for $\mathcal{I}_1=\{1,2,3\}$, we have that multi-sets $[ \{ \{1\},\{1,2\},\{1,3\},\{1,2,3\} \} ]$ and $[\{\{1\},
\{1,2\},\{1,3\},\{1,2,3\}\},\{\{2,3\},\{1,2,3\}\}]$ are both of saturated pattern, but not $[\{\{2\},\{1,2\},\{1,2,3\}\}]$ (since its only element is not superset saturated as $\{2,3\}$ is missing from it) or $[\{\{1\},\{2\},\{1,2\},\{1,3\},\{2,3\},\{1,2,3\}\},\{\{1\},\{1,2\}\}]$ (since $\{\{1\},\{1,2\}\}$ is not superset saturated).
\end{definition}

\begin{definition}[Multi-sets of standard pattern]
A multi-set (of subsets of $2^{\mathcal{I}_1}$) is said to be of \textit{standard pattern} if its elements are all of the form $\{\mathcal{S}\subseteq \mathcal{I}_1: \mathcal{S}\ni i\}$, for some $i\in \mathcal{I}_1$. E.g., both multi-sets $[ \{ \{1\},\{1,2\},\{1,3\},\{1,2,3\} \} ]$ and $[\{\{1\},
\{1,2\},\{1,3\},\{1,2,3\}\},\{\{2\},\{1,2\},\{2,3\},\{1,2,3\}\}]$ are of standard pattern, but not multi-sets $[\{\{1,2\},\{1,2,3\}\}]$ or $[\{\{1\},\{2\},\{1,2\},\{1,3\},\{2,3\},\{1,2,3\}\}]$.
\end{definition}
\begin{definition}[Balanced pairs of multi-sets]
We say that multi-sets $\mathbf{\Gamma}$ and $\mathbf{\Lambda}$ are a \textit{balanced} pair if
$\sum_{\Gamma\in \mathbf{\Gamma}}\mathbf{1}_{\mathcal{S}\in\Gamma}=\sum_{\Lambda\in \mathbf{\Lambda}}\mathbf{1}_{\mathcal{S}\in\Lambda}$, for all sets $\mathcal{S}\in 2^{\mathcal{I}_1}$.
\end{definition}

\begin{remark}
\label{impremark}
One observes that 
(i) multi-sets of standard pattern are also of saturated pattern, 
(ii) the set of all multi-sets of saturated pattern is closed under compression, 
(iii) if a multi-set $\mathbf{\Lambda}$ is a compression of a multi-set $\mathbf{\Gamma}$, then they are balanced, and
(iv) two multi-sets of standard pattern are balanced if and only if they are equal.
\end{remark}
Let us look at step $(a)$ in inequality \eqref{lb-Combnet-ex-conv-stepa} in this formulation. Consider the family of multi-sets of subsets of $2^{\{1,2,3\}}$, and in particular, the multi-set $\mathbf{\Gamma}=[\{\{1\}\},\{\{1\}\},\{\{2\},\{2,3\}\},\{\{3\},\{2,3\}\}]$. After the following non-trivial elementary compressions, the multi-set $\mathbf{\Lambda}=[\{\{1\},\{2,3\}\},\{\{1\},\{2\},\{3\},\{2,3\}\}]$ is obtained:
\begin{align}
\label{lb-CombNet-conv-comp}
\mathbf{\Gamma}=&[\{\{1\}\},\{\{1\}\},\{\{2\},\{2,3\}\},\{\{3\},\{2,3\}\}]\\
\geq& [\{\{1\}\},\{\{1\},\{2\},\{2,3\}\},\{\{3\},\{2,3\}\}]=:\mathbf{\Gamma}^{\prime}\\ 
\geq& [\{\{1\}\},\{\{1\},\{2\},\{3\},\{2,3\}\},\{\{2,3\}\}]=:\mathbf{\Gamma}^{\prime\prime}\\ 
\geq& [\{\{1\},\{2,3\}\},\{\{1\},\{2\},\{3\},\{2,3\}\}]=:\mathbf{\Lambda}.\label{lb-CombNet-conv-comp2}
\end{align}
Therefore, we have $\mathbf{\Gamma}\geq \mathbf{\Lambda}$ and by Lemma \ref{otherslemma}, $\sum_{\Gamma\in\mathbf{\Gamma}}H(X^n_\Gamma|W_1)\geq \sum_{\Lambda\in\mathbf{\Lambda}}H(X^n_\Lambda|W_1)$. Thus, step $(a)$ of inequality \eqref{lb-Combnet-ex-conv-stepa} follows.

Here, we develop an alternative visual tool. Associate a graph $G_{\mathbf{\Gamma}}$ to every multi-set $\mathbf{\Gamma}$. The graph is formed as follows. Each node of the graph represents one set in  $\mathbf{\Gamma}$, and is labeled by it. Two nodes are connected by an edge if and only if neither is a subset of the other. Each time an elementary compression is performed on the multi-set $\mathbf{\Gamma}$, a compressed multi-set $\mathbf{\Gamma}^\prime$ (with a new graph associated with it) is created.
E.g., graphs associated with multi-sets $\mathbf{\Gamma}$, $\mathbf{\Gamma}^\prime$, $\mathbf{\Gamma}^{\prime\prime}$, and $\mathbf{\Lambda}$ (which are all defined in inequalities \eqref{lb-CombNet-conv-comp}-\eqref{lb-CombNet-conv-comp2}) are shown in Fig.~\ref{CombNet-compression}.

\begin{figure}
 \centering
 \begin{subfigure}[b]{0.24\textwidth}
 \begin{tikzpicture}[scale=1.3]
\tikzstyle{every node}=[draw,font=\footnotesize, shape=circle,minimum size=.005cm]; 

\path  (180+60:1cm) node[fill](v1) {};
\path  (180+40:.98cm) node[rectangle,draw=none]() { $\{\{1\}\}$};

\path  (0-60:1cm) node(v11)[fill] {};
\path  (0-40:.98cm) node[rectangle,draw=none]() { $\{\{1\}\}$};
\path  (270-15:1.9cm) node(v2)[fill] {}; 
\path  (270-15:2.2cm) node[rectangle,draw=none]() {$\{\{2\},\{2,3\}\}\ $};

\path  (270+15:1.9cm) node(v3)[fill] {};
\path  (270+15:2.2cm) node[rectangle,draw=none]() {$\ \{\{3\},\{2,3\}\}$};

\draw (v1) -- (v2);
\draw (v11) -- (v2);
\draw (v1) -- (v3);
\draw (v11) -- (v3);
\draw (v2) -- (v3);

\end{tikzpicture}
 \subcaption{Graph $G_\mathbf{\Gamma}$}
 \end{subfigure}
 \begin{subfigure}[b]{0.24\textwidth}
\begin{tikzpicture}[scale=1.3]
\tikzstyle{every node}=[draw,font=\footnotesize, shape=circle,minimum size=.005cm]; 

\path  (180+60:1cm) node(v1)[fill] {};
\path  (180+40:.98cm) node[rectangle,draw=none]() { $\{\{1\}\}$};

\path  (0-60:1cm) node(v11)[fill] {};
\path  (0-45:.9cm) node[rectangle,draw=none]() { $\{\{1\},\{2\},\{2,3\}\}$};

\path  (270-15:1.9cm) node(v2)[fill] {}; 
\path  (270-15:2.2cm) node[rectangle,draw=none]() {$\phi$};

\path  (270+15:1.9cm) node(v3)[fill] {};
\path  (270+15:2.2cm) node[rectangle,draw=none]() {$\{\{3\},\{2,3\}\}$};
\path  (-135:2cm) node()[draw=none] {$>$};

\draw (v1) -- (v3);
\draw (v11) -- (v3);

\end{tikzpicture}
 \subcaption{Graph $G_{\mathbf{\Gamma}^\prime}$}
 \end{subfigure}
 \begin{subfigure}[b]{0.24\textwidth}
\begin{tikzpicture}[scale=1.3]
\tikzstyle{every node}=[draw,font=\footnotesize, shape=circle,minimum size=.005cm]; 

\path  (180+60:1cm) node(v1)[fill] {};
\path  (180+43:.89cm) node[rectangle,draw=none]() { $\{\{1\}\}$};

\path  (0-60:1cm) node(v11)[fill] {};
\path  (0-40:.95cm) node[rectangle,draw=none]() { $\{\{1\},\{2\},\{2,3\},\{3\}\}$};

\path  (270-15:1.9cm) node(v2)[fill] {}; 
\path  (270-15:2.2cm) node[rectangle,draw=none]() {$\phi$};

\path  (270+15:1.9cm) node(v3)[fill] {};
\path  (270+15:2.2cm) node[rectangle,draw=none]() {$\{\{2,3\}\}$};

\draw (v1)--(v3);
\path  (-135:2cm) node()[draw=none] {$>$};

\end{tikzpicture}
 \subcaption{Graph $G_{\mathbf{\Gamma}^{\prime\prime}}$}
 \end{subfigure}
 \begin{subfigure}[b]{0.24\textwidth}
\begin{tikzpicture}[scale=1.3]
\tikzstyle{every node}=[draw,font=\footnotesize, shape=circle,minimum size=.005cm]; 

\path  (180+60:1cm) node(v1)[fill] {};
\path  (180+43:.89cm) node[rectangle,draw=none]() { $\hspace{-.4cm}\{\{1\},\{2,3\}\}$};

\path  (0-60:1cm) node(v11)[fill] {};
\path  (0-40:.95cm) node[rectangle,draw=none]() { $\hspace{.4cm}\{\{1\},\{2\},\{2,3\},\{3\}\}$};

\path  (270-15:1.9cm) node(v2)[fill] {}; 
\path  (270-15:2.2cm) node[rectangle,draw=none]() {$\phi$};

\path  (270+15:1.9cm) node(v3)[fill] {};
\path  (270+15:2.2cm) node[rectangle,draw=none]() {$\phi$};
\path  (-135:2cm) node()[draw=none] {$>$};

\end{tikzpicture}
 \subcaption{Graph $G_\mathbf{\Lambda}$}
 \end{subfigure}
 \caption{Graphs associated with multi-sets $\mathbf{\Gamma}$, $\mathbf{\Gamma}^\prime$, $\mathbf{\Gamma}^{\prime\prime}$, $\mathbf{\Lambda}$ obtained through the compression that is performed in inequalities \eqref{lb-CombNet-conv-comp}-\eqref{lb-CombNet-conv-comp2}. }
 \label{CombNet-compression}
\end{figure}

For such (associated) graphs, we prove in Appendix \ref{ap-CombNet-graph-strict} that compression reduces the total number of edges in  graphs.
\begin{lemma}
\label{lb-CombNet-graph-strict}
Let $G_\mathbf{\Gamma}$ denote the graph associated with a multi-set $\mathbf{\Gamma}$ and $G_{\mathbf{\Lambda}}$ denote the graph associated with a multi-set $\mathbf{\Lambda}$. Provided that $\mathbf{\Lambda}<\mathbf{\Gamma}$, the total number of edges in $G_{\mathbf{\Lambda}}$ is strictly smaller than that of $G_{\mathbf{\Gamma}}$.
\end{lemma}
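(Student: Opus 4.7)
The plan is to prove Lemma \ref{lb-CombNet-graph-strict} by induction on the length of the compression sequence transforming $\mathbf{\Gamma}$ into $\mathbf{\Lambda}$. Since $\mathbf{\Lambda}<\mathbf{\Gamma}$, at least one step of that sequence must be a \emph{non-trivial} elementary compression; a trivial step replaces comparable sets $X\subseteq Y$ by $X\cap Y = X$ and $X\cup Y = Y$, leaving the multi-set (and hence the graph) unchanged. Hence it suffices to show that every non-trivial elementary compression strictly decreases the edge count.

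Fix such a step: it replaces an incomparable pair $A,B$ by $C=A\cap B$ and $D=A\cup B$, which are comparable ($C\subseteq D$). I would partition the edges of the two graphs into three disjoint groups: (i) edges among the nodes that the compression does not touch, whose count is manifestly unchanged; (ii) the single potential edge inside the replaced pair, which is present in $G_{\mathbf{\Gamma}}$ (by non-triviality, $A,B$ are incomparable) but absent in $G_{\mathbf{\Lambda}}$ (since $C\subseteq D$), giving an immediate deficit of one edge; and (iii) for each remaining node $E$, the edges from $E$ into $\{A,B\}$ versus those from $E$ into $\{C,D\}$.

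For group (iii) the key local claim is that, for every such $E$, the number of incomparable pairs among $\{(E,A),(E,B)\}$ is at least the number of incomparable pairs among $\{(E,C),(E,D)\}$. I would verify this by a short element-chasing case analysis. If both $(E,C)$ and $(E,D)$ are incomparable, then $C\not\subseteq E$ yields some $z\in A\cap B$ with $z\notin E$, giving $A\not\subseteq E$ and $B\not\subseteq E$, while $E\not\subseteq D$ yields some $x\in E$ with $x\notin A\cup B$, giving $E\not\subseteq A$ and $E\not\subseteq B$; so both $(E,A)$ and $(E,B)$ are incomparable, matching the count. If exactly one of $(E,C),(E,D)$ is incomparable, I would use that $(E,D)$ comparable forces $E\subseteq A\cup B$ or $A\cup B\subseteq E$ (and symmetrically for $C$), eliminate the impossible sub-cases, and extract a single witness showing that at least one of $(E,A),(E,B)$ is incomparable. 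The remaining case (neither $(E,C)$ nor $(E,D)$ incomparable) is trivial.

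Combining the three groups, the total edge count of $G_{\mathbf{\Lambda}}$ is at most that of $G_{\mathbf{\Gamma}}$ minus the deficit of at least $1$ from group (ii), so a single non-trivial elementary compression strictly reduces the edge count. Iterating along the compression sequence completes the proof. The main obstacle I expect is group (iii): one must check that the ``mixed'' sub-case, in which only one of $(E,C),(E,D)$ is incomparable, cannot arise from \emph{no} incomparable pair in $\{(E,A),(E,B)\}$; this is exactly where the Boolean identities $A\cap B\subseteq E\Leftrightarrow (A\subseteq E\vee B\subseteq E)$ is \emph{not} valid (only $\Leftarrow$) are used carefully, and where the element-chasing is most delicate.
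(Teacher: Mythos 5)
Your proof is correct and follows essentially the same route as the paper's: reduce to one non-trivial elementary compression, observe that the edge between the replaced incomparable pair is lost (since $A\cap B\subseteq A\cup B$), and check by a case analysis on each external node that the number of edges from it into the pair cannot increase. The paper organizes that case analysis by the pre-compression incidences while you organize it by the post-compression ones, but this is a cosmetic difference — both arguments rest on the same element-chasing and the same one-edge deficit.
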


Define a \textit{(non-trivial) decompression} as the inverse act of a (non-trivial) compression. As opposed to compression, a non-trivial decompression is not always possible using every two elements of a multi-set $\mathbf{\Lambda}$. It is, indeed, not clear whether a multi-set $\mathbf{\Lambda}$ is decompressible at all.
For example, the multi-set $[\{\{2,3\},\{1,2,3\}\},\{\{1\},\{2\},\{1,2\},\{1,3\},$ $\{2,3\},\{1,2,3\}\}]$ cannot be non-trivially decompressed; i.e., there exists no multi-set $\mathbf{\Gamma}$ such that \begin{align}
\mathbf{\Gamma}>[\{\{2,3\},\{1,2,3\}\},\{\{1\},\{2\},\{1,2\},\{1,3\},\{2,3\},\{1,2,3\}\}].
\end{align}
The table in Fig.~\ref{lb-CombNet-opt-table-decompress} gives a list of some non-trivial elementary decompressions for multi-sets of subsets of $2^{\{1,2,3\}}$.
\begin{figure}\centering\begin{tabular}{|l|l|}
   \multicolumn{1}{l}{Multi-set $\mathbf{\Lambda}$}
 & \multicolumn{1}{l}{Multi-set $\mathbf{\Sigma}>\mathbf{\Lambda}$} \\
\cline{1-2}
 $\left[\;\ldots,\;\left\{\{i,j\}\star\right\},\; \left\{\{i\}\star,\{j\}\star\right\},\;\ldots\;\right]$$\substack{\ \\\\\\\ }$ & $\left[\;\ldots,\;\left\{\{i\}\star\right\},\;\left\{\{j\}\star\right\},\;\ldots\;\right]$\\
 \cline{1-2}
 $\left[\;\ldots,\;\left\{\{1,2,3\}\right\},\;\left\{\{i\}\star,\{j,k\}\star\right\},\;\ldots\;\right]$$\substack{\ \\\\\\\ }$ & $\left[\;\ldots,\;\left\{\{i\}\star\right\},\;\left\{\{j,k\}\star\right\},\;\ldots\;\right]$\\
\cline{1-2}
 $\left[\;\ldots,\;\left\{\{1,2,3\}\right\},\;\left\{\{i,j\}\star,\{i,k\}\star\right\},\;\ldots\;\right]$$\substack{\ \\\\\\\ }$ & $\left[\;\ldots,\;\left\{\{i,j\}\star\right\},\;\left\{\{i,k\}\star\right\},\;\ldots\;\right]$\\
\cline{1-2}
 $\left[\;\ldots,\;\left\{\{1,2,3\}\right\},\;\left\{\{i,j\}\star,\{i,k\}\star,\{j,k\}\star\right\},\;\ldots\;\right]$ $\substack{\ \\\\\\\ }$& $\left[\;\ldots,\;\left\{\{i,j\}\star\right\},\;\left\{\{i,k\}\star,\{j,k\}\star\right\},\;\ldots\;\right]$\\
\cline{1-2}
 $\left[\;\ldots,\;\left\{\{1\}\star,\{2\}\star,\{3\}\star\right\},\;\left\{\{i\}\star,\{j,k\}\star\right\},\;\ldots\;\right]$ $\substack{\ \\\\\\\ }$& $\left[\;\ldots,\;\left\{\{i\}\star,\{j\}\star\right\},\;\left\{\{i\}\star,\{k\}\star\right\},\;\ldots\;\right]$\\
\cline{1-2}
 $\left[\;\ldots,\;\left\{\{1\}\star,\{2\}\star,\{3\}\star\right\},\;\left\{\{i,j\}\star,\{i,k\}\star\right\},\;\ldots\;\right]$$\substack{\ \\\\\\\ }$ & $\left[\;\ldots,\;\left\{\{i\}\star\right\},\;\left\{\{j\}\star,\{k\}\star\right\},\;\ldots\;\right]$\\
\cline{1-2}
$\left[\;\ldots,\;\left\{\{1\}\star,\{2\}\star,\{3\}\star\right\},\;\left\{\{i,j\}\star,\{i,k\}\star,\{j,k\}\star\right\},\;\ldots\;\right]$ $\substack{\ \\\\\\\ }$& $\left[\;\ldots,\;\left\{\{i\}\star,\{j\}\star\right\},\;\left\{\{k\}\star,\{i,j\}\star\right\},\;\ldots\;\right]$\\
\cline{1-2}
\end{tabular}
\caption{Non-trivial elementary decompressions for multi-sets of subsets of $2^{\{1,2,3\}}$. Here, $(i,j,k)$ is a permutation of $(1,2,3)$.}
\label{lb-CombNet-opt-table-decompress}
\end{figure}
Although not all multi-sets are decompressible, Lemma \ref{CombNet-count} below identifies a class of multi-sets of subsets of $2^{\{1,2,3\}}$ that are decompressible.

\begin{lemma}
\label{CombNet-count}
Let $\mathbf{\Lambda}$ and $\mathbf{\Gamma}$ be multi-sets of subsets of $2^{\{1,2,3\}}$. Suppose $\mathbf{\Lambda}$ is of saturated pattern and $\mathbf{\Gamma}$ is of standard pattern.
If  $\mathbf{\Lambda}$ and $\mathbf{\Gamma}$ are a pair of balanced multi-sets, then a non-trivial elementary 
decompression could be performed over $\mathbf{\Lambda}$, unless $\mathbf{\Lambda}=\mathbf{\Gamma}$.
\end{lemma}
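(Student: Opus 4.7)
I will prove the contrapositive: if $\mathbf{\Lambda}$ admits no non-trivial elementary decompression and is balanced with $\mathbf{\Gamma}$ of standard pattern, then $\mathbf{\Lambda}=\mathbf{\Gamma}$. The proof is by strong induction on $|\mathbf{\Lambda}|$. The base cases $|\mathbf{\Lambda}|\leq 1$ follow directly from the balance equations $\beta_\mathcal{S}=\sum_{\Lambda\ni\mathcal{S}}a_\Lambda$, which in the size-$1$ case force the unique element of $\mathbf{\Lambda}$ to have exactly the same incidence pattern as some $\{\{i\}\star\}$, hence to equal it.

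For the inductive step, the key tool is a cancellation argument. If $\mathbf{\Lambda}$ contains any copy of $\{\{i\}\star\}$, then balance at coordinate $\{i\}$ forces $\mathbf{\Gamma}$ to also contain at least one copy of $\{\{i\}\star\}$ (for $\{\{i\}\star\}$ contributes $1$ to $\beta_{\{i\}}$ which must equal the count $n_i$ of $\{\{i\}\star\}$ in $\mathbf{\Gamma}$, forcing $n_i\geq 1$). Cancelling one shared copy from each side produces a smaller balanced pair $\mathbf{\Lambda}'$, $\mathbf{\Gamma}'$ of saturated/standard patterns; rigidity is inherited by $\mathbf{\Lambda}'$ (every elementary decompression available in $\mathbf{\Lambda}'$ is also available in $\mathbf{\Lambda}$), so the inductive hypothesis yields $\mathbf{\Lambda}'=\mathbf{\Gamma}'$ and therefore $\mathbf{\Lambda}=\mathbf{\Gamma}$. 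This reduces the problem to the case where $\mathbf{\Lambda}$ contains no $\{\{i\}\star\}$ at all. If additionally $\mathbf{\Gamma}=\emptyset$, balance at $\{1,2,3\}$ forces $\mathbf{\Lambda}=\emptyset$ as well, and we are done. Otherwise some $n_i>0$, hence $\beta_{\{i\}}>0$, which forces $\mathbf{\Lambda}$ to contain at least one \emph{complex} saturated subset --- one whose minimal antichain is not a singleton of the form $\{\{i\}\}$.

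It remains to derive a contradiction in this reduced case. I perform case analysis on a complex element $\Lambda^{\ast}\in\mathbf{\Lambda}$, classified by the type of its minimal antichain --- one of $\{\{1,2,3\}\}$, $\{\{i,j\}\}$, $\{\{i,j\},\{i,k\}\}$, $\{\{i,j\},\{i,k\},\{j,k\}\}$, $\{\{i\},\{j\}\}$, $\{\{i\},\{j,k\}\}$, or $\{\{1\},\{2\},\{3\}\}$. In each case, I use the balance equations at the relevant coordinates, together with the no-$\{\{i\}\star\}$ assumption, to force the existence of a companion $\Lambda^{\ast\ast}\in\mathbf{\Lambda}$ such that the pair $(\Lambda^{\ast},\Lambda^{\ast\ast})$ (or some other pair in $\mathbf{\Lambda}$ identified during the analysis) matches one of the seven decompression patterns listed in Fig.~\ref{lb-CombNet-opt-table-decompress}, up to permutation of $\mathcal{I}_1$. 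Decomposability of a nested pair $(A,B)$ with $A\subsetneq B$ is verified through the combinatorial criterion that the induced sub-poset on $B\setminus A$ (under inclusion inherited from $2^{\{1,2,3\}}$) admits a partition into two non-empty subsets each upward-closed within $B\setminus A$ --- equivalently, that its ``common upper bound'' graph is disconnected. The existence of such a decomposable pair contradicts the assumed rigidity of $\mathbf{\Lambda}$, completing the induction.

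The main obstacle is navigating this case analysis: there are fifteen complex saturated subsets to consider in principle, and the balance equations couple them intricately. However, the reduction to the no-$\{\{i\}\star\}$ regime sharply restricts which saturated subsets can simultaneously lie in $\mathbf{\Lambda}$ while remaining balance-consistent, and in each case the balance equations (combined with the structural characterization of decomposable nested pairs) pin down the correct companion and hence the decomposable pair.
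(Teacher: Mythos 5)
Your proposal takes a genuinely different structural route from the paper. The paper proves the lemma directly by double counting: it defines multiplicity counters $n_\mathcal{S}$ (for $\mathbf{\Gamma}$) and $m_\Lambda$, $m_{\Lambda\cup}$ (for $\mathbf{\Lambda}$), exploits the identity $n_{\mathcal{S}}=\sum_{i\in\mathcal{S}}n_{\{i\}}$ coming from the standard pattern of $\mathbf{\Gamma}$, and derives explicit algebraic identities (equations \eqref{CombNet-mij}, \eqref{CombNet-mijk}, \eqref{CombNet-mij2}) among the multiplicities; positivity of any term in these identities forces a companion on the other side, which is precisely what the table in Fig.~\ref{lb-CombNet-opt-table-decompress} needs. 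Your proposal instead introduces induction on $|\mathbf{\Lambda}|$ with a cancellation step for shared copies of $\{\{i\}\star\}$. The preprocessing steps are correct: balance at $\{i\}$ does force $n_i\geq 1$ when $\mathbf{\Lambda}$ contains $\{\{i\}\star\}$, cancellation preserves balance and saturated/standard patterns, rigidity is inherited by the cancelled multi-set, and the base cases $|\mathbf{\Lambda}|\leq 1$ check out. Your decomposability criterion (an upward-closed bipartition of $B\setminus A$, so that the decompressed multi-set stays of saturated pattern) is the right one, and indeed recovers the paper's claim that $[\{\{2,3\}\star\},\{\{1\}\star,\{2\}\star\}]$ is rigid even though $|B\setminus A|=4$.

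The gap is that the residual case --- $\mathbf{\Lambda}$ nonempty, containing no $\{\{i\}\star\}$, balanced with a nonempty standard $\mathbf{\Gamma}$ --- is exactly where the whole content of the lemma lives, and your sketch does not close it. The framing ``perform case analysis on a complex element $\Lambda^{\ast}$ \ldots force a companion $\Lambda^{\ast\ast}$'' suggests a local argument, but a local argument does not obviously succeed: for instance, $\mathbf{\Lambda}=[\{\{1\}\star,\{2\}\star\},\{\{3\}\star,\{1,2\}\star\},\{\{1,2,3\}\}]$ is balanced with $\mathbf{\Gamma}=[\{\{1\}\star\},\{\{2\}\star\},\{\{3\}\star\}]$ and contains no simple elements, yet if one picks $\Lambda^{\ast}=\{\{1\}\star,\{2\}\star\}$ there is no companion $\Lambda^{\ast\ast}$ that makes $(\Lambda^{\ast},\Lambda^{\ast\ast})$ decomposable --- the only decomposable pair in $\mathbf{\Lambda}$ is $(\{\{1,2,3\}\},\{\{3\}\star,\{1,2\}\star\})$, which does not involve $\Lambda^{\ast}$ at all. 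Your parenthetical ``(or some other pair identified during the analysis)'' acknowledges this but does not resolve it: deciding which pair is decomposable requires the global interaction of the multiplicities, which is precisely what the paper's double-counting identities capture. So the cancellation is a clean, correct preprocessing step, but it does not materially reduce the difficulty of the core argument, and that argument is left as an unverified assertion. To finish, you would essentially need to reprove the paper's identities \eqref{CombNet-mij2} and \eqref{CombNet-mijk} (now with the simple-element multiplicities set to zero) and read a decomposable pair off them.
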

\begin{proof}
The proof is by showing that for any multi-set $\mathbf{\Lambda}$ with the stated assumptions, at least one of the non-trivial elementary decompressions in Fig.~\ref{lb-CombNet-opt-table-decompress} is doable.
This is done by double counting (once in $\mathbf{\Lambda}$ and once in $\mathbf{\Gamma}$) the number of times each subset $\mathcal{S}\in 2^{\{1,2,3\}}$ appears in the multi-set $\mathbf{\Lambda}$, and showing that no matter what $\mathbf{\Lambda}$ and $\mathbf{\Gamma}$ are, at least one of the cases of Fig.~\ref{lb-CombNet-opt-table-decompress} occurs. We defer details of this proof to Appendix \ref{ap-CombNet-count}.
\end{proof}

Lemma \ref{CombNet-count} shows that a multi-set $\mathbf{\Lambda}$ of saturated pattern, which is balanced with a multi-set 
$\mathbf{\Gamma}$ of standard pattern, can be non-trivially decompressed. Let the result of this non-trivial elementary 
decompression be a multi-set $\mathbf{\Sigma}$.
Since the decompressed multi-set $\mathbf{\Sigma}$ is, itself, of saturated pattern and remains balanced with multi-set 
$\mathbf{\Gamma}$ (see Remark \ref{impremark}), one can continue decompressing it using Lemma \ref{CombNet-count} as long as 
$\mathbf{\Sigma}\neq\mathbf{\Gamma}$.
 This, either ends in an infinite loop, or ends in $\mathbf{\Sigma}=\mathbf{\Gamma}$ (Note that there cannot be two different 
 multi-sets $\mathbf{\Gamma}\geq\mathbf{\Lambda}$ and $\mathbf{\Gamma}^\prime\geq \mathbf{\Lambda}$ such that $\mathbf{\Gamma}$, $\mathbf{\Gamma}^\prime$, and $\mathbf{\Lambda}$ are all balanced);  the former is ensured not to happen, for the total number of edges in the associated graph strictly decreases after each decompression (see Lemma \ref{lb-CombNet-graph-strict}). Thus, we arrive at the following lemma.

\begin{lemma}
\label{lb-CombNet-decompression}
Let $\mathbf{\Lambda}$ and $\mathbf{\Gamma}$ be multi-sets of subsets of $2^{\{1,2,3\}}$ where $\mathbf{\Lambda}$ is of saturated pattern and $\mathbf{\Gamma}$ is of standard pattern. If  $\mathbf{\Lambda}$ and $\mathbf{\Gamma}$ are balanced, then $\mathbf{\Lambda}$ can be decompressed to $\mathbf{\Gamma}$; i.e., $\mathbf{\Gamma}\geq \mathbf{\Lambda}$.
\end{lemma}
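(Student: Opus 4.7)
The plan is to establish a chain of decompressions from $\mathbf{\Lambda}$ up to $\mathbf{\Gamma}$ by iterating Lemma \ref{CombNet-count} until it can no longer be applied. First I would initialize $\mathbf{\Sigma}_0 := \mathbf{\Lambda}$, which by hypothesis is of saturated pattern and balanced with the standard-pattern multi-set $\mathbf{\Gamma}$.

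The inductive step is the heart of the argument. Given $\mathbf{\Sigma}_k$ of saturated pattern balanced with $\mathbf{\Gamma}$, if $\mathbf{\Sigma}_k \neq \mathbf{\Gamma}$ then Lemma \ref{CombNet-count} produces a non-trivial elementary decompression yielding some $\mathbf{\Sigma}_{k+1} > \mathbf{\Sigma}_k$. I then need to verify the two invariants: that $\mathbf{\Sigma}_{k+1}$ is again of saturated pattern, and that $\mathbf{\Sigma}_{k+1}$ remains balanced with $\mathbf{\Gamma}$. The first is immediate by direct inspection of the right-hand column of Fig.~\ref{lb-CombNet-opt-table-decompress}, whose entries are all of saturated pattern by construction. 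The second follows from Remark~\ref{impremark}(iii): since $\mathbf{\Sigma}_k$ is a compression of $\mathbf{\Sigma}_{k+1}$, the two multi-sets are balanced with each other, and transitivity of balancedness (it is simply equality of per-subset occurrence counts) then propagates balancedness with $\mathbf{\Gamma}$ to $\mathbf{\Sigma}_{k+1}$.

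The ingredient that makes the iteration halt is Lemma~\ref{lb-CombNet-graph-strict}: because $\mathbf{\Sigma}_{k+1} > \mathbf{\Sigma}_k$, the associated graph $G_{\mathbf{\Sigma}_k}$ has strictly fewer edges than $G_{\mathbf{\Sigma}_{k+1}}$. Since elementary (de)compression preserves the cardinality of the multi-set and every element lies in the finite universe $2^{\{1,2,3\}}$, the number of nodes in these graphs is fixed and the edge count is uniformly bounded. Hence the strictly increasing sequence of edge counts forces the chain $\mathbf{\Sigma}_0 < \mathbf{\Sigma}_1 < \cdots$ to terminate at some $\mathbf{\Sigma}_N$ where Lemma~\ref{CombNet-count} can no longer be invoked non-trivially; by the contrapositive of that lemma, this forces $\mathbf{\Sigma}_N = \mathbf{\Gamma}$. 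Concatenating the individual compressions $\mathbf{\Sigma}_{k+1} \geq \mathbf{\Sigma}_k$ in reverse yields $\mathbf{\Gamma} \geq \mathbf{\Lambda}$, as required.

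The main (and essentially only) obstacle is termination: without a monotone invariant one could in principle cycle through balanced saturated configurations. Lemma~\ref{lb-CombNet-graph-strict} removes this worry by supplying a strictly monotone, bounded integer progress measure, after which the remainder of the argument is bookkeeping on Lemma~\ref{CombNet-count} and Remark~\ref{impremark}.
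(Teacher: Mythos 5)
Your proposal is correct and follows essentially the same route as the paper: iterate Lemma~\ref{CombNet-count}, verify after each step that the saturated pattern and balancedness with $\mathbf{\Gamma}$ persist, and use the strictly monotone edge count of Lemma~\ref{lb-CombNet-graph-strict} as the termination measure. You are in fact slightly more careful than the paper on one point — you justify persistence of the saturated pattern by inspecting the right-hand column of Fig.~\ref{lb-CombNet-opt-table-decompress} directly, rather than appealing to Remark~\ref{impremark}(ii), which only speaks to closure under \emph{compression}; that extra care is well placed.
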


We are now ready to prove Theorem \ref{lb-CombNet-Theorem-outer2}.

\subsubsection{proof of Theorem \ref{lb-CombNet-Theorem-outer2}}
\label{lb-CombNet-subsec-proofthm}
The rate-region of Theorem \ref{lb-CombNet-Theoremk2} can be obtained explicitly by applying Fourier-Motzkin elimination  to \eqref{CombNetk2-achr1or}-\eqref{CombNetk2-achR1+R2or} and \eqref{lb-CombNet-relax} to eliminate parameters $\alpha_\mathcal{S}$. This gives a set of inequalities of the form $m_1R_1+m_2R_2\leq E$,
each obtained by summing potentially multiple copies of constraints \eqref{CombNetk2-achr1or}-\eqref{CombNetk2-achR1+R2or}, \eqref{lb-CombNet-relax}, so that all variables $\alpha_{\mathcal{S}}$, $\mathcal{S}\subseteq \mathcal{I}_1$, get eliminated. 
To show a converse for each such inner-bound inequality, $m_1R_1+m_2R_2\leq E$, 
take copies of the corresponding outer-bound constraints \eqref{CombNetk2-converse-1}-\eqref{CombNetk2-converse-4} and sum them up to yield an outer-bound inequality of the form
\begin{align}
&m_1R_1+m_2R_2+\frac{1}{n}\sum_{\Gamma\in\mathbf{\Gamma}}H(X^n_{\Gamma}|W_1)\nonumber\\
&\leq E +\frac{1}{n}\sum_{\Lambda\in\mathbf{\Lambda}} H(X^n_{\Lambda}|W_1)\label{CombNet2-subm}
\end{align}
where $\mathbf{\Gamma}$ is a multi-set of standard pattern and $\mathbf{\Lambda}$ is a multi-set of saturated pattern, both consisting of subsets of $2^{\mathcal{I}_1}$ where $\mathcal{I}_1=\{1,2,3\}$. Notice that $\mathbf{\Gamma}$ and $\mathbf{\Lambda}$ are balanced since Fourier-Motzkin elimination ensures that all the $\alpha_\mathcal{S}$'s are eliminated.
So by Lemma \ref{lb-CombNet-decompression},  we have
\begin{align}\mathbf{\Lambda}\leq \mathbf{\Gamma}\end{align} and therefore,
\begin{align}
\label{entropysubmod}
 \sum_{\Lambda\in\mathbf{\Lambda}} H(X^n_{\Lambda}|W_1)\leq\sum_{\Gamma\in\mathbf{\Gamma}}H(X^n_{\Gamma}|W_1).
\end{align} 
Using  \eqref{entropysubmod} in the outer-bound inequality \eqref{CombNet2-subm}, we conclude the converse to $m_1R_1+m_2R_2\leq E$, for every such inequality that appears in the inner-bound. This concludes the proof.

\begin{remark}
Lemmas \ref{lb-CombNetk2-relaxregion}, \ref{CombNet-count}, and \ref{lb-CombNet-decompression} are valid only for $m\leq3$, and Lemmas \ref{otherslemma}, \ref{CombNet-opt-similar}, and \ref{lb-CombNet-graph-strict} hold in general.
\end{remark}
\begin{remark}
The following example serves as a counter example for Lemma \ref{CombNet-count} when $m>3$.
Consider the following multi-sets of subsets of $2^{\{1,2,3,4\}}$:
\begin{eqnarray}
&&\mathbf{\Lambda}=\left[\{\{1\}\star\{2\}\star\{3\}\star\{4\}\star\},\{\{1,2\}\star\{1,3\}\star\{2,4\}\},\{\{1,4\}\star\{2,3\}\star\{3,4\}\star\},\{\{1,2,3,4\}\}\right],\nonumber\\
&&\mathbf{\Gamma}=[\{\{1\}\star\},\{\{2\}\star\},\{\{3\}\star\},\{\{4\}\star\}].\nonumber
\end{eqnarray}
It is easy to see that $\mathbf{\Lambda}$ is of saturated pattern, $\mathbf{\Gamma}$ is of standard pattern, and they are balanced. Nevertheless, no elementary decompression can be performed on $\mathbf{\Lambda}$.
\end{remark}

One could prove Theorem \ref{lb-CombNet-Theorem-outer3} using the same technique. More precisely, Lemma \ref{lb-CombNetk2-relaxregion} is already implied and Lemma \ref{CombNet-opt-similar} could accommodate an outer-bound constraint \textit{similar} to the inner-bound constraint of \eqref{CombNetbme-subr1or} as follows:
\begin{eqnarray}
\frac{1}{n}H({X}^n_{\{\{i\}\star\}}|W_1)&\leq& \frac{1}{n}H({X}^n_{\Lambda}|W_1)+\frac{1}{n}H(X^n_\Lambda|X^n_{\{\{i\}\star\}},W_1)\\
&\leq&\frac{1}{n}H(X^n_{\Lambda}|W_1)+\frac{1}{n}H(X^n_{(\{\{i\}\star\}\backslash\Lambda)})\\
&\leq&\frac{1}{n}H(X^n_{\Lambda}|W_1)+\sum_{\substack{\mathcal{S}\in\Lambda^c\\\mathcal{S}\ni i}}\card{\e_{\mathcal{S}}}.
\end{eqnarray}
The rest of the converse proof follows, as before, by sub-modularity of entropy.

\section{A block Markov encoding scheme for broadcasting two nested message sets over broadcast channels}
\label{lb-sec-generalbc}
We extend our coding technique to general broadcast channels. 
Consider a broadcast channel $p(y_1,\ldots,y_K|x)$ with input signal $X$, outputs $Y_1,\ldots,Y_K$ where $Y_i$, $i\in \mathcal{I}_1$, is  available to public receiver $i$ and $Y_p$, $p\in \mathcal{I}_2$, is available to  private receiver~$p$. 

In many cases where the optimal rates of communication are known for broadcasting nested messages, the classical techniques of rate splitting and superposition coding have proved optimal, and this motivates us, also, to start with such encoding schemes.
In particular, in the context of two message broadcast, we split the private message into different pieces $W^\mathcal{S}_2$ of rates $\alpha_\mathcal{S}$, $\mathcal{S}\subseteq \mathcal{I}_1$, where $W^\mathcal{S}_2$ is revealed to all public receivers in $\mathcal{S}$ (as well as the private receivers). $X^n$ is then formed by superposition coding. For $\mathcal{I}_1=\{1,2\}$, for instance, $W^{\{1\}}_2$ and $W^{\{2\}}_2$ are each independently superposed on $(W_1,W^{\{1,2\}}_2)$, and $W^{\phi}_2$ is superposed on all of them to form the input sequence$X^n$.
The rate-region achievable by superposition coding is given by a feasibility problem (a straightforward generalization of \cite[Theorem 8.1]{ElGamalKim}). 
The rate pair $(R_1,R_2)$ is achievable if there exist parameters $\alpha_\mathcal{S}$, $\mathcal{S}\subseteq \mathcal{I}_1$, and auxiliary random variables $U_\mathcal{S}$, $\phi\neq\mathcal{S}\subseteq \mathcal{I}_1$, such that inequalities in \eqref{Generalbme-posor}-\eqref{General-R1+R2or} hold for a  probability mass function (pmf) $\prod_{k=1}^K\prod_{\substack{\mathcal{S}\subseteq \mathcal{I}_1\\|\mathcal{S}|=k}}p(u_{\mathcal{S}}|\{u_{\mathcal{T}}\}_{\substack{\mathcal{T}\in\{\mathcal{S}\star\}\\\mathcal{T}\neq\mathcal{S}}})p(x|\{u_{\mathcal{S}}\}_{\substack{\mathcal{S}\subseteq \mathcal{I}_1\\\mathcal{S}\neq \phi}})$.
\allowdisplaybreaks
\begin{align}
&\text{Structural constraints:}\nonumber\\
&\hspace{1.5cm}\alpha_\mathcal{S}\geq0,\quad\quad\forall \mathcal{S}\subseteq \mathcal{I}_1\label{Generalbme-posor}\\
&\hspace{1.5cm}R_2=\sum_{\mathcal{S}\subseteq \mathcal{I}_1} \alpha_\mathcal{S},\label{General-R2}\\
&\text{Decodability constraints at public receivers:}\nonumber\\
&\hspace{1.5cm}\text{ {$\sum_{\substack{\mathcal{S}\subseteq \mathcal{I}_1\\\mathcal{S}\ni i}}\!\!\alpha_\mathcal{S}\!\leq\!\sum_{\substack{\mathcal{S}\in\Lambda}}\!\alpha_\mathcal{S}\!+\!\hspace{-.05cm}I(\cup_{\!\substack{\mathcal{S}\subseteq \mathcal{I}_1\\ \mathcal{S}\ni i}}\!U_{\mathcal{S}};Y_i|\!\cup_{\!\substack{\mathcal{S}\in\Lambda}}\!U_{\mathcal{S}}\!)$}},\quad { \forall \Lambda\subseteq \{\{i\}\star\}\ \text{ superset saturated}},\ \forall i\in \mathcal{I}_1\label{General-r1or}\\
&\hspace{1.5cm}R_1+\sum_{\substack{\mathcal{S}\subseteq \mathcal{I}_1\\\mathcal{S}\ni i}}\alpha_\mathcal{S}\leq I(\cup_{\substack{\mathcal{S}\subseteq \mathcal{I}_1,\\ \mathcal{S}\ni i}}U_{\mathcal{S}};Y_i),\quad \forall i\in \mathcal{I}_1\\
&\text{Decodability constraints at private receivers:}\nonumber\\
&\hspace{1.5cm}R_2\leq \sum_{\mathcal{S}\in\Lambda}\alpha_\mathcal{S}+I\left(X;Y_p|\cup_{\mathcal{S}\in\Lambda}U_{\mathcal{S}}\right),\quad{ \forall \Lambda\subseteq 2^{\mathcal{I}_1}\ \text{ superset saturated}},\ \forall p\in \mathcal{I}_2\label{General-R1+R2or}\\
&\hspace{1.5cm}R_1+R_2\leq I(X;Y_p),\quad \forall p\in \mathcal{I}_2.\label{endeq}
\end{align}

We shall prove that one can achieve rate pairs which satisfy a relaxed version of \eqref{Generalbme-posor}-\eqref{endeq} with a simple block Markov encoding scheme. More specifically, the constraints in \eqref{Generalbme-posor} are relaxed to the following set of constraints:
\begin{eqnarray}
\begin{array}{l}
\sum_{\mathcal{S}\in\Lambda}\alpha_\mathcal{S}\geq 0\quad  \forall \Lambda\subseteq 2^{\mathcal{I}_1} \text{ superset saturated}.\label{General-relax}
\end{array}
\end{eqnarray}
We briefly outline this block Markov encoding scheme for the case where we have two public and one private receiver (the same  line of argument goes through for the general case). We devise our block Markov encoding scheme in three steps:

1)  Form an extended broadcast channel with input/outputs $X^\prime,Y^\prime_1,Y^\prime_2,Y^\prime_3$, as shown in Fig.~\ref{Chap-GeneralChannel-virtual}. We have 
\begin{align}
&X^\prime=(X,V_{\{1,2\}},V_{\{2\}},V_{\{1\}},V_\phi)\\
&Y^\prime_1=(Y_1,V_{\{1,2\}},V_{\{1\}})\\
&Y^\prime_2=(Y_2,V_{\{1,2\}},V_{\{2\}})\\
&Y^\prime_3=(X,V_{\{1,2\}},V_{\{2\}},V_{\{1\}},V_\phi),
\end{align}
where $V_\mathcal{S}$, $\mathcal{S}\subseteq\{1,2\}$, takes its value in an alphabet set $\mathcal{V}_\mathcal{S}$ of size $2^{\beta_\mathcal{S}}$. 
We call variables $V_\mathcal{S}$  the \textit{virtual signals}. 
\begin{figure}[ht!]
\centering
\begin{tikzpicture}[scale=1]
\tikzstyle{every node}=[draw,shape=circle,font=\footnotesize\itshape]; 

\draw(3cm,2cm) node[rectangle,minimum height=3cm] {$p(y_1,y_2,y_3|x)$};
\draw(3.2cm,0.2cm) node[rectangle,minimum height=7cm,minimum width=4.5cm]{};

\draw (-1cm,0cm) node[draw=none] (x'){};
\draw (-1cm,-.2cm) node[draw=none] (x'v12){};
\draw (-1cm,-.4cm) node[draw=none] (x'v1){};
\draw (-1cm,-.6cm) node[draw=none] (x'v2){};
\draw (-1cm,-.8cm) node[draw=none] (x'v0){};

\draw (3.9cm,3cm) node[draw=none] (y1){};
\draw (3.9cm,2cm) node[draw=none] (y2){};
\draw (3.9cm,1cm) node[draw=none] (y3){};

\draw [->] (x') -- (1.5cm,0cm) -- (1.5cm,2cm) -- (2cm,2cm);
\draw [-] (x'v12) -- (1.8cm,-.2cm) -- (1.8cm,-.2cm) -- (4.5cm,-.2cm);
\draw [-] (x'v1) -- (1.6cm,-.4cm) -- (1.6cm,-1cm) -- (4.7cm,-1cm);
\draw [-] (x'v2) -- (1.4cm,-.6cm) -- (1.4cm,-1.8cm) -- (4.9cm,-1.8cm);
\draw [-] (x'v0) -- (1.2cm,-.8cm) -- (1.2cm,-2.6cm) -- (5.1cm,-2.6cm);

\draw [->] (4.5cm,-.2cm)--(4.5cm,2.9cm)--(7cm,2.9cm);
\draw [->] (4.5cm,-.2cm)--(4.5cm,1.9cm)--(7cm,1.9cm);
\draw [->] (4.7cm,-1cm)--(4.7cm,2.8cm)--(7cm,2.8cm);
\draw [->] (4.9cm,-1.8cm)--(4.9cm,1.8cm)--(7cm,1.8cm);
\draw [->]  (5.1cm,-2.6cm)-- (5.1cm,.6cm)-- (7cm,.6cm);
\draw [->] (4.5cm,-.2cm)--(4.5cm,.9cm)--(7cm,.9cm);
\draw [->] (4.7cm,-1cm)--(4.7cm,.8cm)--(7cm,.8cm);
\draw [->] (4.9cm,-1.8cm)--(4.9cm,.7cm)--(7cm,.7cm);

\draw [->] (y1) -- (7cm,3cm);
\draw [->] (y2) -- (7cm,2cm);
\draw [->] (y3) -- (7cm,1cm);


\draw(-.5cm,.4cm) node[draw=none] {$X^\prime$};
\draw(1.5cm,2.4cm) node[draw=none] {$X$};
\draw(3cm,0cm) node[draw=none] {$V_{\{1,2\}}$};
\draw(3cm,-.8cm) node[draw=none] {$V_{\{1\}}$};
\draw(3cm,-1.6cm) node[draw=none] {$V_{\{2\}}$};
\draw(3cm,-2.4cm) node[draw=none] {$V_{\phi}$};
\draw(6.5cm,1.4cm) node[draw=none] {$Y^\prime_3$};
\draw(6.5cm,2.4cm) node[draw=none] {$Y^\prime_2$};
\draw(6.5cm,3.4cm) node[draw=none] {$Y^\prime_1$};

\end{tikzpicture}
\caption{The extended broadcast channel obtained from $p(y_1,y_2,y_3|x)$ and the virtual signals $V_{\{1,2\}}$, $V_{\{2\}}$, $V_{\{1\}}$, $V_{\phi}$.}
\label{Chap-GeneralChannel-virtual}
\end{figure}

2) Design a general {multicast code} over the extended channel.
We say that a multicast code achieves the rate tuple $(R_1,\alpha^\prime_{\{1,2\}},\alpha^\prime_{\{2\}},\alpha^\prime_{\{1\}},\alpha^\prime_\phi)$, if it communicates a message of rate $R_1$ to all receivers and independent messages of rates $\alpha^\prime_\mathcal{S}$, $\mathcal{S}\subseteq \{1,2\}$, to public receivers in $\mathcal{S}$ and all private receivers. 
We design
 such a multicast code using superposition coding.
Conditions under which this encoding scheme achieves
 a rate tuple 
$(R_1,\alpha^\prime_{\{1,2\}},\alpha^\prime_{\{2\}},\alpha^\prime_{\{1\}},\alpha^\prime_\phi)$ over the extended broadcast channel 
are readily given by inequalities in \eqref{General-r1or}-\eqref{General-R1+R2or}  (for parameters $\alpha^\prime_\mathcal{S}$,  auxiliary random variables $U^\prime_\mathcal{S}$, $\phi\hspace{-.05cm}\neq\hspace{-.05cm}\mathcal{S}\hspace{-.05cm}\subseteq\hspace{-.05cm} {\mathcal{I}_1}$, and input/outputs $X^\prime\hspace{-.05cm}, Y^\prime_1\hspace{-.05cm},Y^\prime_2\hspace{-.025cm},\hspace{-.025cm}Y^\prime_3$).

3) Emulate the virtual signals. An extension to Lemma \ref{lb-CombNet-bm-lem-match} provides us with sufficient conditions. 

We now
use the information bits that are to be encoded in block $t+1$, to also
convey (the content of) the virtual signals in block $t$. We use the remaining information bits, 
not assigned to the virtual signals, to communicate the 
common and private messages.
Putting together the constraints needed in the above three steps (as in Section \ref{CombNet-ach-blockMarkov}), we obtain an achievable rate region for each joint probability distribution of the form
 {$p(u^\prime_{\{1,2\}})p(u^\prime_{\{1\}}|u^\prime_{\{1,2\}})p(u^\prime_{\{2\}}|u^\prime_{\{1,2\}})p(x^\prime|u^\prime_{\{2\}},u^\prime_{\{1\}},u^\prime_{\{1,2\}})$}.
In particular, by a proper choice for the auxiliary random variables, we show that  the rate region defined in \eqref{General-R2}-\eqref{General-relax} is achievable. 
More precisely, we have the following theorem (details of the proof are available in \cite[Theorem 4.3]{Saeedi12}).
\begin{theorem}
\label{Gen-bme-theorem}
The rate pair $(R_1,R_2)$ is achievable if there exist parameters $\alpha_\mathcal{S}$, $\mathcal{S}\subseteq \mathcal{I}_1$, and auxiliary random variables $U_\mathcal{S}$, $\phi\neq\mathcal{S}\subseteq {\mathcal{I}_1}$, such that they satisfy   inequalities in \eqref{General-R2}-\eqref{General-relax} for a joint pmf of the form $$\prod_{k=1}^K\prod_{\substack{\mathcal{S}\subseteq \mathcal{I}_1\\|\mathcal{S}|=k}}p(u_{\mathcal{S}}|\{u_{\mathcal{T}}\}_{\substack{\mathcal{T}\in\{\mathcal{S}\star\}\\\mathcal{T}\neq\mathcal{S}}})p(x|\{u_{\mathcal{S}}\}_{\substack{\mathcal{S}\subseteq \mathcal{I}_1\\\mathcal{S}\neq \phi}}).$$
\end{theorem}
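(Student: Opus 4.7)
The plan is to mirror, step by step, the three-stage construction developed for combination networks in Section~\ref{CombNet-ach-blockMarkov}, replacing the noiseless unit-capacity pipes of that setting with noisy auxiliary-variable layers of a broadcast code. The three stages are: (i) enlarge the channel by appending virtual signals $V_\mathcal{S}$ indexed by subsets $\mathcal{S}\subseteq\mathcal{I}_1$; (ii) design, over the extended channel, a \emph{general} multicast superposition code that delivers independent sub-messages $W_2^\mathcal{S}$ of rates $\alpha'_\mathcal{S}\ge 0$ to every public receiver in $\mathcal{S}$ and to every private receiver, along with the common message $W_1$ at rate $R_1$; (iii) over $n$ transmission blocks, use part of the sub-message $W_2^\mathcal{S}$ encoded in block $t+1$ to carry the content of the virtual signal $V_\mathcal{S}$ used in block $t$, and apply backward decoding. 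The net rate in each layer after emulation is $\alpha_\mathcal{S}=\alpha'_\mathcal{S}-\beta_\mathcal{S}$, where $\beta_\mathcal{S}$ denotes the rate of $V_\mathcal{S}$.

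For stage~(ii) I would specialize the extended channel exactly as in Fig.~\ref{Chap-GeneralChannel-virtual} (with $X'=(X,\{V_\mathcal{S}\}_{\mathcal{S}\subseteq\mathcal{I}_1})$, $Y'_i=(Y_i,\{V_\mathcal{S}\}_{\mathcal{S}\ni i})$ for each public receiver $i$, and $Y'_p=X'$ for each private receiver~$p$) and run the natural $2^m$-level superposition code built on auxiliaries $U'_\mathcal{S}$, $\mathcal{S}\neq\phi$, stacked along the reverse-inclusion order. Achievability for this multicast problem is a routine generalization of \cite[Theorem~8.1]{ElGamalKim}: each public receiver performs simultaneous non-unique decoding of $(W_1,\{W_2^\mathcal{T}\}_{\mathcal{T}\ni i})$, yielding the mutual-information constraints of \eqref{General-r1or}, while each private receiver decodes everything jointly, yielding \eqref{General-R1+R2or}--\eqref{endeq}. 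Because each $V_\mathcal{S}$ is deterministically available at every intended receiver of $U'_\mathcal{S}$, one may absorb it into the auxiliary by setting $U_\mathcal{S}=(U'_\mathcal{S},\{V_\mathcal{T}\}_{\mathcal{T}\in\{\mathcal{S}\star\}})$; this preserves the Markov ordering required by the claimed factorization of the joint pmf.

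For stage~(iii) I need the broadcast analogue of Lemma~\ref{lb-CombNet-bm-lem-match}: the virtual signals are jointly emulatable whenever $\sum_{\mathcal{S}\in\Lambda}\beta_\mathcal{S}\le\sum_{\mathcal{S}\in\Lambda}\alpha'_\mathcal{S}$ for every superset-saturated $\Lambda\subseteq 2^{\mathcal{I}_1}$. The proof is the same matching argument as in the combination-network case: since $W_2^\mathcal{S}$ in block $t+1$ is guaranteed to be decoded by every receiver of $V_\mathcal{S}$ (public receivers in $\mathcal{S}$ and all private receivers), any bit carved out of $W_2^\mathcal{S}[t+1]$ can be used to announce $V_\mathcal{S}[t]$; the superset-saturated inequalities are precisely the bipartite-matching/Hall condition that allows the $\beta_\mathcal{S}$-bits to be simultaneously embedded into the $\alpha'_\mathcal{S}$-bits. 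Applying backward decoding over $n$ blocks, discarding boundary effects, and letting $n\to\infty$ recovers the net private rate $R_2=\sum_\mathcal{S}\alpha_\mathcal{S}$.

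It remains to eliminate the $\beta_\mathcal{S}$ by Fourier--Motzkin. The only place the $\beta_\mathcal{S}$ appear as free variables is in the emulation conditions $\sum_{\mathcal{S}\in\Lambda}\beta_\mathcal{S}\le\sum_{\mathcal{S}\in\Lambda}\alpha'_\mathcal{S}$ together with $\beta_\mathcal{S}\ge 0$, combined with the substitution $\alpha'_\mathcal{S}=\alpha_\mathcal{S}+\beta_\mathcal{S}$. Eliminating the $\beta_\mathcal{S}$ exactly as in the passage from \eqref{CombNet-bm-r1}--\eqref{lb-CombNet-nonnegbeta} to Theorem~\ref{lb-CombNetbme-Theorem} converts the individual non-negativity constraints into the relaxed form \eqref{General-relax}, while \eqref{General-R2}--\eqref{endeq} survive intact. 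The main obstacle I anticipate is the careful bookkeeping in the emulation step: one must verify that the layered superposition decoding at every public receiver is undisturbed when the virtual signals are replaced by bits drawn from the sub-messages of the next block, and that the Markov structure of the auxiliary random variables $U_\mathcal{S}$ is preserved after absorbing the $V_\mathcal{S}$'s, so that the joint pmf factorization stated in the theorem is indeed realized. Everything else reduces to repeating, in the noisy setting, the linear-algebraic elimination argument that yielded Theorem~\ref{lb-CombNetbme-Theorem}.
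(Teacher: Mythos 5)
Your proposal mirrors the paper's construction step for step: form the extended channel by appending virtual signals $V_\mathcal{S}$, run a $2^m$-level superposition multicast code on it, emulate the virtual signals via bits carved from next-block sub-messages subject to the saturated-$\Lambda$ matching inequalities, and then eliminate the $\beta_\mathcal{S}$ via Fourier--Motzkin to convert the per-coordinate non-negativity constraints into \eqref{General-relax}. Your observation that one should set $U_\mathcal{S}=(U'_\mathcal{S},\{V_\mathcal{T}\}_{\mathcal{T}\supseteq\mathcal{S}})$ to preserve the stated Markov factorization fills in exactly the ``proper choice of auxiliary random variables'' the paper alludes to without spelling out (deferring to the thesis for details), so this is essentially the same proof.
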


\begin{remark}
\label{remarkstrict}
Note that the rate-region in Theorem \ref{Gen-bme-theorem} looks similar to that of superposition coding, see \eqref{Generalbme-posor}-\eqref{General-R1+R2or}. Clearly, the former rate-region has a less constrained set of non-negativity constraints on  $\alpha_\mathcal{S}$ and includes the latter.
 It is interesting to ask if this inclusion is strict, and it is non-trivial to answer this question because of the union that is taken over all proper pmfs. For a fixed pmf, the inclusion is strict for $m\geq 3$ public (and any number of private) receivers. 
So it is possible that the proposed block Markov  scheme strictly enlarges the rate-region of superposition coding. However, this needs further investigation.
\end{remark}

\begin{remark}
One may design a more general block Markov  scheme by using Marton's coding in the second step (when devising a  multicast code for the extended broadcast channel). Therefore, following similar steps as above, non-negativity constraints on rate-split parameters can be (partially) relaxed from the rate-region that is achievable by rate-splitting, superposition coding, and Marton's coding, see \cite[Theorem 4.3]{Saeedi12}. 
\end{remark}

\section{Conclusion}
\label{conclusion}
In this paper, we studied the problem of multicasting two nested message sets over combination networks and gave a full characterization of the capacity region for combination networks with three (or fewer) public and any number of private receivers. 

More generally, we discussed three encoding schemes which are based on linear superposition schemes. We showed that the standard linear superposition encoding scheme is optimal for networks with two public and any number of private receivers. For networks with more than two public receivers, however, this scheme is sub-optimal. We discussed two improved schemes. The first scheme uses an appropriate pre-encoding at the source, followed by a linear superposition scheme. We characterized the achievable rate-region in terms of a feasibility problem (Theorem \ref{lb-CombNet-Theoremk2}) and showed its tightness for networks with three (or fewer) public and any number of private receivers (Theorem \ref{lb-CombNet-Theorem-outer2}). We illustrated an example (Example \ref{CombNet-BME-example22-3}) where this scheme performs sub-optimally, and proposed an optimal block Markov encoding scheme. Motivated by this example, we proposed a block Markov encoding scheme and characterized its achievable rate-region (Theorem \ref{lb-CombNetbme-Theorem}). This scheme, also, is capacity achieving for network with three (or fewer) public and any number of receivers (Theorem \ref{lb-CombNet-Theorem-outer3}). While the rate-regions of Theorem \ref{lb-CombNet-Theoremk2} and Theorem \ref{lb-CombNetbme-Theorem} are not comparable in general, we showed an example where Theorem \ref{lb-CombNetbme-Theorem} strictly includes Theorem \ref{lb-CombNet-Theoremk2}. We conjecture that this inclusion always holds.

We discussed that combination networks are an interesting class of networks to study especially since they also form a class of resource-based broadcast channels. To illustrate the implications of our study over broadcast channels, we generalized the block Markov encoding scheme that we proposed in the context of combination networks and proposed a new achievable scheme for broadcast channels with two nested message sets. The rate-region achieved by this scheme includes the previous rate-regions. It remains open whether this inclusion is strict.


%
\appendices
\section{Proof of Lemma \ref{lb-CombNet2-lemmaW1}}
\label{ap-CombNet2-lemmaW1}
We start with proving the ``only if'' statement. Since $W_1$ is recoverable from $Y$,  for any $W_1,W_2,W^\prime_1,W^\prime_2$ it holds
that if $Y=Y^\prime$ (or equivalently $\mathbf{T}_1(W_1-W^\prime_1)+\mathbf{T}_2(W_2-W^\prime_2)=0$), then $W_1=W^\prime_1$. 
In particular, for $W_2=W_2^\prime$ one finds that for any $W_1,W^\prime_1$ equation $\mathbf{T}_1(W_1-W_1^\prime)=0$ 
results in $W_1=W^\prime_1$. Therefore, $\mathbf{T}_1$ is  full column rank; i.e., $\rank(\mathbf{T}_1)=R_1$. Furthermore, 
for all vectors $W_1, W^\prime_1,W_2,W^\prime_2$ such that $W_1\neq W^\prime_1$, one obtains 
$\mathbf{T}_1(W_1-W^\prime_1)\neq \mathbf{T}_2(W^\prime_2-W_2)$; i.e., the column space of matrix $\mathbf{T}_2$ is disjoint 
from the column space of matrix $\mathbf{T}_1$.

To prove the ``if" statement, we prove that if it holds that $\rank(\mathbf{T}_1)=R_1$ and column spaces of 
$\mathbf{T}_1$ and $\mathbf{T}_2$ are disjoint, then equation $Y=Y^\prime$ (or equivalently  $\mathbf{T}_1(W_1-W^\prime_1)+\mathbf{T}_2(W_2-W^\prime_2)=0$) 
results in $W_1=W^\prime_1$ for all vectors $W_1,W_2,W^\prime_1,W^\prime_2$. We show this by contradiction. 
Let $\mathbf{T}_1(W_1-W^\prime_1)+\mathbf{T}_2(W_2-W^\prime_2)=0$ and $W_1\neq W^\prime_1$. For the cases where 
$\mathbf{T}_2(W_2- W^\prime_2)=0$, we get $\mathbf{T}_1(W_1-W^\prime_1)=0$ for $W_1\neq W^\prime_1$, which contradicts the 
original assumption of $\rank(\mathbf{T}_1)=R_1$. If $\mathbf{T}_2(W_2-W^\prime_2)\neq 0$, then $\mathbf{T}_1(W_1-W^\prime_1)+\mathbf{T}_2(W_2-W^\prime_2)=0$ 
implies that there exists at least one non-zero vector in the intersection of the column spaces of $\mathbf{T}_1$ and 
$\mathbf{T}_2$ which contradicts the assumption that column space of $\mathbf{T}_1$ and $\mathbf{T}_2$ are disjoint.

\section{Proof of Lemma \ref{lb-CombNet-declem-2}}
\label{ap-CombNet-declem-2}
We start with proving that $(i)$ implies $(ii)$. Assume that matrix $\mathbf{T}$ is assigned  full column rank.
Over the equivalent unicast network, we propose a network code that constitutes a transfer matrix (from node $A$ to node $B$) exactly equal to the full-rank matrix $\mathbf{T}$: First, let each outgoing edge of the source carry one uncoded symbol of the message. Then, let the first $r_{\{1,2\}}$ rows of matrix $\mathbf{T}$ specify the local encoding matrix at node $n^\prime_{\{1,2\}}$. Similarly, let  the second $r_{\{2\}}$, 
the third $r_{\{1\}}$, and the last $r_\phi$ set  of rows of matrix $\mathbf{T}$ specify the local encoding matrices at nodes $n^\prime_{\{2\}}$, $n^\prime_{\{1\}}$, and $n^\prime_{\phi}$, respectively. Note that the zero structure of matrix $\mathbf{T}$, which is given in equation \eqref{lb-CombNet-structureB}, ensures that this is a well defined construction. It follows that $\mathbf{T}$ is the transfer matrix from node $A$ to node $B$ and, since it is full column rank, the encoded message can be decoded at sink node $B$.

Statement $(ii)$ also implies $(i)$ and the proof is by construction.  If a message of rate $c$ could be sent over 
the equivalent unicast network (from node $A$ to node $B$), then there are $c$ edge disjoint paths from the 
source $A$ to the sink  $B$. Each such path matches one of the outgoing edges of source node $A$ to one of the 
incoming edges of sink node $B$. We call this matching between the outgoing edges of the source and the incoming 
edges of the sink matching $M$ and we note that its size is $c$. We use this matching to fill the indeterminates of matrix $\mathbf{T}$ with $0-1$. 
First of all, note that each column $j$ of matrix $\mathbf{T}$ corresponds to an outgoing edge of the source, say $e_j$, 
and each row $i$ of matrix $\mathbf{T}$ corresponds to an incoming edge of the sink, say $e^\prime_i$. So each entry 
$(i,j)$ of matrix $\mathbf{T}$ is priorly set to zero if and only  if edges $e_j$ and $e^\prime_j$ cannot be matched.
Now, put a $1$ in entry $(i,j)$ of matrix $\mathbf{T}$ if edge $e_j$ is matched to edge $e^\prime_j$ over the matching $M$. 
Since matching $M$ has a size equal to $c$, matrix $\mathbf{T}$ is made full column rank.

\section{Proof of Lemma \ref{lb-CombNet-mincut}}
\label{ap-CombNet-mincut}
Consider all (edge-) cuts separating the source node $A$ from the sink node $B$. Since the intermediate edges all have 
infinite capacity, the minimum cut does not contain any 
edges from them. 
One can verify the following fact over Fig.~\ref{FigCombinationNetworkBasicLemma2}: 
If an edge $(n^\prime_{\mathcal{S}^\prime},B)$, $\mathcal{S}^\prime\subseteq \{1,\ldots,t\}$, does not belong to the cut, 
then all edges $(A,n_\mathcal{S})$ where $\mathcal{S}\supseteq \mathcal{S}^\prime$ belong to that cut.
So each (finite-valued) cut corresponds to a subset of the nodes $n^\prime_\mathcal{S}$ in the third layer. Denoting
this subset of nodes by $\{n^\prime_\mathcal{S}\}_{\mathcal{S}\in{\Gamma}}$, for some $\Gamma\subseteq2^{\{1,\ldots,t\}}$, we
have the value of the cut given by
\begin{align}
\label{app-CombNetk2-mincut-1}
\sum_{\mathcal{S}\in \Gamma}{r_{\mathcal{S}}}+\sum_{\mathcal{S}\supseteq \mathcal{S}^\prime,\ \mathcal{S}^\prime\in\Gamma^c} c_{\mathcal{S}}.
\end{align}
It is not difficult to verify that the (inclusion-wise) minimal cuts are derived for sets $\Gamma^c$ that are superset 
saturated. Renaming $\Gamma^c$ as $\Lambda$ concludes the proof.
\section{Proof of Lemma \ref{lb-lemGifull}}
\label{ap-lemGifull}
Let $r_i$ denote the number of rows in $\mathbf{G}^{(i)}$; i.e., let $r_i= \sum_{\substack{\mathcal{S}\subseteq \mathcal{I}_1\ \mathcal{S}\ni i}}|\mathcal{E}_{\mathcal{S}}|$. In the matrix $\mathbf{G}^{(i)}=[\mathbf{B}^{(i)}|\mathbf{L}^{(i)}_1]$, first select an assignment  for the columns of  $\mathbf{L}^{(i)}_1$ that makes them linearly independent (such an assignment exists from construction). 
  Since the element variables in $\mathbf{L}^{(i)}_1$ are independent of those in $\mathbf{B}^{(i)}$, $\mathbf{G}^{(i)}$ can be  made full-rank just by picking $R_1$  vectors from $\mathbb{F}^{r_i}$, linearly independent from the columns of $\mathbf{L}^{(i)}_1$, for the columns of $\mathbf{B}^{(i)}$. This is possible since $\rank(\mathbf{B}^{(i)})$ is at most $\sum_{\substack{\mathcal{S}\subseteq \mathcal{I}_1\ \mathcal{S}\ni i}}\alpha_{\mathcal{S}}$ which is bounded by $r_i-R_1$ (by assumption).

\section{Decodability constraints for private receivers in Theorem \ref{lb-CombNet-Theoremk2}}
\label{ap-CombNet-lemmaW1W2}
For simplicity of notation, we give the proof for the case where $m=2$.
We prove that $\mathbf{T}^{(p)}\mathbf{P}$ could be made full column rank if and only if message $W_2$ could be unicast over the  network of Fig.~\ref{FigCombinationNetworkLemmaNegativea0}. In this network the outgoing edges of the source and the incoming edges of the sink are all of unit
capacity and the bold edges in the middle are of infinite capacity. Define $\mathbf{T}^{(p)\prime}:=\mathbf{T}^{(p)}\mathbf{P}$ which is given by 
\begin{eqnarray}
\label{lb-CombNet-structureTp}
\begin{array}{c}\\\mathbf{T}^{(p)\prime}=\end{array}\begin{array}{l}\begin{array}{cccc}\stackrel{\alpha_{\{1,2\}}}{\longleftrightarrow}&\hspace{-.3cm}\stackrel{\alpha_{\{1\}}}{\leftrightarrow}&\hspace{-.35cm}\stackrel{\alpha_{\{2\}}}{\leftrightarrow}&\hspace{-0.25cm}\stackrel{\alpha_{\phi}}{\leftrightarrow}\end{array}\\
\left[\begin{array}{c|c|c|c}
\hspace{.4cm}\ &0&0&0\\\hline
\hspace{.4cm}\ &\hspace{.2cm}&0&0\\\hline
\hspace{.4cm}\ &0&\hspace{.2cm}&0\\\hline
\hspace{.4cm}\ &\hspace{.2cm}&\hspace{.2cm}&\hspace{.2cm}
                  \end{array}\right] \begin{array}{c}\updownarrow\\\updownarrow\\\updownarrow\\\updownarrow\end{array}
	   \hspace{-.3cm} {\tiny{\begin{array}{l}\\|\e_{\{1,2\},p}|\vspace{.3cm}\\ \vspace{.3cm}|\e_{\{1\},p}|\\\vspace{.3cm}|\e_{\{2\},p}|\\\vspace{.3cm}|\e_{\phi,p}|\end{array}}}
\end{array}\begin{array}{c}\\\hspace{-.3cm}\cdot\hspace{.5cm}\mathbf{P}.\end{array}
\end{eqnarray}
Think of matrix $\mathbf{P}$ as the local transfer matrix at source node $A$. Also, think of the matrix formed by the 
first $\card{\e_{\{1,2\},p}}$ rows of $\mathbf{T}^{(p)\prime}$ as the local transfer matrix at intermediate node 
$n_{\{1,2\},p}$, the matrix formed by the second $\card{\e_{\{1\},p}}$ rows of $\mathbf{T}^{(p)\prime}$ as the local transfer 
matrix at intermediate node $n_{\{1\}}$ and so on. Notice the equivalence between matrix $\mathbf{T}^{(p)\prime}$ and 
the transfer matrix imposed by a  linear network code from node $A$ to node $B$ over the  network of 
Fig.~\ref{FigCombinationNetworkLemmaNegativea0}, and conclude that message $W_2$ is decodable  if and only if message $W_2$ 
could be unicast from $A$ to $B$.
\begin{figure}
\centering
\begin{tikzpicture}[scale=2.5]
\tikzstyle{every node}=[draw,shape=circle,minimum size=1cm,font=\small\itshape]; 
\path (0:0cm) node (v0) {$A$}; 
\path (20:.5cm) node [draw=none] {source}; 

\path(180+30:1.8cm)  node (v12) {$n_{\{1,2\}}$}; 
\path(180+70:1cm)  node (v2) {$n_{\{2\}}$}; 
\path (180+110:1cm) node (v1) {$n_{\{1\}}$}; 
\path (180+150:1.8cm) node (vemp) {$n_\phi$};

\path (180+45:2.5cm)  node (u12) {$n^\prime_{\{1,2\},p}$};
\path(180+70:1.87cm)  node (u2) {$n^\prime_{\{2\},p}$}; 
\path(180+110:1.87cm)  node (u1) {$n^\prime_{\{1\},p}$}; 
\path (180+135:2.5cm) node (uemp) {$n^\prime_{\phi,p}$};

\path (180+90:2.6cm) node (d) {$B$};
\path (-80:2.8cm) node[draw=none] {sink};

\draw[->] (v0) -- node[left,draw=none,fill=none] {$\alpha_{\{1,2\}}$}    (v12) ;

\draw[->] (v0) -- node[left,draw=none,fill=none] {$\alpha_{\{2\}}$}    (v2) ;

\draw[->] (v0) -- node[right,draw=none,fill=none] {$\alpha_{\{1\}}$}    (v1) ;

\draw[->] (v0) -- node[right,draw=none,fill=none] {$(\alpha_\phi)^+$}    (vemp) ;


\draw[->,line width=.07cm] (v12) -- (u12);
\draw[->,line width=.07cm] (v12) -- (u2);
\draw[->,line width=.07cm] (v12) -- (u1);
\draw[->,line width=.07cm] (v12) -- (uemp);
\draw[->,line width=.07cm] (v2) -- (u2);
\draw[->,line width=.07cm] (v2) -- (uemp);
\draw[->,line width=.07cm] (v1) -- (u1);
\draw[->,line width=.07cm] (v1) -- (uemp);
\draw[->,line width=.07cm] (vemp) -- (uemp);

\draw[->] (u12) -- node[left,draw=none,fill=none,yshift=-.1cm] {$\card{\e_{\{1,2\},p}}$}    (d) ;

\draw[->] (u2) -- node[left,draw=none,fill=none,yshift=.1cm] {$\card{\e_{\{2\},p}}$}    (d) ;

\draw[->] (u1) -- node[right,draw=none,fill=none,yshift=.1cm] {$\card{\e_{\{1\},p}}$}    (d) ;

\draw[->] (uemp) -- node[right,draw=none,fill=none,yshift=-.1cm] {$\card{\e_{\phi,p}}$}    (d) ;

\end{tikzpicture}
\caption{Source node $A$ communicates a message of rate $R_2$ to the sink node, $B$.}
\label{FigCombinationNetworkLemmaNegativea0}
\end{figure}
Decodability conditions at receiver $p$ can, therefore, be inferred from the min-cut separating nodes $A$ and $B$ over the  network of Fig.~\ref{FigCombinationNetworkLemmaNegativea0}. Lemma \ref{lb-CombNet-mincut} gives this min-cut by the following expression.
\begin{eqnarray}
\label{lb-CombNet-mincutap}
\min\left\{\min_{\substack{\Lambda \subset 2^{\mathcal{I}_1}\\ \Lambda\text{superset saturated}}}\sum_{\mathcal{S}\in \Lambda}\alpha_\mathcal{S}+\sum_{\mathcal{S}\in \Gamma^c}\card{\e_{\mathcal{S},p}},\sum_ {\mathcal{S}\in 2^{\mathcal{I}_1},\ \mathcal{S}\neq \phi}\alpha_\mathcal{S}+(\alpha_\phi)^+\right\}.
\end{eqnarray}
One can verify that $R_2$ is smaller than the expression in \eqref{lb-CombNet-mincutap}, provided that inequalities in \eqref{CombNetk2-ach1or} hold.

\section{Application of the Sparse Zeros Lemma to the proof of Theorem \ref{lb-CombNet-Theoremk2}}
\label{sparselemmamix}
We constructed matrices $\mathbf{G}^{(i)}$, $i\in\mathcal{I}$ and reduced the problem to making all these matrices simultaneously full-rank. 
Matrices $\mathbf{G}^{(i)}$ have their entries defined by the variables in $\mathbf{A}$ and $\mathbf{P}$. We also discussed that each of these matrices could be made full column rank.
This implies that there exists a  square submatrix of each $\mathbf{G}^{(i)}$, say ${\mathbf{G}}_s^{(i)}$, that could be made full-rank.  Let   $\mathcal{P}^{(i)}$ be the polynomial corresponding to the determinant of   $\mathbf{G}_s^{(i)}$, and $\mathcal{P}=\prod_i\mathcal{P}^{(i)}$. 
 Given that there  exists an assignment for the variables such that each individual polynomial $\mathcal{P}^{(i)}$ is 
 non-zero, we can  conclude from the sparse zeros lemma that there exists an assignment such that all polynomials are 
 simultaneously non-zero. 
We can furthermore provide an upper bound on the required size for $\mathbb{F}$. This is done next by finding the degree of each polynomial $\mathcal{P}^{(i)}$ in each variable. 

For $i\in\mathcal{I}_1$, since all variables in $\mathbf{G}_s^{(i)}$ are independent of each other, the desired degree is at most $1$. For $\mathcal{P}^{(p)}$, $p\in\mathcal{I}_2$, also,  we can show that the maximum degree in each variable is $1$. To see this, we proceed as follows. Recall that for $p\in\mathcal{I}_2$, $\mathbf{G}^{(p)}=\mathbf{T}^{(p)}\mathbf{P}$. Let us denote the $(i,j)^{\text{th}}$ entry of $\mathbf{G}^{(p)}$ by $g_{i,j}$. So each $g_{i,j}=\sum_l t_{i,l}p_{l,j}$, where $t_{i,l}$ refers to the $(i,l)^{\text{th}}$ entry of $\mathbf{T}^{(p)}$ and $p_{l,j}$ refers to the $(l,j)^{\text{th}}$ entry of $\mathbf{P}$. Using  Laplace expansion, we have
\begin{eqnarray}
\det{\mathbf{G}_s^{(p)}}= \sum_i (-1)^{i+j}g_{i,j}\det{\mathbf{G}^{(p)}_{i,j}},
\end{eqnarray}
where $\det{\mathbf{G}_{i,j}}$ is the matrix obtained from $\mathbf{G}^{(p)}_s$ after removing the $i^{\text{th}}$ row and the $j^{\text{th}}$ column. Now, note that $\det{\mathbf{G}_{i,j}}$ is not a function of variables $\{t_{i,l}\}_l$ (which are indeed variables of $\mathbf{A}$), nor is it a function of variables $\{p_{l,j}\}_l$.  Thus, the degree of $\mathcal{P}^{(p)}$  is at most $1$ in each of the variables of $\mathbf{A}$ and $\mathbf{P}$.

Let us construct the polynomial $\mathcal{P}=\prod_{i\in\mathcal{I}}\mathcal{P}^{(i)}$. Each $\mathcal{P}^{(i)}$ is of degree 
at most $1$ in each variable. So, the degree of $\mathcal{P}$ is at most $K$ in each variable.
From the sparse zeros lemma \cite[Lemma 2.3]{kcl07}, $\mathbb{F}$ need only be such that $|\mathbb{F}|\geq K$.

\section{Proof of Lemma \ref{lb-CombNet-bm-lem-match}}
\label{ap-CombNet-bm-lem-match}

Each virtual resource $v\in\mathcal{V}_\mathcal{S}$ can be emulated by one information symbol from any of the messages that 
are destined to all end-destinations of $v$; i.e., all messages $W^{\prime \mathcal{S}^\prime}_2$  
where $\mathcal{S}^\prime\supseteq \mathcal{S}$. One can form a bipartite graph with the virtual resources as one set of 
nodes and the information symbols as the other set of nodes (see Fig.~\ref{FigCombinationNetworkMatching}). Each virtual resource is connected to those information 
symbols that can emulate it. These edges are shown in light colour in Fig.~\ref{FigCombinationNetworkMatching}. Therefore, all resources are emulatable if there exists a matching of size 
$\sum_{\mathcal{S}\subseteq \mathcal{I}_1} \beta_\mathcal{S}$ over the bipartite graph of 
Fig.~\ref{FigCombinationNetworkMatching}. The matching is shown via bold edges.  It is well-known that there exists such a matching if and only if a flow of 
amount $\sum_{\mathcal{S}\subseteq \mathcal{I}_1} \beta_\mathcal{S}$ could be sent over the network 
of Fig.~\ref{FigCombinationNetworkflow} from node $A$ to node $B$. The min-cut separating nodes $A$ and $B$ is given by the 
following expression (see the proof of Lemma \ref{lb-CombNet-mincut}):
\begin{figure}
 \centering
\begin{subfigure}[b]{0.45\textwidth}
\begin{tikzpicture}[scale=2.2]
\tikzstyle{every node}=[draw,shape=circle,minimum size=.01cm,font=\small\itshape]; 
\node[draw=none] () at (-1.2,0.3) {$\beta_{\phi}$};
\node[draw=none] () at (-1.2,0.15) {$\overbrace{\hspace{1cm}}$};
\node[rectangle,draw=none]() at (0,.5) {Virtual resources};
\node[rectangle,draw=none]() at (0,-2.5) {Information symbols};

\node[] (v12) at (-1.2,0) {};
\node[] (v12') at (-1.35,0) {};
\node[] (v12'') at (-1.05,0) {};

\node[draw=none] () at (-.375,0.3) {$\beta_{\{1\}}$};
\node[draw=none] () at (-.375,0.15) {$\overbrace{\hspace{.5cm}}$};

\node[] (v2) at (-.3,0) {};
\node[] (v2') at (-.45,0) {};

\node[draw=none] () at (.525,0.3) {$\beta_{\{2\}}$};
\node[draw=none] () at (.525,0.15) {$\overbrace{\hspace{1.2cm}}$};

\node[] (v1) at (.3,0) {};
\node[] (v1') at (.45,0) {};
\node[] (v1'') at (.6,0) {};
\node[] (v1''') at (.75,0) {};

\node[draw=none] () at (1.275,0.3) {${\beta_{\{1,2\}}}$};
\node[draw=none] () at (1.275,0.15) {$\overbrace{\hspace{.5cm}}$};

\node[] (vemp) at (1.2,0) {};
\node[] (vemp') at (1.35,0) {};

\node[] (u12) at (-.9-.15,-2) {};
\node[] (u12') at (-1.05-.15,-2) {};
\node[draw=none] () at (-.975-.15,-2.3) {$\alpha_{\phi}$};
\node[draw=none] () at (-.975-.15,-2.15) {$\underbrace{\hspace{.5cm}}$};

\node[] (u2) at (-.3-.15,-2) {};
\node[] (u2') at (-.45-.15,-2) {};
\node[] (u2'') at (-.15-.15,-2) {};
\node[draw=none] () at (-.3-.15,-2.3) {${\alpha_{\{1\}}}$};
\node[draw=none] () at (-.3-.15,-2.15) {$\underbrace{\hspace{1cm}}$};

\node[] (u1) at (.3-.15,-2) {};
\node[draw=none] () at (.3-.15,-2.3) {${\alpha_{\{2\}}}$};
\node[draw=none] () at (.3-.15,-2.15) {$\underbrace{\hspace{.1cm}}$};

\node[] (uemp) at (1.2-.15,-2) {};
\node[] (uemp') at (1.35-.15,-2) {};
\node[] (uemp'') at (1.5-.15,-2) {};
\node[] (uemp''') at (1.05-.15,-2) {};
\node[] (uemp'''') at (.9-.15,-2) {};
\node[draw=none] () at (1.05,-2.3) {${\alpha_{\{1,2\}}}$};
\node[draw=none] () at (1.05,-2.15) {$\underbrace{\hspace{1.5cm}}$};

\draw[->,gray!50,very thin] (v12') -- (u12);
\draw[->,gray!50,very thin] (v12'') -- (u12);

\draw[->,gray!50,very thin] (v12) -- (u2);
\draw[->,gray!50,very thin] (v12') -- (u2);
\draw[->,gray!50,very thin] (v12'') -- (u2);

\draw[->,gray!50,very thin] (v12) -- (u1);
\draw[->,gray!50,very thin] (v12') -- (u1);
\draw[->,gray!50,very thin] (v12'') -- (u1);

\draw[->,gray!50,very thin] (v12) -- (uemp);
\draw[->,gray!50,very thin] (v12') -- (uemp);
\draw[->,gray!50,very thin] (v12'') -- (uemp);

\draw[->,gray!50,very thin] (v12) -- (u12');
\draw[->,black, line width=.05cm] (v12') -- (u12');
\draw[->,gray!50,very thin] (v12'') -- (u12');

\draw[->,gray!50,very thin] (v12) -- (u2');
\draw[->,gray!50,very thin] (v12') -- (u2');

\draw[->,gray!50,very thin] (v12) -- (uemp');
\draw[->,gray!50,very thin] (v12') -- (uemp');
\draw[->,gray!50,very thin] (v12'') -- (uemp');

\draw[->,gray!50,very thin] (v12) -- (uemp'');
\draw[->,gray!50,very thin] (v12') -- (uemp'');
\draw[->,gray!50,very thin] (v12'') -- (uemp'');

\draw[->,gray!50,very thin] (v12) -- (uemp''');
\draw[->,gray!50,very thin] (v12') -- (uemp''');
\draw[->,gray!50,very thin] (v12'') -- (uemp''');

\draw[->,gray!50,very thin] (v12) -- (uemp'''');
\draw[->,gray!50,very thin] (v12') -- (uemp'''');
\draw[->,gray!50,very thin] (v12'') -- (uemp'''');

\draw[->, gray!50,very thin] (v2) -- (u2);
\draw[->,gray!50,very thin] (v2) -- (u2');
\draw[->,gray!50,very thin] (v2') -- (u2');
\draw[->, gray!50,very thin] (v2') -- (u2'');

\draw[->,gray!50,very thin] (v2) -- (uemp);
\draw[->,gray!50,very thin] (v2) -- (uemp');
\draw[->,gray!50,very thin] (v2) -- (uemp'');
\draw[->,gray!50,very thin] (v2) -- (uemp''');
\draw[->,gray!50,very thin] (v2) -- (uemp'''');

\draw[->,gray!50,very thin] (v2') -- (uemp);
\draw[->,gray!50,very thin] (v2') -- (uemp');
\draw[->,gray!50,very thin] (v2') -- (uemp'');
\draw[->,gray!50,very thin] (v2') -- (uemp''');
\draw[->,gray!50,very thin] (v2') -- (uemp'''');

\draw[->,gray!50,very thin] (v1') -- (u1);
\draw[->,gray!50,very thin] (v1'') -- (u1);
\draw[->,gray!50,very thin] (v1''') -- (u1);

\draw[->,gray!50,very thin] (v1) -- (uemp);
\draw[->,gray!50,very thin] (v1) -- (uemp');
\draw[->,gray!50,very thin] (v1) -- (uemp'');
\draw[->,gray!50,very thin] (v1) -- (uemp''');
\draw[->,gray!50,very thin] (v1) -- (uemp'''');

\draw[->,gray!50,very thin] (v1') -- (uemp);
\draw[->,gray!50,very thin] (v1') -- (uemp');
\draw[->,gray!50,very thin] (v1') -- (uemp'');
\draw[->, gray!50,very thin] (v1') -- (uemp''');

\draw[->,gray!50,very thin] (v1'') -- (uemp);
\draw[->,gray!50,very thin] (v1'') -- (uemp');
\draw[->,gray!50,very thin] (v1'') -- (uemp'');
\draw[->, gray!50,very thin] (v1'') -- (uemp'''');

\draw[->,gray!50,very thin] (v1''') -- (uemp');
\draw[->,gray!50,very thin] (v1''') -- (uemp'');
\draw[->,gray!50,very thin] (v1''') -- (uemp''');
\draw[->,gray!50,very thin] (v1''') -- (uemp'''');

\draw[->,black, line width=.05cm] (vemp) -- (uemp');
\draw[->,gray!50,very thin] (vemp) -- (uemp);
\draw[->,gray!50,very thin] (vemp) -- (uemp'');
\draw[->,gray!50,very thin] (vemp) -- (uemp''');
\draw[->,gray!50,very thin] (vemp) -- (uemp'''');
\draw[->,gray!50,very thin] (vemp') -- (uemp);
\draw[->,gray!50,very thin] (vemp') -- (uemp');
\draw[->, black, line width=.05cm] (vemp') -- (uemp'');
\draw[->,gray!50,very thin] (vemp') -- (uemp''');
\draw[->,gray!50,very thin] (vemp') -- (uemp'''');

\draw[->, black, line width=.05cm] (v2) -- (u2'');
\draw[->, black, line width=.05cm] (v2') -- (u2);
\draw[->, black, line width=.05cm] (v1''') -- (uemp);
\draw[->,black, line width=.05cm] (v1'') -- (uemp''');
\draw[->, black, line width=.05cm] (v1) -- (u1);
\draw[->,black, line width=.05cm] (v1') -- (uemp'''');
\draw[->,black, line width=.05cm] (v12'') -- (u2');
\draw[->,black, line width=.05cm] (v12) -- (u12);

\end{tikzpicture}  \subcaption{The bi-partite graph between the virtual resources and the decodable information symbols.}\label{FigCombinationNetworkMatching}\end{subfigure}
\begin{subfigure}[b]{0.45\textwidth}
{\begin{tikzpicture}[scale=2.2]
\tikzstyle{every node}=[draw,shape=circle,minimum size=.001cm,font=\small\itshape]; 

\node (s) at (0,0.5) {$A$};
\node (d) at (0,-2.5) {$B$};

\node[] (v12) at (-1.2,0) {};
\node[] (v12') at (-1.35,0) {};
\node[] (v12'') at (-1.05,0) {};


\node[] (v2) at (-.3,0) {};
\node[] (v2') at (-.45,0) {};


\node[] (v1) at (.3,0) {};
\node[] (v1') at (.45,0) {};
\node[] (v1'') at (.6,0) {};
\node[] (v1''') at (.75,0) {};


\node[] (vemp) at (1.2,0) {};
\node[] (vemp') at (1.35,0) {};

\node[] (u12) at (-.9-.15,-2) {};
\node[] (u12') at (-1.05-.15,-2) {};

\node[] (u2) at (-.3-.15,-2) {};
\node[] (u2') at (-.45-.15,-2) {};
\node[] (u2'') at (-.15-.15,-2) {};

\node[] (u1) at (.3-.15,-2) {};

\node[] (uemp) at (1.2-.15,-2) {};
\node[] (uemp') at (1.35-.15,-2) {};
\node[] (uemp'') at (1.5-.15,-2) {};
\node[] (uemp''') at (1.05-.15,-2) {};
\node[] (uemp'''') at (.9-.15,-2) {};

\draw[->,gray!50,very thin] (v12') -- (u12);
\draw[->,gray!50,very thin] (v12'') -- (u12);

\draw[->,gray!50,very thin] (v12) -- (u2);
\draw[->,gray!50,very thin] (v12') -- (u2);
\draw[->,gray!50,very thin] (v12'') -- (u2);

\draw[->,gray!50,very thin] (v12) -- (u1);
\draw[->,gray!50,very thin] (v12') -- (u1);
\draw[->,gray!50,very thin] (v12'') -- (u1);

\draw[->,gray!50,very thin] (v12) -- (uemp);
\draw[->,gray!50,very thin] (v12') -- (uemp);
\draw[->,gray!50,very thin] (v12'') -- (uemp);

\draw[->,gray!50,very thin] (v12) -- (u12');
\draw[->,black, line width=.05cm] (v12') -- (u12');
\draw[->,gray!50,very thin] (v12'') -- (u12');

\draw[->,gray!50,very thin] (v12) -- (u2');
\draw[->,gray!50,very thin] (v12') -- (u2');

\draw[->,gray!50,very thin] (v12) -- (uemp');
\draw[->,gray!50,very thin] (v12') -- (uemp');
\draw[->,gray!50,very thin] (v12'') -- (uemp');

\draw[->,gray!50,very thin] (v12) -- (uemp'');
\draw[->,gray!50,very thin] (v12') -- (uemp'');
\draw[->,gray!50,very thin] (v12'') -- (uemp'');

\draw[->,gray!50,very thin] (v12) -- (uemp''');
\draw[->,gray!50,very thin] (v12') -- (uemp''');
\draw[->,gray!50,very thin] (v12'') -- (uemp''');

\draw[->,gray!50,very thin] (v12) -- (uemp'''');
\draw[->,gray!50,very thin] (v12') -- (uemp'''');
\draw[->,gray!50,very thin] (v12'') -- (uemp'''');

\draw[->, gray!50,very thin] (v2) -- (u2);
\draw[->,gray!50,very thin] (v2) -- (u2');
\draw[->,gray!50,very thin] (v2') -- (u2');
\draw[->, gray!50,very thin] (v2') -- (u2'');

\draw[->,gray!50,very thin] (v2) -- (uemp);
\draw[->,gray!50,very thin] (v2) -- (uemp');
\draw[->,gray!50,very thin] (v2) -- (uemp'');
\draw[->,gray!50,very thin] (v2) -- (uemp''');
\draw[->,gray!50,very thin] (v2) -- (uemp'''');

\draw[->,gray!50,very thin] (v2') -- (uemp);
\draw[->,gray!50,very thin] (v2') -- (uemp');
\draw[->,gray!50,very thin] (v2') -- (uemp'');
\draw[->,gray!50,very thin] (v2') -- (uemp''');
\draw[->,gray!50,very thin] (v2') -- (uemp'''');

\draw[->,gray!50,very thin] (v1') -- (u1);
\draw[->,gray!50,very thin] (v1'') -- (u1);
\draw[->,gray!50,very thin] (v1''') -- (u1);

\draw[->,gray!50,very thin] (v1) -- (uemp);
\draw[->,gray!50,very thin] (v1) -- (uemp');
\draw[->,gray!50,very thin] (v1) -- (uemp'');
\draw[->,gray!50,very thin] (v1) -- (uemp''');
\draw[->,gray!50,very thin] (v1) -- (uemp'''');

\draw[->,gray!50,very thin] (v1') -- (uemp);
\draw[->,gray!50,very thin] (v1') -- (uemp');
\draw[->,gray!50,very thin] (v1') -- (uemp'');
\draw[->, gray!50,very thin] (v1') -- (uemp''');

\draw[->,gray!50,very thin] (v1'') -- (uemp);
\draw[->,gray!50,very thin] (v1'') -- (uemp');
\draw[->,gray!50,very thin] (v1'') -- (uemp'');
\draw[->, gray!50,very thin] (v1'') -- (uemp'''');

\draw[->,gray!50,very thin] (v1''') -- (uemp');
\draw[->,gray!50,very thin] (v1''') -- (uemp'');
\draw[->,gray!50,very thin] (v1''') -- (uemp''');
\draw[->,gray!50,very thin] (v1''') -- (uemp'''');

\draw[->,black, line width=.05cm] (vemp) -- (uemp');
\draw[->,gray!50,very thin] (vemp) -- (uemp);
\draw[->,gray!50,very thin] (vemp) -- (uemp'');
\draw[->,gray!50,very thin] (vemp) -- (uemp''');
\draw[->,gray!50,very thin] (vemp) -- (uemp'''');
\draw[->,gray!50,very thin] (vemp') -- (uemp);
\draw[->,gray!50,very thin] (vemp') -- (uemp');
\draw[->, black, line width=.05cm] (vemp') -- (uemp'');
\draw[->,gray!50,very thin] (vemp') -- (uemp''');
\draw[->,gray!50,very thin] (vemp') -- (uemp'''');

\draw[->, black, line width=.05cm] (v2) -- (u2'');
\draw[->, black, line width=.05cm] (v2') -- (u2);
\draw[->, black, line width=.05cm] (v1''') -- (uemp);
\draw[->,black, line width=.05cm] (v1'') -- (uemp''');
\draw[->, black, line width=.05cm] (v1) -- (u1);
\draw[->,black, line width=.05cm] (v1') -- (uemp'''');
\draw[->,black, line width=.05cm] (v12'') -- (u2');
\draw[->,black, line width=.05cm] (v12) -- (u12);

\draw[->,black] (s) -- (vemp);
\draw[->,black] (s) -- (vemp');
\draw[->,black] (s) -- (v1);
\draw[->,black] (s) -- (v1');
\draw[->,black] (s) -- (v1'');
\draw[->,black] (s) -- (v1''');
\draw[->,black] (s) -- (v2);
\draw[->,black] (s) -- (v2');
\draw[->,black] (s) -- (v12);
\draw[->,black] (s) -- (v12');
\draw[->,black] (s) -- (v12'');

\draw[->,black] (uemp) -- (d);
\draw[->,black] (uemp') -- (d);
\draw[->,black] (uemp'') -- (d);
\draw[->,black] (uemp''') -- (d);
\draw[->,black] (uemp'''') -- (d);
\draw[->,black] (u1) -- (d);
\draw[->,black] (u2) -- (d);
\draw[->,black] (u2') -- (d);
\draw[->,black] (u2'') -- (d);
\draw[->,black] (u12) -- (d);
\draw[->,black] (u12') -- (d);

\end{tikzpicture}\subcaption{The equivalent unicast information flow problem.\\ \ }\label{FigCombinationNetworkflow}}\end{subfigure}
\caption{The virtual resources are emulatable if there exists a matching of size $\sum_{S\subseteq \mathcal{I}_1} \beta_S$ between the virtual resources and the information symbols that can emulate them. The light edges  connect the virtual resources to the information symbols that can emulate them, and the bold edges show a matching between them.}
\end{figure}
\begin{eqnarray}
\min_{ \substack{\forall \mathcal{S}\subseteq \mathcal{I}_1\\\Gamma\text{ superset saturated}}}\sum_{\mathcal{S}\in\Gamma}\alpha_\mathcal{S}+\sum_{\mathcal{S}\in\Gamma^c}\beta_\mathcal{S}.
\end{eqnarray}
It is easy to see that the flow value, $\sum_{\mathcal{S}\subseteq \mathcal{I}_1} \beta_\mathcal{S}$, is no more than this term provided that inequalities in \eqref{lb-CombNet-bm-match} hold and, therefore, there is an assignment of information symbols to virtual resources so that all virtual resources are emulatable.

\section{Proof of Lemma \ref{lb-CombNet-graph-strict}}
\label{ap-CombNet-graph-strict}
We prove that a non-trivial elementary compression over  $\mathbf{\Gamma}$ strictly reduces the total number of edges in its associated graph. Assume that a non-trivial elementary compression over  $\mathbf{\Gamma}$ yields a compressed multi-set $\mathbf{\Gamma}^\prime$, and the compression is performed using two sets $\Gamma_i$ and $\Gamma_j$. Consider the nodes associated with these two sets and track, throughout the compression, all edges that connect them to the other nodes of the associated graph. Let $\Gamma_k$ ($\neq \Gamma_j,\Gamma_j$) be an arbitrary node of the associated graph. 
We first show that for any such node, the total number of edges connecting it to $\Gamma_i$ and $\Gamma_j$ does not increase after the compression. This is summarized in the following.
\begin{itemize}
\item There is an edge $(\Gamma_i,\Gamma_k)$ and an edge $(\Gamma_j,\Gamma_k)$: In this case, no matter what the resulting graph $G_{\mathbf{\Gamma}^\prime}$ is after the compression, there cannot be more than two edges connecting $\Gamma_k$ to $\Gamma_i$ and $\Gamma_j$.
\item There is an edge $(\Gamma_i,\Gamma_k)$ but there is no edge $(\Gamma_j,\Gamma_k)$: Since there is no edge between $\Gamma_j$ and $\Gamma_k$, one of them is a subset of the other.
\begin{enumerate}
\item If $\Gamma_j\subseteq\Gamma_k$, then $\Gamma_i\cap\Gamma_j\subseteq\Gamma_k$ and there is, therefore, no edge between $\Gamma_k$ and $\Gamma_i\cap\Gamma_j$ after the compression. 
\item If otherwise $\Gamma_j\supseteq\Gamma_k$, then $\Gamma_i\cup\Gamma_j\supseteq\Gamma_k$ and there is, therefore, no edge between $\Gamma_k$ and $\Gamma_i\cup\Gamma_j$ after the compression. 
\end{enumerate}
\item There is no edge $(\Gamma_i,\Gamma_k)$ but an edge $(\Gamma_j,\Gamma_k)$: This case is similar to the previous case.
\item There is neither an edge $(\Gamma_i,\Gamma_k)$ nor an edge $(\Gamma_j,\Gamma_k)$: In this case, we have either of the following possibilities.
\begin{enumerate}
\item If $\Gamma_i\subseteq\Gamma_k$ and $\Gamma_j\subseteq\Gamma_k$, then both $\Gamma_i\cup\Gamma_j$ and $\Gamma_i\cap\Gamma_j$ are subsets of $\Gamma_k$ and there is no edge connecting $\Gamma_k$ to $\Gamma_i\cap\Gamma_j$ or $\Gamma_i\cup\Gamma_j$ over $G_{\mathbf{\Gamma}^\prime}$.
\item If $\Gamma_i\subseteq\Gamma_k$ and $\Gamma_j\supseteq\Gamma_k$, then $\Gamma_i\cup\Gamma_j\supseteq\Gamma_k$ and $\Gamma_i\cap\Gamma_j\subseteq\Gamma_k$ and there is, therefore, no edge connecting $\Gamma_k$ to $\Gamma_i\cap\Gamma_j$ or $\Gamma_i\cup\Gamma_j$ over  $G_{\mathbf{\Gamma}^\prime}$.
\item If $\Gamma_i\supseteq\Gamma_k$ and $\Gamma_j\subseteq\Gamma_k$, then similar to the previous case one concludes that there is no edge connecting $\Gamma_k$ to  $\Gamma_i\cap\Gamma_j$ or $\Gamma_i\cup\Gamma_j$ over  $G_{\mathbf{\Gamma}^\prime}$.
\item If $\Gamma_i\supseteq\Gamma_k$ and $\Gamma_j\supseteq\Gamma_k$, then both $\Gamma_i\cup\Gamma_j$ and $\Gamma_i\cap\Gamma_j$ are supersets of $\Gamma_k$ and there is, therefore, no edge connecting $\Gamma_k$ to their replacements $\Gamma_i\cap\Gamma_j$ or $\Gamma_i\cup\Gamma_j$ over  $G_{\mathbf{\Gamma}^\prime}$.
\end{enumerate}
\end{itemize} 
Besides, edges between $\Gamma_k$ and $\Gamma_{k^\prime}$, where $k,k^\prime\notin\{i,j\}$, remain unaffected.
Since the compression is non-trivial, nodes $\Gamma_i$ and $\Gamma_j$ have been connected over  $G_{\mathbf{\Gamma}}$ and are no longer connected over  $G_{\mathbf{\Gamma}^\prime}$. So, the total number of edges in $G_{\mathbf{\Gamma}}$ is strictly smaller than $G_{\mathbf{\Gamma}}$, and this concludes the proof.

\section{Proof of Lemma \ref{CombNet-count}}
\label{ap-CombNet-count}
Let $\mathbf{\Lambda}$ be a multiset of $2^{\{1,2,3\}}$ with saturated pattern, $\mathbf{\Gamma}$ be a multi-set of $2^{\{1,2,3\}}$ with standard pattern, and $\mathbf{\Gamma}$ and $\mathbf{\Lambda}$ be balanced and such that $\mathbf{\Lambda}\neq\mathbf{\Gamma}$.
We prove that no matter what $\mathbf{\Gamma}$ and $\mathbf{\Lambda}$ are, at least one of the cases of the table in Fig.~\ref{lb-CombNet-opt-table-decompress} must hold for $\mathbf{\Lambda}$, and therefore a non-trivial elementary decompression of $\mathbf{\Lambda}$ is feasible. 

Let us first count, in two different ways (once in $\mathbf{\Lambda}$ and once in $\mathbf{\Gamma}$), the number of times a set $\mathcal{S}\subseteq \mathcal{I}_1$ appears in the sets of multi-sets $\mathbf{\Lambda}$ and $\mathbf{\Gamma}$. First of all, define $n_\mathcal{S}$, $\mathcal{S}\subseteq \mathcal{I}_1$, to be the number of all sets $\Gamma\in \mathbf{\Gamma}$ that contain $\mathcal{S}$ as an element. One observes (from the standard pattern of multi-set  $\mathbf{\Gamma}$) that $n_{S}=\sum_{i\in S}n_{\{i\}}$. Similarly, define $m_\Lambda$, $\Lambda\in\mathbf{\Lambda}$, to be the number of times the set $\Lambda$ appears in the multi-set $\mathbf{\Lambda}$.
For simplicity of notation, we use $m_{\Lambda\cup}$ to denote the number of all sets $\Lambda^\prime$ in $\mathbf{\Lambda}$ which are of the form $\Lambda^\prime=\Lambda\cup\Sigma$ where $\Sigma\subseteq 2^{\mathcal{I}_1}$ is superset saturated, and $\Lambda\not\supseteq\Sigma$. For example, $m_{\{\{1,2\}\star\}\cup}$ counts the number of all sets such as $\{\{1,2\}\star\}$, $\{\{1,3\}\star,\{1,2\}\star\}$, $\{\{2,3\}\star,\{1,2\}\star\}$, $\{\{2,3\}\star,\{1,3\}\star,\{1,2\}\star\}$, and $\{\{3\}\star\{1,2\}\star\}$ in $\mathbf{\Lambda}$, but not $\{\{1\}\star\}$ or $\{\{1\}\star,\{3\}\star\}$.

Since multi-sets $\mathbf{\Gamma}$ and $\mathbf{\Lambda}$ are balanced, the number of sets in $\mathbf{\Gamma}$ which contain a set $\mathcal{S}$ is equal to the number of the sets in $\mathbf{\Lambda}$ which contain it. Thus, counting the number of sets in $\mathbf{\Gamma}$ and $\mathbf{\Lambda}$ which contain $\{i\}$ and $\{i,j\}$ as elements, we obtain the following relationship. In the following, we assume $(i,j,k)$ to be a permutation of $(1,2,3)$.
\begin{eqnarray}
 && n_{\{i\}}=m_{\{\{i\}\star\}\cup}\label{CombNet-ni}\\
&& n_{\{i,j\}}=m_{\{\{i\}\star\}\cup}+m_{\{\{j\}\star\}\cup}-m_{\{\{i\}\star,\{j\}\star\}\cup}+m_{\{\{i,j\}\star\}\cup}\label{CombNet-nij}
\end{eqnarray}
Since $n_{\{i,j\}}=n_{\{i\}}+n_{\{j\}}$, we conclude from \eqref{CombNet-ni} and \eqref{CombNet-nij} the following equation.
\begin{eqnarray}
\label{CombNet-mij}
 m_{\{\{i\}\star,\{j\}\star\}\cup}=m_{\{\{i,j\}\star\}\cup}
\end{eqnarray}
Similarly, counting the number of sets in $\mathbf{\Gamma}$ and $\mathbf{\Lambda}$ which contain $\{1,2,3\}$ as an element, we arrive at the following equation.
\begin{eqnarray}
 n_{\{1,2,3\}}&=&m_{\{\{1\}\star\}\cup}+m_{\{\{2\}\star\}\cup}+m_{\{\{3\}\star\}\cup}+m_{\{\{1,2\}\star\}\cup}+m_{\{\{1,3\}\star\}\cup}+m_{\{\{2,3\}\star\}\cup}+\nonumber\\
&&+m_{\{\{1,2,3\}\star\}\cup}-m_{\{\{1\}\star,\{2\}\star\}\cup}-m_{\{\{1\}\star,\{3\}\star\}\cup}-m_{\{\{2\}\star,\{3\}\star\}\cup}-m_{\{\{1\}\star,\{2,3\}\star\}\cup}+\nonumber\\
&&-m_{\{\{2\}\star,\{1,3\}\star\}\cup}-m_{\{\{3\}\star,\{1,2\}\star\}\cup}-m_{\{\{1,2\}\star,\{1,3\}\star\}\cup}-m_{\{\{1,2\}\star,\{2,3\}\star\}\cup}+\nonumber\\
&&-m_{\{\{1,3\}\star,\{2,3\}\star\}\cup}+m_{\{\{1,2\}\star,\{1,3\}\star,\{2,3\}\star\}\cup}+m_{\{\{1\}\star,\{2\}\star,\{3\}\star\}\cup}\label{CombNet-miju}
\end{eqnarray}
Using $n_{\{1,2,3\}}=n_{\{1\}}+n_{\{2\}}+n_{\{3\}}$, equation \eqref{CombNet-ni} and inserting  \eqref{CombNet-mij} into \eqref{CombNet-miju}, we obtain
\begin{eqnarray}
m_{\{\{1,2,3\}\star\}\cup}+m_{\{\{1\}\star,\{2\}\star,\{3\}\star\}\cup}
&=&m_{\{\{1\}\star,\{2,3\}\star\}\cup}+m_{\{\{2\}\star,\{1,3\}\star\}\cup}+m_{\{\{3\}\star,\{1,2\}\star\}\cup}+\nonumber\\
&&+m_{\{\{1,2\}\star,\{1,3\}\star\}\cup}+m_{\{\{1,2\}\star,\{2,3\}\star\}\cup}+m_{\{\{1,3\}\star,\{2,3\}\star\}\cup}\nonumber\\&&-m_{\{\{1,2\}\star,\{1,3\}\star,\{2,3\}\star\}\cup}.
\end{eqnarray}
Now we write each $m_{\Lambda\cup}$ in terms of $m_\Lambda$'s to derive the equation of our interest.
\begin{eqnarray}
m_{\{\{1,2,3\}\star\}}+m_{\{\{1\}\star,\{2\}\star,\{3\}\star\}}
&=&m_{\{\{1\}\star,\{2,3\}\star\}}+m_{\{\{2\}\star,\{1,3\}\star\}}+m_{\{\{3\}\star,\{1,2\}\star\}}+m_{\{\{1,2\}\star,\{1,3\}\star\}}+\nonumber\\
&&+m_{\{\{1,2\}\star,\{1,3\}\star,\{2,3\}\star\}}+m_{\{\{1,2\}\star,\{2,3\}\star\}}+m_{\{\{1,2\}\star,\{1,3\}\star,\{2,3\}\star\}}+\nonumber\\
&&+m_{\{\{1,3\}\star,\{2,3\}\star\}}+m_{\{\{1,2\}\star,\{1,3\}\star,\{2,3\}\star\}}-m_{\{\{1,2\}\star,\{1,3\}\star,\{2,3\}\star\}}\nonumber\\
&=&m_{\{\{1\}\star,\{2,3\}\star\}}+m_{\{\{2\}\star,\{1,3\}\star\}}+m_{\{\{3\}\star,\{1,2\}\star\}}+m_{\{\{1,2\}\star,\{1,3\}\star\}}+\nonumber\\
\label{CombNet-mijk}&&+m_{\{\{1,2\}\star,\{2,3\}\star\}}+m_{\{\{1,3\}\star,\{2,3\}\star\}}+2m_{\{\{1,2\}\star,\{1,3\}\star,\{2,3\}\star\}}
\end{eqnarray}
Observe from equality \eqref{CombNet-mijk} that if there is a non-zero term, $m_{\Lambda_1}$, on the left hand, there is at least one other non-zero term, $m_{\Lambda_2}$, on the right hand of the equality. No matter what $\Lambda_1$ and $\Lambda_2$ are, see that we are in one of the decompression cases in the table in Fig.~\ref{lb-CombNet-opt-table-decompress}. If both sides of equality \eqref{CombNet-mijk} are zero, then one concludes that  
$m_{\{\{i\}\star,\{j\}\star\}\cup}=m_{\{\{i\}\star,\{j\}\star\}}$ and $m_{\{\{i,j\}\star\}\cup}=m_{\{\{i,j\}\star\}}$ and therefore, by equation \eqref{CombNet-mij}, we have another equation of interest.
\begin{eqnarray}
\label{CombNet-mij2}
m_{\{\{i\}\star,\{j\}\star\}}=m_{\{\{i,j\}\star\}}
\end{eqnarray}
Again, if $m_{\{\{i\}\star,\{j\}\star\}}$ is non-zero so is $m_{\{\{i,j\}\star\}}$, and we have the first case described in the table of Fig.~\ref{lb-CombNet-opt-table-decompress}.

We have proved that a non-trivial elementary decomposition is possible unless all terms in \eqref{CombNet-mijk} and \eqref{CombNet-mij2} are zero, and all terms in \eqref{CombNet-mijk} and \eqref{CombNet-mij2} are zero only if $\mathbf{\Lambda}=\mathbf{\Gamma}$ which contradicts the hypothesis.


\section{Proof of Lemma \ref{lb-CombNetk2-relaxregion}}
\label{ap-CombNetk2-relaxregion}

Let us call the rate-region characterized in Theorem \ref{lb-CombNet-Theoremk2} (when $\mathcal{I}_1=\{1,2,3\}$) 
region $\mathcal{R}_1$ and the rate-region obtained from relaxing inequality 
\eqref{CombNetk2-achpos} to inequality \eqref{lb-CombNet-relax} (when $\mathcal{I}_1=\{1,2,3\}$) region $\mathcal{R}_2$. 
Clearly, $\mathcal{R}_1\subseteq \mathcal{R}_2$. It is, therefore, sufficient to show that 
$\mathcal{R}_2\subseteq\mathcal{R}_1$. Both rate-regions $\mathcal{R}_1$ and $\mathcal{R}_2$ are in terms of feasibility 
problems. In this sense, rate pair $(R_1,R_2)$ belongs to $\mathcal{R}_1$ if and only if feasibility problem 
$1$ (characterized by inequalities \eqref{CombNetk2-achpos}-\eqref{CombNetk2-achR1+R2or}) is feasible. Similarly, 
rate pair $(R_1,R_2)$ belongs to $\mathcal{R}_2$ if and only if feasibility problem $2$ (characterized by inequalities \eqref{lb-CombNet-relax}, \eqref{CombNetk2-achR2or}-\eqref{CombNetk2-achR1+R2or}) is feasible. 

In order to show that $\mathcal{R}_2\subseteq\mathcal{R}_1$, 
we show that if $(R_1,R_2)$ is such that there exists a solution, $\alpha_\mathcal{S}$, $\mathcal{S}\subseteq \mathcal{I}_1$, to feasibility problem $2$, then there also exists a solution $\alpha^\prime_\mathcal{S}$, $\mathcal{S}\subseteq \mathcal{I}_1$, to feasibility problem $1$. Note that problem $1$ varies from problem $2$ only in the non-negativity constraints on parameters  $\alpha^\prime_\mathcal{S}$, $\phi\neq \mathcal{S}\subseteq \mathcal{I}_1$. The goal is to construct parameters $\alpha^\prime_\mathcal{S}$ (from parameters $\alpha_\mathcal{S}$) such that besides satisfying constraints \eqref{CombNetk2-achR2or}-\eqref{CombNetk2-achR1+R2or}, they all become non-negative except for $\alpha_\phi$. 

We prove the existence of a solution $\alpha_\mathcal{S}^\prime$, $\mathcal{S}\subseteq 2^{\mathcal{I}_1}$, by construction. This construction is done recursively.
We start from a solution to feasibility problem $2$, $\alpha_\mathcal{S}$, $\mathcal{S}\subseteq \mathcal{I}_1$, and, at each step, we propose a solution $\alpha^\prime_\mathcal{S}$, $\mathcal{S}\subseteq \mathcal{I}_1$, that is still a solution to feasibility problem $2$ but is ``strictly less negative" (excluding $\alpha_\phi$). So after enough number of steps, we end up with a set of parameters $\alpha^\prime_\mathcal{S}$, $\mathcal{S}\subseteq \mathcal{I}_1$, that satisfies 
\eqref{CombNetk2-achR2or}-\eqref{CombNetk2-achR1+R2or} and also satisfies the non-negativity constraints in \eqref{CombNetk2-achpos}. 

\begin{itemize}
\item [1.]
First, we set $\alpha^\prime_\mathcal{S}=\alpha_\mathcal{S}$ for all $\mathcal{S}\subseteq\{1,2,3\}$. 
All $\alpha^\prime_\mathcal{S}$, $\mathcal{S}\subseteq \{1,2,3\}$, satisfy \eqref{CombNetk2-achR2or}-\eqref{CombNetk2-achR1+R2or}, \eqref{lb-CombNet-relax}. $\alpha_{\{1,2,3\}}$ is ensured to be non-negative by \eqref{lb-CombNet-relax} (for $\Lambda=\{\{1,2,3\}\}$).
\item [2.] We then choose non-negative parameters $\alpha^\prime_{\{i,j\}}$, $i,j\in\{1,2,3\}$. Without loss of generality, take the following three cases, and set $\alpha^\prime_\mathcal{S}$'s as suggested (other cases are dealt with similarly).
 \begin{enumerate}
 \item[(a)] If $\alpha_{\{1,2\}}<0$ and $\alpha_{\{1,3\}}<0$ and $\alpha_{\{2,3\}}<0$:\\
set $\alpha^\prime_{\{1,2,3\}}=\alpha_{\{1,2,3\}}+\alpha_{\{1,2\}}+\alpha_{\{1,3\}}+\alpha_{\{2,3\}}$, $\alpha^\prime_{\{1,2\}}=\alpha^\prime_{\{1,3\}}=\alpha^\prime_{\{2,3\}}=0$,   $\alpha^\prime_{\{i\}}=\alpha_{\{i\}}$, for $i=1,2,3$, and $\alpha_\phi^\prime=\alpha_\phi$.
One can verify that all $\alpha^\prime_\mathcal{S}$, $\mathcal{S}\subseteq \{1,2,3\}$, satisfy \eqref{CombNetk2-achR2or}-\eqref{CombNetk2-achR1+R2or}, \eqref{lb-CombNet-relax}. We outline this verification in the following. \eqref{CombNetk2-achR2or}, \eqref{CombNetk2-achr1or},  \eqref{CombNetk2-achR1+R2or}, and \eqref{lb-CombNet-relax} clearly hold. We show how to verify \eqref{CombNetk2-ach1or} through an example, say, $\Lambda=\{\{1,2\},\{1,2,3\}\}$ and some $p\in\mathcal{I}_2$. From feasibility of problem $2$, we know that $\alpha_\mathcal{S}$ is such that it satisfies  \eqref{CombNetk2-ach1or} for all superset saturated subsets $\Lambda$, and in particular for $\Lambda=\{\{1,2\}\star,\{1,3\}\star,\{2,3\}\star\}$. So, we have
\begin{align}
R_2&\leq \alpha_{\{1,2,3\}}+\alpha_{\{1,2\}}+\alpha_{\{1,3\}}+\alpha_{\{2,3\}}+\sum_{\mathcal{S}\in \{\phi,\{1\},\{2\},\{3\}\}}|\e_{\mathcal{S},p}|\\
&= \alpha^\prime_{\{1,2,3\}}+\sum_{\mathcal{S}\in \{\phi,\{1\},\{2\},\{3\}\}}|\e_{\mathcal{S},p}|\\
&\leq \alpha^\prime_{\{1,2,3\}}+\alpha^\prime_{\{1,2\}}+\sum_{\mathcal{S}\in \{\{1,3\},\{2,3\},\phi,\{1\},\{2\},\{3\}\}}|\e_{\mathcal{S},p}|.
\end{align}
 \item[(b)] If $\alpha_{\{1,2\}}<0$ and $\alpha_{\{1,3\}}<0$:\\
 set $\alpha^\prime_{\{1,2,3\}}=\alpha_{\{1,2,3\}}+\alpha_{\{1,2\}}+\alpha_{\{1,3\}}$, $\alpha^\prime_{\{1,2\}}=\alpha^\prime_{\{1,3\}}=0$, $\alpha_{\{2,3\}}^\prime=\alpha_{\{2,3\}}$, $\alpha^\prime_{\{i\}}=\alpha_{\{i\}}$ for $i=1,2,3$, and $\alpha_\phi^\prime=\alpha_\phi$.
Verify that all $\alpha^\prime_\mathcal{S}$, $\mathcal{S}\subseteq \{1,2,3\}$, satisfy \eqref{CombNetk2-achR2or}-\eqref{CombNetk2-achR1+R2or}, \eqref{lb-CombNet-relax}.
 \item[(c)] If $\alpha_{\{1,2\}}<0$:\\
set $\alpha^\prime_{\{1,2,3\}}=\alpha_{\{1,2,3\}}+\alpha_{\{1,2\}}$, $\alpha^\prime_{\{1,2\}}=0$, $\alpha_{\{1,3\}}^\prime=\alpha_{\{1,3\}}$, $\alpha_{\{2,3\}}^\prime=\alpha_{\{2,3\}}$, $\alpha^\prime_{\{i\}}=\alpha_{\{i\}}$ for $i=1,2,3$, and $\alpha_\phi^\prime=\alpha_\phi$.
Verify that all $\alpha^\prime_\mathcal{S}$, $\mathcal{S}\subseteq \{1,2,3\}$, satisfy \eqref{CombNetk2-achR2or}-\eqref{CombNetk2-achR1+R2or}, \eqref{lb-CombNet-relax}.
 \end{enumerate}
\item [3.] Finally, we choose non-negative parameters $\alpha^\prime_{\{i\}}$, $i\in\{1,2,3\}$.
This is done recursively, following the procedure below, for each $\alpha^\prime_i<0$, until all $\alpha^\prime_{\{i\}}$, $i=1,2,3$, are non-negative. $\delta$ is assumed a small enough positive number.
\begin{itemize}
\item[(a)] If $\alpha^\prime_{\{i,j\}},\alpha^\prime_{\{i,k\}}>0$:\\
set $\alpha^\prime_{\{i\}}=\alpha^\prime_{\{i\}}+\delta$, $\alpha^\prime_{\{i,j\}}=\alpha^\prime_{\{i,j\}}-\delta$, $\alpha^\prime_{\{i,k\}}=\alpha^\prime_{\{i,k\}}-\delta$, $\alpha^\prime_{\{1,2,3\}}=\alpha^\prime_{\{1,2,3\}}+\delta$, and keep the rest of $\alpha^\prime_\mathcal{S}$'s unchanged.
Verify that all $\alpha^\prime_\mathcal{S}$, $\mathcal{S}\subseteq \{1,2,3\}$, satisfy \eqref{CombNetk2-achR2or}-\eqref{CombNetk2-achR1+R2or}, \eqref{lb-CombNet-relax}. We show \eqref{CombNetk2-ach1or} for one example: $\Lambda=\{\{1,2\}\star,\{1,3\}\star\}$ and some $p\in\mathcal{I}_2$. Since the $\alpha_\mathcal{S}$'s we start with satisfy \eqref{CombNetk2-ach1or} for any superset saturated $\Lambda$, and in particular for $\Lambda=\{\{1\}\star\}$, we have
\begin{align}
R_2&\leq \alpha_{\{1\}}+\alpha_{\{1,2\}}+\alpha_{\{1,3\}}+\alpha_{\{1,2,3\}}+\sum_{\mathcal{S}\in \{\phi,\{2\},\{3\},\{2,3\}\}}|\e_{\mathcal{S},p}|\\
&\leq \alpha^\prime_{\{1\}}+\alpha^\prime_{\{1,2\}}+\alpha^\prime_{\{1,3\}}+\alpha^\prime_{\{1,2,3\}}+\sum_{\mathcal{S}\in \{\phi,\{2\},\{3\},\{2,3\}\}}|\e_{\mathcal{S},p}|\\
&\leq \alpha^\prime_{\{1,2\}}+\alpha^\prime_{\{1,3\}}+\alpha^\prime_{\{1,2,3\}}+\sum_{\mathcal{S}\in \{\phi,\{2\},\{3\},\{2,3\}\}}|\e_{\mathcal{S},p}|,
\end{align}
where the last inequality is because $\alpha^\prime_{\{1\}}$ is either still negative, or has just become zero after adding the small $\delta$.
\item[(b)] If $\alpha^\prime_{\{i,j\}}=0$, $\alpha^\prime_{\{i,k\}}>0$:\\
set $\alpha^\prime_{\{i\}}=\alpha^\prime_{\{i\}}+\delta$, $\alpha^\prime_{\{i,k\}}=\alpha^\prime_{\{i,k\}}-\delta$, and keep the rest of $\alpha^\prime_\mathcal{S}$'s unchanged.
Verify that all $\alpha^\prime_\mathcal{S}$, $\mathcal{S}\subseteq \{1,2,3\}$, satisfy \eqref{CombNetk2-achR2or}-\eqref{CombNetk2-achR1+R2or}, \eqref{lb-CombNet-relax}.
\item[(c)] If $\alpha^\prime_{\{i,j\}}>0$, $\alpha^\prime_{\{i,k\}}=0$:\\
set $\alpha^\prime_{\{i\}}=\alpha^\prime_{\{i\}}+\delta$, $\alpha^\prime_{\{i,j\}}=\alpha^\prime_{\{i,j\}}-\delta$, and keep the rest of $\alpha^\prime_\mathcal{S}$'s unchanged.
Verify that all $\alpha^\prime_\mathcal{S}$, $\mathcal{S}\subseteq \{1,2,3\}$, satisfy \eqref{CombNetk2-achR2or}-\eqref{CombNetk2-achR1+R2or}, \eqref{lb-CombNet-relax}.
\item[(d)] If $\alpha^\prime_{\{i,j\}}=0$, $\alpha^\prime_{\{i,k\}}=0$:\\
set $\alpha^\prime_{\{i\}}=\alpha^\prime_{\{i\}}+\delta$, $\alpha^\prime_{\{1,2,3\}}=\alpha^\prime_{\{1,2,3\}}-\delta$, and keep the rest of $\alpha^\prime_\mathcal{S}$'s unchanged.
Verify that all $\alpha^\prime_\mathcal{S}$, $\mathcal{S}\subseteq \{1,2,3\}$, satisfy \eqref{CombNetk2-achR2or}-\eqref{CombNetk2-achR1+R2or}, \eqref{lb-CombNet-relax}.
\end{itemize}
\end{itemize}
Note that in step $1$, we obtain a solution to feasibility problem $2$ with $\alpha_{\{1,2,3\}}\geq 0$. After step $2$, the solution is such that  $\alpha_{\{1,2\}}, \alpha_{\{1,3\}}, \alpha_{\{2,3\}}, \alpha_{\{1,2,3\}}$ are all non-negative. In step $3$, after each iteration,  $\alpha_{\{1,2\}},\alpha_{\{1,3\}},\alpha_{\{2,3\}},\alpha_{\{1,2,3\}}$ all remain non-negative and at the same time one negative $\alpha_{\{i\}}$ is increased. So after step $3$, all parameters $\alpha_\mathcal{S}$, $\phi\neq S\subseteq \{1,2,3\},$ become non-negative. This is the solution to feasibility problem $1$ that we were looking for.




\ifCLASSOPTIONcaptionsoff
  \newpage
\fi



\bibliographystyle{IEEEtran}
\bibliography{bibliography}

\end{document}